\pgfplotsset{compat=1.15}
\newenvironment{proofsketch}{%
  \proof}{\endproof}
\begin{document}
\title{
Time-optimal Asynchronous Minimal Vertex Covering by Myopic Robots on Graph
}

\author{Saswata Jana\orcidlink{0000-0003-3238-8233} \thanks{Supported by  Prime Minister's Research Fellowship (PMRF) scheme of the Govt. of India (PMRF-ID: 1902165)} \and Subhajit Pramanick\orcidlink{0000-0002-4799-7720} \and Adri Bhattacharya\orcidlink{0000-0003-1517-8779} \thanks{Supported by CSIR, Govt. of India, Grant Number: 09/731(0178)/2020-EMR-I} \and Partha Sarathi Mandal\orcidlink{0000-0002-8632-5767}}
\authorrunning{Jana et al.}
\titlerunning{Time-optimal Asynchronous Minimal Vertex Covering by Myopic Robots}
\institute{Indian Institute of Technology Guwahati, Assam, 781039, India}
\email{}

\maketitle

\begin{abstract}
In a connected graph with an autonomous robot swarm with limited visibility, it is natural to ask whether the robots can be deployed to certain vertices satisfying a given property using only local knowledge.
This paper affirmatively answers the question with a set of \emph{myopic} (finite visibility range) luminous robots with the aim of \emph{filling a minimal vertex cover} (MVC) of a given graph $G = (V, E)$.
The graph has special vertices, called \emph{doors}, through which robots enter sequentially. Starting from the doors, the goal of the robots is to settle on a set of vertices that forms a minimal vertex cover of $G$ under the asynchronous ($\mathcal{ASYNC}$) scheduler. We are also interested in achieving the \emph{minimum vertex cover} (MinVC, which is NP-hard \cite{Karp1972} for general graphs) for a specific graph class using the myopic robots. 
We establish lower bounds on the visibility range for the robots and on the time complexity (which is $\Omega(|E|)$).
We present two algorithms for trees: one for single door, which is both time and memory-optimal, and the other for multiple doors, which is memory-optimal and achieves time-optimality when the number of doors is a constant.
Interestingly, our technique achieves MinVC on trees with a single door. 
We then move to the general graph, where we present two algorithms, one for the single door and the other for the multiple doors with an extra memory of $O(\log \Delta)$ for the robots, where $\Delta$ is the maximum degree of $G$.
All our algorithms run in $O(|E|)$ epochs.  
\end{abstract}

\keywords{Graph algorithm, Filling problem, Vertex cover,
Mobile robots} 

\section{Introduction}

\noindent \textbf{Background and Motivation.}
Efficient sensor deployment is crucial in wireless sensor systems, especially in areas modeled as graphs, where coverage or placement must satisfy specific optimization goals.
While static sensors rely on manual or random placement, which may not meet the coverage goals, mobile sensors, in contrast, can self-deploy without any central coordination. However, such autonomous behaviour requires complex, localized algorithm design, a central challenge in this line of research \cite{1200625,10.1023/A:1019625207705,Hsiang2004,1307146}.  
In this area, low-resource swarm robots \cite{10.1007/3-540-46632-0_10} have been drawing the attention of researchers for many decades now. Under the classical \emph{Look-Compute-Move} framework (which is formally defined later), many seminal problems have been studied in graphs, such as filling graph vertices \cite{10.1007/978-3-540-92862-1_11,10.1007/978-3-642-45346-5_17,hideg2017uniform,hideg2020asynchronous}, maximal independent set and dominating set \cite{barriere2011uniform,10.1007/978-3-031-48882-5_10,ELOR2011783,10.1145/3631461.3631543},  etc. 
In this context, a fundamental question emerges: \emph{Can we achieve a specific graph property using autonomous mobile robots with local information?}
In this paper, we affirmatively answer this question by exploring the vertex cover problem. 

The \textit{vertex cover} of a graph is a set of vertices such that every edge is incident to (or covered by) at least one vertex from the set. 
Since \emph{minimum vertex cover} (MinVC) for general graph is NP-hard, we shift our attention to achieving a \emph{minimal vertex cover} (MVC) of an arbitrary connected graph using a set of \emph{autonomous} (no central control), \emph{myopic} (having limited visibility range), \emph{anonymous} (without identification) and \emph{homogeneous} (the algorithm is same for all) mobile robots working under $\mathcal{ASYNC}$ scheduler.
 MVC is a computational problem with many theoretical and real-life applications, such as an area with multiple lanes, modelled as a graph needs to be minimally covered by guards.

In this paper, we study a constrained variant of the self-deployment problem, known as \emph{filling MVC problem}, where robots, starting from specific \emph{door} vertices, need to collaboratively fill a subset of vertices that forms an MVC of the graph.
The \emph{filling problem} using mobile robots, in which the robots are injected one at a time (using doors) into an unknown environment, has been introduced by Hsiang et al. \cite{Hsiang2004}. 
A seminal result by Barrameda et al.
\cite{10.1007/978-3-540-92862-1_11} proved that the filling problem is impossible to solve deterministically using oblivious robots, even with unlimited visibility. 
To circumvent such impossibility, they considered persistent memory in the form of colors and proposed collision-free algorithms (two collocated robots make a \emph{collision} that could damage the robots or the equipment) for robots with constant visibility range. 
We consider the bare-bone model in solving the filling MVC problem in the optimal time, considering $\mathcal{ASYNC}$ setting (the scheduler with the least assumptions), limited visibility with visibility range not depending on any graph parameter, no knowledge of the graph, luminous robots and no collision. 
Each robot has a persistent memory in the form of an externally visible persistent light that can flash a color from a predetermined color set.  
We judge algorithms on four metrics: (i) memory requirement, (ii) number of colors, (iii) visibility range, and (iv) time required to achieve the goal.
Moreover, our interest lies in identifying some known graph classes where we can achieve the MinVC in optimal (or near-optimal) time.

\noindent \textbf{Related Works.}
The filling problem is introduced by Hsiang et al. \cite{Hsiang2004} in a setting where robots enter a region (modeled as a connected subset of an integer grid) through doors, aiming to occupy every cell. 
Barrameda et al.
\cite{10.1007/978-3-540-92862-1_11} showed that the problem is impossible with oblivious robots and thus considered persistent memory (in the form of colors) to solve the filling problem on a connected orthogonal space, partitioned into square cells (which they later model using graphs) with special door cells.
They present two algorithms under $\mathcal{ASYNC}$: one for the single door, where sensors have $1$ hop visibility and $2$ bits of persistent memory, and the other for multiple doors, where sensors have $2$ hops visibility and $O(1)$ persistent memory.
Barrameda et al.  \cite{10.1007/978-3-642-45346-5_17} later extended the problem on orthogonal regions with holes using robots having $6$ hops visibility. 
Under $\mathcal{ASYNC}$, Kamei and Tixeuil \cite{kamei2020asynchronous} solved the maximum independent set (Max\_IS) filling problem on a finite grid with a door at a corner using $3$ colors and $2$ hop visibility with port-labelling and however it needs $7$ colors and $3$ hop visibility without it.  
Hideg and Lukovszki  \cite{hideg2017uniform} studied the filling problem in orthogonal regions with doors. They propose two algorithms in $O(n)$ time under synchronous setting, one for single door and the other for multiple doors, where $n$ is the number of cells.
Later in \cite{hideg2020asynchronous}, they presented the filling problem for an arbitrary connected graph under the asynchronous setting. 
The deployment of mobile robots on graph vertices with specific properties later gained popularity.
Pramanick et al. \cite{PRAMANICK2023114187} presented two algorithms for the maximal independent set (MIS) filling problem on arbitrary graphs: one for single door under $\mathcal{ASYNC}$ using $3$ hop visibility and $O(\log \Delta)$ memory and another for multiple doors under semi-synchronous setting using $5$ hop visibility and $O(\log (\Delta +k))$ memory, where $\Delta$ is the maximum degree and $k$ is the number of doors.
Pattanayak et al. \cite{10.1145/3631461.3631543} proposed algorithms for finding an MIS using mobile agents communicating face-to-face with each other. 
Chand et al. \cite{10.1007/978-3-031-48882-5_10} investigated the problem of placing mobile agents on a dominating set of the given graph.

\begin{table}
\begin{center}
\resizebox{\columnwidth}{!}{%
{\renewcommand{\arraystretch}{2}
\begin{tabular}{|c|c|c|c|c|c|c|c|}
		\hline
           \textbf{No of} & \textbf{Graph} & \textbf{Activation} & \textbf{Visibility} & \textbf{No. of} & \textbf{Memory} & \textbf{Lower} & \textbf{Time}\\
           \textbf{Doors} & \textbf{Property} & \textbf{Scheduler} & \textbf{Range} & \textbf{Colors} & \textbf{(in bits)} & \textbf{Bound} & \textbf{(in epoch)}\\
		      \hline 
            \multirow{5}{2em}{\centering 1} & Graph Filling \cite{10.1007/978-3-540-92862-1_11} & $\mathcal{ASYNC}$ & 1 & - & $O(1)$ & - & - \\
            \cline{2-8}
            & Graph Filling \cite{hideg2020asynchronous} & $\mathcal{ASYNC}$ & 1 & $O(\Delta)$ & $O(\log \Delta)$ & - & $O(|V|^2)$\\ 
            \cline{2-8}
            & MAX\_IS (Grid) \cite{kamei2020asynchronous} & $\mathcal{ASYNC}$ & 2 & $O(1)$ & - & - & $O(|V|(L+l))$\\ 
            \cline{2-8}
            & Filling MIS \cite{PRAMANICK2023114187} & $\mathcal{ASYNC}$ & 3 & $O(\Delta)$ & $O(\log \Delta)$ & $ \Omega (|V|)$ & $O(|V|^2)$ \\ 
            \cline{2-8}
            & Filling MVC (Th. \ref{thm:asyncGenSS}) & $\mathcal{ASYNC}$ & 3 & $O(1)$ & $O(\log \Delta)$ & $\Omega(|E|)$ & $O(|E|)$\\  
            \cline{2-4}
           \hline
           \multirow{2}{4em}{\centering $H(>1)$} & Graph Filling \cite{10.1007/978-3-540-92862-1_11} & $\mathcal{ASYNC}$ & 2 & $O(H)$ & $O(1)$ & - & - \\ 
            \cline{2-8}
            & Graph Filling \cite{hideg2020asynchronous} & $\mathcal{ASYNC}$ & 2 & $O(\Delta + H)$ & $O(\log \Delta)$ & - & -\\ 
            \cline{2-8}
           & Filling MIS \cite{PRAMANICK2023114187} & $\mathcal{SSYNC}$ & 5 & $O(\Delta + H)$ & $O(\log (\Delta + H))$ & - & $O(|V|^2)$ \\ 
           \cline{2-8}
            & Filling MVC (Th. \ref{theorem:Graph_MultiDoor}) & $\mathcal{ASYNC}$ & 4 & $O(H)$ & $O(\log \Delta)$ & $\Omega(|E|- H)$ & $O(|E|)$\\ 
            \cline{2-4}
           \hline
\end{tabular} }
}
\end{center}
\caption{Comparison with the Literature. $\Delta$ is the maximum degree of the graph, and $m$ is the size of the maximal independent set (MIS). MAX\_IS stands for maximum independent set. $L$ and $l$ are the dimensions of the grid.  $\mathcal{SSYNC}$ stands for semi-synchronous scheduler.}
\label{tab:contribution}
\end{table}

\noindent \textbf{Contributions:}
Our contributions in this paper are the following:

\begin{itemize}[left=0pt]
    \item We show impossibility to solve the filling MVC problem with 1 hop visibility (Theorem \ref{thm:lower-bound-visibility-range}) and a lower bound of $\Omega(|E|)$ on time complexity (Theorem \ref{thm:time_lowerBound}).

    \item Algorithm: \textsc{Tree\_SingleDoor}, requires $2$ hops of visibility and $4$ colors. Interestingly, this algorithm achieves MinVC on trees in optimal time $O(|E|)$ and optimal $O(1)$ memory (Theorem \ref{theorem:tree-singledoor}).

    \item Algorithm: \textsc{Tree\_MultiDoor} ($H$ doors) requires $4$ hops of visibility and $O(H)$ colors (which is also optimal) and runs in $O(|E|)$ epochs (Theorem \ref{thm:coll_tree_MS}).

    \item Later, we present two algorithms for a general graph: a time-optimal algorithm: \textsc{Graph\_SingleDoor} with $3$ hops of visibility and $4$ colors, and another \textsc{Graph\_MultiDoor} algorithm (for $H$ doors) with $4$ hops of visibility and $O(H)$ colors. 
    Both algorithms on general graphs use additional $O(\log \Delta)$ memory and run in $O(|E|)$ epochs (Theorem \ref{thm:asyncGenSS} and \ref{theorem:Graph_MultiDoor}) without the knowledge of any graph parameter.
\end{itemize}


\section{Model and Preliminaries}
\label{sec:model}

\noindent \textbf{Graph Model:} We consider an anonymous graph $G(V, E)$ with $V$ as the set of vertices and $E$ as the set of edges, where the vertices do not possess any unique ID. 
However, the ports corresponding to the incident edges of a vertex $v$ are assigned a label from the set $\{ 1,2,\cdots, \delta(v)\}$, where $\delta (v)$ is the degree of $v$.
The edge between vertices $u$ and $v$ is denoted by $e(u,v)$ and $\Delta$ denotes the maximum degree of the graph.
A vertex has neither memory nor computational ability. 
An edge between two vertices receives independent port numbering at either end.
We use $p_v$ to represent a port incident to the vertex $v$.
When we say ``port $p$ of the edge $e(u, v)$'', we refer to the port of the edge $e(u,v)$ incident to $u$.
A vertex is called occupied if there is a robot positioned on it; otherwise unoccupied.

\noindent\textbf{Door:} 
The graph has some special vertices, referred to as \emph{doors}, where a group of robots are initially collocated and no two adjacent vertices are doors.
The robots on a door are serialized in a queue by a pre-processing algorithm (probably by any popular randomized leader election protocol, which is beyond our concern). 
When we say that the robot situated on a door executes the algorithm, we mean that the robot at the front of the queue executes the algorithm. When this robot leaves the door, the next robot in the queue comes to the front.
The color of the robot on a door indicates the color of the robot at the front.
A robot can identify whether it is at the door or not. 
We use $H$ to denote the number of doors.

\noindent\textbf{Robot Model:} 
The robots are autonomous, myopic, anonymous and homogeneous.
A robot at a vertex $v$ with visibility range $\phi$ hops can see all the vertices and their associated port numbers along each path of length $\phi$.
Additionally, the robots are \emph{luminous}, each having a light that determines a color from a predefined color set. 
A robot can see the color of all the robots in its visibility range, including itself. 
$r.color$ represents the color of the light for the robot $r$. At any given time, at most one robot can be located at a vertex, except for the door.

\noindent\textbf{Activation Cycle:}
Each robot operates in the classical \emph{Look-Compute-Move} (LCM) cycles.
\textit{Look:} The robot takes a snapshot of all visible vertices and the colors of the robots occupying them.
\textit{Compute:} It runs the algorithm using the snapshot to select a neighbouring vertex as its target or remains in place. It updates its color at the end of this phase if needed. 
\textit{Move:} It moves to the target vertex, if any.
A movement is not instantaneous.  

\noindent\textbf{Activation Scheduler:}
We consider the activation scheduler as asynchronous ($\mathcal{ASYNC}$), where the adversary is the strongest among all other schedulers.
Under $\mathcal{ASYNC}$, any robot can be activated at any time and may be idle for an arbitrarily large number of LCM cycles. However, the robot must be activated infinitely often (fairness condition). The time is measured in terms of \emph{epochs}, the smallest time interval in which each robot gets activated and executes a full LCM cycle at least once.

\noindent \textbf{Problem (\textit{Filling MVC}):} We are given an anonymous port-labelled connected graph $G(V, E)$ with $H$ door vertices, where luminous and myopic mobile robots operate. 
The objective is to settle the robots without collision on a set of vertices $V_1 \subset V$, such that $V_1$ forms a minimal vertex cover of $G$, all settled robots adopt a designated color to indicate termination so that no robot moves thereafter.

\section{Impossibility, Lower Bound and Technical Overview}
\label{sec:technical-difficulty}

We first highlight the lower bounds on visibility range and the time complexity.

\begin{theorem}
    \label{thm:lower-bound-visibility-range}
    Under $\mathcal{ASYNC}$, the filling MVC problem cannot be solved without collisions using 1-hop visibility, even with robots having unbounded memory.
\end{theorem}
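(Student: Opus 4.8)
The plan is to argue by contradiction using a single worst-case instance together with an indistinguishability argument localized at the door. Assume a deterministic algorithm $\mathcal{A}$ that, with $1$-hop visibility and arbitrary (even unbounded) memory, solves filling MVC without collisions under $\mathcal{ASYNC}$. I would instantiate $\mathcal{A}$ on the path $P_5 = v_1 v_2 v_3 v_4 v_5$ with a single door at the centre $v_3$, whose only neighbours are $v_2$ and $v_4$. The two pendant edges $e(v_1,v_2)$ and $e(v_4,v_5)$ can be covered only by a robot settled on the $v_2$-side or the $v_4$-side, respectively, and a robot resting on the door $v_3$ covers neither; hence any correct execution must dispatch at least one robot through $v_2$ and at least one through $v_4$, so at least two robots leave the door.

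The heart of the argument is that the door cannot break the symmetry between its two sides with only $1$-hop visibility. I would first record two model facts: queued robots are passive (only the front robot runs the \emph{LCM} cycle), so every robot begins executing at the front with its \emph{initial} memory; and a robot in transit is positioned on no vertex, so during a move the target vertex reads as unoccupied. Consequently, when the first dispatched robot $r_1$ first acts at the front it sees the view $V_0$ in which both $v_2$ and $v_4$ are unoccupied, and, because $r_1$ at $v_2$ and $r_2$ at $v_4$ would sit at distance $2$ and thus be mutually invisible, there is no local signal distinguishing the two sides. Since $\mathcal{A}$ is correct it cannot leave the front robot idle forever (otherwise no robot ever leaves the door and the pendant edges stay uncovered), so after finitely many of its own $LCM$ cycles $r_1$ commits to a move onto one neighbour, say $u = v_2$. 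The point that neutralises unbounded memory is that the second robot $r_2$ reaches the front with exactly the same \emph{fresh} memory as $r_1$ had, not a continuation of it, so it is a faithful replica of $r_1$.

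I would then let the $\mathcal{ASYNC}$ adversary realise the collision. Because movement is not instantaneous and the scheduler fixes its duration, dilate the move of $r_1$ onto $u$ so that its transit lasts long enough to contain a full run of $r_2$. Throughout this transit $u$ is unoccupied, so $r_2$, now at the front with view $V_0$ and identical state, replays the computation of $r_1$ step for step and likewise moves onto $u$; both robots come to rest on $u$, a collision, contradicting collision-freeness. (The same schedule also leaves the $v_4$-side unserved, so $\mathcal{A}$ cannot even be correct; either way $\mathcal{A}$ fails.)

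I expect the main obstacle to be pinning down the movement semantics so that the two enabling facts are watertight: that a robot in transit occupies no vertex, so $r_2$ reads $u$ as unoccupied (this follows from the definition of an occupied vertex together with ``movement is not instantaneous''), and that the next queued robot advances to the front the instant the current one departs, so $r_2$ can act during the transit of $r_1$. The second delicate point is the faithful-replay claim against unbounded memory: I must verify that the observed history of $r_2$ up to its move coincides with that of $r_1$, which holds because non-front robots execute no cycle and $r_2$ observes $V_0$ for the whole dilated window, so the adversary can align their decision points exactly. Finally I would check that the instance genuinely forces two dispatches, ruling out escapes such as settling a robot on the door or targeting a different minimal cover, each of which still leaves a pendant edge coverable only from a side reached through $v_2$ or $v_4$.
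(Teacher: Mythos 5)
Your overall strategy---a door-replay/indistinguishability argument on $P_5$ with the door at the centre---is genuinely different from the paper's proof (which places the door at an endpoint and argues by cases: a backtracking robot is caught by a synchronously activated door robot, giving a collision, while a never-backtracking robot is driven into a configuration that violates minimality). Unfortunately, your argument has a gap precisely at the point you flagged as its main obstacle: the claim that during $r_1$'s transit the new front robot $r_2$ observes the initial view $V_0$. In this paper's model a robot in transit is \emph{not} invisible: it occupies no vertex, but it is visible on the edge it traverses. This is explicit in the algorithms (``$r$ does nothing when it sees a robot on the edges'' in \textsc{Tree\_SingleDoor}, and Case IIa in the proof of Theorem \ref{thm:coll_tree_MS} reasons about a robot being seen ``on edge $e(x,z)$''), and with $1$-hop visibility the edge $e(v_3,v_2)$ is inside $r_2$'s range. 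Hence $r_2$'s snapshot differs from $V_0$ from the moment $r_1$ departs, and the faithful-replay claim fails: a correct algorithm can simply instruct the front door robot to wait whenever it sees a robot on an incident edge or at an adjacent vertex, and against such an algorithm your adversary schedule produces neither a collision nor any other contradiction.

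Once that escape exists, your proof needs the case analysis it currently avoids: after $r_1$ arrives at $v_2$ it is visible to $r_2$, the two robots' observation histories diverge, and no symmetry or replay argument applies; you would have to track what the algorithm can do next (settle at $v_2$, continue to $v_1$, wait, etc.) and derive a collision or a minimality violation in every branch, which is essentially what the paper does. Note that the paper's collision step is engineered to be robust to exactly this semantic issue: it activates $r_1$ (at $v_3$) and $r_2$ (at the door $v_1$) \emph{simultaneously}, while both are stationed at vertices at mutual distance $2$, so neither can see the other regardless of how in-transit robots are treated. Your schedule, which requires one robot to Look \emph{during} another robot's move, is the only place where the transit semantics matters---and under this paper's model it resolves against you.
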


\begin{proofsketch}
Consider a simple path of $5$ vertices, labeled $v_1$ through $v_5$, with a door at $v_1$. 
Vertex labels are provided for exposition only; the vertices themselves are anonymous.
To cover the edge $e(v_4, v_5)$, a robot $r_1$ must eventually reach $v_3$ (by taking the edge $e(v_1,v_2)$ and then $e(v_2,v_3)$).
If $r_1$ at $v_3$, moves back to $v_2$, the adversary can activate another robot $r_2$ in sync at the door, which moves to $v_2$, causing a collision.
To avoid revisiting $v_2$, if $r_1$ uses memory to store port numbers in the direction of $v_2$, it moves to $v_4$ to cover $e(v_4, v_5)$ from $v_3$.
After the movement, the adversary keeps $r_1$ idle at $v_4$ and activates $r_2$ at $v_2$.
In this case, $r_2$ moves to $v_3$ from $v_2$ as it cannot move to the door at $v_1$ (since there is a robot on it).
Next, with $r_1$ at $v_4$ and $r_2$ at $v_3$, the adversary again activates a third robot $r_3$ at the door.
$r_3$ moves to $v_2$ as it cannot see any other robot with its $1$ hop visibility from $v_1$.
If $r_1$ settles at $v_4$, the minimality is violated. 
If instead $r_1$ moves and settles at $v_5$, the adversary can similarly direct $r_2$ to $v_4$ and $r_3$ to $v_3$, again violating minimality under $\mathcal{ASYNC}$ and $1$ hop visibility. \qed
\end{proofsketch}

\begin{theorem}
    \label{thm:time_lowerBound}
    The time needed for an algorithm to solve the filling MVC problem on a graph $G$ with a single door by myopic mobile robots is $\Omega(|E|)$. 
\end{theorem}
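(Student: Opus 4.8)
The plan is to prove an existential lower bound: I will exhibit one sparse graph family on which every algorithm, under some legal $\mathcal{ASYNC}$ schedule, must spend $\Omega(|E|)$ epochs. Take $G$ to be a simple path $v_1 - v_2 - \cdots - v_n$ with the unique door placed at the endpoint $v_1$, so that $|E| = n-1$ and all robots are initially collocated at the door, i.e. at distance $0$ from $v_1$. The whole argument is purely about travel distance versus progress per epoch, so it will hold for any visibility range and any amount of persistent memory, making the bound robust.

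The first step is an algorithm-independent covering observation. Since the output set $V_1$ must be a vertex cover of $G$, the terminal edge $e(v_{n-1}, v_n)$ has to be covered, so in the final configuration some robot is settled at $v_{n-1}$ or at $v_n$. Both endpoints lie at distance at least $n-2$ from the door $v_1$, and on a path the route from the door to either endpoint is unique; hence the robot that ends up covering this edge must have physically traversed at least $n-2$ edges. This is the crux of why no algorithm can be faster: a robot covers an edge only by residing at one of its endpoints, so visibility cannot substitute for travel, and since every vertex cover (minimal or not) must cover $e(v_{n-1},v_n)$, no choice of $V_1$ escapes placing a robot at distance $\ge n-2$.

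The second step converts travelled distance into epochs. Under $\mathcal{ASYNC}$ I let the adversary run the lockstep schedule in which, during each epoch, every present robot performs exactly one Look-Compute-Move cycle. This is a legal epoch by definition, and in a single cycle a robot crosses at most one edge. Consequently, after $t$ epochs any robot has advanced at most $t$ edges away from the door, so the robot eventually covering $e(v_{n-1}, v_n)$ cannot arrive before epoch $n-2$. Therefore any correct algorithm needs at least $n-2 = \Omega(|E|)$ epochs on $G$, which establishes the claim.

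The step I expect to require the most care is justifying that the adversary may cap progress at one edge per robot per epoch while remaining consistent with the fairness and atomicity assumptions of $\mathcal{ASYNC}$. I would argue this directly from the definition of an epoch as the smallest interval in which each robot completes at least one full Look-Compute-Move cycle: activating every robot exactly once realises one such epoch, and a single cycle yields at most one unit move, so the bound on per-epoch displacement follows without violating fairness. A secondary point worth stating explicitly is that the argument only needs the \emph{existence} of the robot occupying the far endpoint, independent of how many robots the algorithm injects or of the particular MVC it targets, so the $\Omega(|E|)$ bound is insensitive to the algorithm's internal choices.
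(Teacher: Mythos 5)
Your proposal is correct and follows essentially the same approach as the paper: a path with the door at one endpoint, where covering the far edge forces some robot to traverse $\Omega(|E|)$ edges, at most one per epoch. The paper's proof is a two-line version of the same distance-versus-progress argument (phrased in terms of the first robot leaving the door); your write-up merely makes explicit the adversarial lockstep schedule and the fact that whichever robot covers the far edge must travel, which are reasonable details to spell out.
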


\begin{proof}
    Let $G$ be a path with a door at one of the endpoints.
    The first robot that leaves the door takes at least $|E|- 1$ epochs to reach the edge incident to the other endpoint and cover it.
    \qed
\end{proof}

\begin{remark}
    \label{remark:impossbility-on-color-multidoor}
    An argument similar to that in \cite[Theorem 3]{10.1007/978-3-540-92862-1_11} shows that $\Omega(H)$ colors are necessary for the robots when there are $H > 1$ doors.
\end{remark}

\noindent\textbf{Challenges and Our Technique.} 
We begin by outlining the structure of our algorithms, which form the foundation for all subsequent sections. 
One can draw an analogy to depth-first search (DFS) graph traversal to have an intuitive understanding of our approach.
The robots exploring the graph maintain a virtual chain-like structure (a similar idea is also used in \cite{PRAMANICK2023114187}, but failed to achieve time optimality because of the highly sequential movement of the robots), primarily helping them avoid collision. 
Our approach achieves both efficiency and time-optimality by enabling every alternate robot, starting from the head of the chain to the door, to move within a constant number of epochs, thereby introducing a parallelism in movement. 
For any two successive robots in the chain, referred to as the predecessor and the successor, we ensure that under $\mathcal{ASYNC}$, each of them remains (at least) $2$ hops from its successor and (at most) $3$ hops from its predecessor within constant epochs.
This guarantees that such robots can move toward their predecessor in nearly constant epochs. 

A robot primarily follows its predecessor until the predecessor can no longer proceed. Throughout the process, we ensure that the chain of robots never crosses itself and that chains originating from different doors remain disjoint. Additionally, to prevent collisions, a robot $r$ that moves from a vertex $v$ to a neighbouring vertex $u$ is prohibited from returning to $v$.
In the multiple-door case, to circumvent the impossibility shown in \cite{10.1007/978-3-540-92862-1_11} (Theorem 3), we use distinct colors for the doors to establish a hierarchy among the robots entering through different doors, as two robots from two different doors having the same target vertex need to break the symmetry between them.
In the case of a general graph, we use additional memory to maintain the predecessor-successor relationship between two consecutive robots and store their locations, as there can be multiple paths between two vertices.
To give a structural overview of our algorithm, a brief execution example of the algorithm for the tree single-door case can be found in Section \ref{subsec-app:execution-example}.

\section{Algorithm for Filling MVC on Trees}
\label{sec:tree}

Prior work on related problems, such as maximal independent set \cite{PRAMANICK2023114187} and vertex filling \cite{hideg2020asynchronous}, requires $O(\log \Delta)$ persistent memory. Similarly, our solution in Section \ref{sec:general_graph} for MVC on general graph also uses $O(\log \Delta)$ memory.
This raises a natural question: \emph{Can we eliminate this memory requirement for specific graph classes, while preserving time optimality, at least for single-door}? Additionally, \emph{how close can we get to the MinVC on such graphs?}
We answer both questions affirmatively for trees. 
We begin with the following conventions.

\begin{definition}{(\textbf{$k$ hops neighbour of $v$})}
    A vertex with a shortest path of length $k$ from the vertex $v$ is called a $k$ hops neighbour of $v$. We denote it by $v^{k}_{nbr}$.
\end{definition}

\noindent When we say that the vertex $v'$ is a ``\emph{$k$ hop neighbour of $v$ along the port} $p_v$", we mean that there is a shortest path of length $(k-1)$ from $v_{nbr}^1$ to $v'$, where $v_{nbr}^1$ is the neighbour of $v$ along the port $p_v$.
We also define the following predicates.

\noindent $\blacktriangleright$~$\boldsymbol{occupied(v_{nbr}^k, p_v, \texttt{COL})}$:  This means that there exists a $k$ hops neighbour $v_{nbr}^k$ of the vertex $v$ along the port $p_v$, which is occupied by a robot with the color $\texttt{COL}$. When we write $\sim$ in place of \texttt{COL}, it indicates that the occupying robot can have any color from its color set. If we write $\neq\texttt{COL}$ instead of \texttt{COL}, it means that the occupying robot can have any color except \texttt{COL}.

\noindent $\blacktriangleright$~$\boldsymbol{\neg occupied(v_{nbr}^k, p_v, \sim)}$:  The vertex $v_{nbr }^k$ exists along $p_v$, but unoccupied.

\noindent $\blacktriangleright$~$\boldsymbol{exists(p_v, k)}$: There exists a $k$ hops neighbour of $v$ along the port $p_v$.
    
\noindent $\blacktriangleright$~$\boldsymbol{\neg exists(p_v, k)}$: There does not exist a $k$ hops neighbour of $v$ along $p_v$.

 For example, $\exists p_v \exists v_{nbr}^2 ~ occupied(v_{nbr}^2, p_v, \neq \texttt{FINISH}))$ means that there is a port $p_v$ (incident to $v$) along which there exists a 2 hops neighbour $v_{nbr}^2$ of $v$ such that the vertex $v_{nbr}^2$ is occupied by a robot, say $r'$ with $r'.color \neq $ \texttt{FINISH}.

\subsection{Algorithm (\textsc{\textmd{Tree\_SingleDoor}}) for Single Door}
\label{subsec:tree-singledoor}

In this algorithm, robots use 2 hops of visibility and four colors: \texttt{OFF-0}, \texttt{OFF-1}, \texttt{OFF-2} and \texttt{FINISH}. Initially, all the robots have the color \texttt{OFF-0}. A robot identifies its follower within its visibility range before executing a movement. 
A robot $r$ with the color \texttt{OFF-i} (\texttt{i} $\in \{0,1,2 \}$) considers the visible robot $r'$ with $r'.color = $\texttt{OFF-j} as its follower, where \texttt{j =i+1 (mod 3)}. 
In each LCM cycle, this decision is made solely based on colors, without storing the follower’s position.
We now define an eligible port for a robot $r$ situated at a vertex $v$.
\begin{definition}{\textbf{(Eligible Port from $v$)}}
A port $p_v$ incident with $v$ is called an eligible port for $r$ positioned at $v$ if the following are satisfied.
(i) An unoccupied $v_{nbr}^2$ exists along the port $p_v$.
(ii) There is no robot $r'$ with $r'.color \neq $ \texttt{FINISH} present at $2$ hops away from $v$ along the port $p_v$. 
    
\end{definition} 

Based on the position of $r$ and its current color, we differentiate the following two cases. In all the cases, $r$ does nothing when it sees a robot on the edges.

\noindent $\blacktriangleright$~ \textbf{Case 1 ($r$ is situated on the door):} 
    We further identify two sub-cases.
    \begin{itemize}[left=0pt]
        \item \textbf{Case 1.1} ($r.color =$ \texttt{OFF-0}): If $\exists p_v ~ occupied(v_{nbr}^1, p_v, \texttt{OFF-i}) \lor \exists p_v \exists v_{nbr}^2~ $ $occupied(v_{nbr}^2, p_v, \texttt{OFF-i})$ for $\texttt{i}\in \{0,1,2\}$, $r$ updates its color to \texttt{OFF-j}, where \texttt{j = i+1 (mod3)}.

        \item \textbf{Case 1.2} ($r.color =$ \texttt{OFF-1} or \texttt{OFF-2}): Since $r$ is on the door and with the color \texttt{OFF-1} or \texttt{OFF-2}, it must encounter a robot $r'$ at $1$ or $2$ hops away from it in an earlier LCM cycle. Such a robot $r$ is eligible to move either when $r'$ changes its color to \texttt{FINISH} or moves further. Thus, $r$ does nothing, if $\exists p_v~ (occupied(v_{nbr}^1, p_v, \neq \texttt{FINISH}) \lor \exists v_{nbr}^2 ~ occupied(v_{nbr}^2, p_v, \neq \texttt{FINISH}))$. 
    \end{itemize}
     
     \noindent 
     If neither of the predicates holds, then regardless of $r$ having the color \texttt{OFF-0}, \texttt{OFF-1} or \texttt{OFF-2}, it identifies the minimum eligible port (has the minimum label) $p_v^{min}$ from the current vertex $v$. If $p_v^{min}$ exists, $r$ moves to $v_{nbr}^1$, the neighbour of $v$ along the port $p_v^{min}$, with the current color. 
     While $r$ switches to \texttt{FINISH}, if $p_v^{min}$ does not exist, and $\exists p_v \neg occupied(v^1_{nbr}, p_v, \sim)$.

    \noindent $\blacktriangleright$~ \textbf{Case 2 ($r$ is not on the door \& $r.color =$ \texttt{OFF-i}, $\texttt{i} \in \{ 0,1,2\}$):}
    $r$ remains stationary till it does not see a robot $r'$ with color \texttt{OFF-j} (the follower of $r$), where \texttt{j = i+1 (mod 3)}.
    It also remains in place till it finds a robot $r'$ with color \texttt{OFF-j'}, where \texttt{j' = i-1 (mod 3)}.
    Otherwise, $r$ checks whether there is any eligible port incident to the current vertex $v$. 
    If no such port exists, it changes its color to \texttt{FINISH}. If it exists, it selects $v_{nbr}^1$ along the minimum eligible port $p_v^{min}$ as its target. Finally, $r$ moves to $v_{nbr}^1$, maintaining the current color.

    \noindent $\blacktriangleright$~\textbf{Termination Condition:} The algorithm terminates either when the door has a \texttt{FINISH}-colored robot or all the neighbours of the door are occupied with \texttt{FINISH}-colored robots.

\noindent \textbf{Analysis of the algorithm \textsc{Tree\_SingleDoor}:}
We now analyse the correctness and time-complexity of the algorithm with detailed proofs. For our convenience, we define the following.
We call a robot $r$ \emph{active}, if $r.color \neq $ \texttt{FINISH}. 
An active robot is a \emph{head} if it entered the graph before all other active robots in the current configuration.
We use a self-explanatory notation $distance(v_i, v_j)$ to denote the length of the path between the two vertices $v_i$ and $v_j$. Similar notation will also be used to denote the distance between the positions of two robots.

\vspace{1.5mm}

\begin{lemma}
    \label{lem:lie_same_subtree}
    If $r_1, r_2, \cdots, r_k$ are the active robots, each of which has crossed a vertex $v$ successively, they must lie on the same subtree rooted at $v$.
\end{lemma}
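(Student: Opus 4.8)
The plan is to root $T$ at the single door and reduce the statement to two facts about how a robot's trajectory meets $v$. First I would record the \emph{shape of a trajectory}: every robot enters at the door, and by the no-return rule (a robot that leaves a vertex $u$ never comes back to $u$) its sequence of occupied vertices is a \emph{simple} path starting at the door. In a tree the door-to-$w$ path is unique, so ``$r$ has crossed $v$'' is equivalent to ``$v$ is an internal vertex of $r$'s trajectory''; in particular, after crossing $v$ the robot sits in $T-v$ on the side \emph{away} from the door, i.e. inside one of the components $T_1,\dots,T_d$ obtained by deleting $v$, where $T_0$ denotes the door-side component.

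Second, I would prove confinement to a single component. A robot that has crossed $v$ entered some $T_j$ through the child of $v$ generating $T_j$; to leave $T_j$ it would have to return to $v$ and then step to a different neighbour, but returning to $v$ is exactly what the no-return rule forbids. Hence each robot that has crossed $v$ is trapped inside one fixed $T_j$, and it suffices to show that $r_1,\dots,r_k$ all share the same index $j$ (which is the content of ``the same subtree rooted at $v$'').

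The heart of the argument is therefore to show that successive crossings all use the same child of $v$. I would argue this by induction on the crossing order, using the two structural invariants from the overview: each robot \emph{follows its predecessor} and the chain \emph{never crosses itself}. Suppose $r_1$ crosses $v$ into $T_j$ and some later active robot $r_i$ is the first to cross into a different component $T_{j'}$. For $r_i$ to select the port toward $T_{j'}$ rather than the one toward $T_j$, the latter must have ceased to be an \emph{eligible port}, which by the definition of eligibility forces the vertices of $T_j$ near $v$ to be occupied, and in fact the whole of $T_j$ to be ``closed off'' by \texttt{FINISH}-coloured robots. But the robots ahead of $r_i$ in the chain that crossed into $T_j$ are, by hypothesis, still \emph{active} (colour $\neq$ \texttt{FINISH}); combined with the follow-predecessor invariant this contradicts $T_j$ being closed.

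I expect this last step---turning ``$r_i$ chose a different child'' into ``$T_j$ is saturated with \texttt{FINISH} robots,'' and doing so robustly under $\mathcal{ASYNC}$, where the eligibility predicate is evaluated on possibly stale snapshots---to be the main obstacle. It is where I would have to invoke the colour-cycling (\texttt{OFF-0}/\texttt{OFF-1}/\texttt{OFF-2}) and the waiting conditions of Case~2 to rule out an adversarial schedule that releases a robot into a fresh branch while an earlier branch is still being filled.
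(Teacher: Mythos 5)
Your overall decomposition (trajectories are simple paths, confinement to one component of $T-v$, reduction to ``successive crossings use the same child of $v$'') is reasonable, but the proposal has a genuine gap at exactly the point where the lemma's content lies, and you say so yourself: the step turning ``$r_i$ chose a different child'' into a contradiction is deferred as ``the main obstacle.'' Worse, the reduction you set up for that step is incorrect as stated. Robots move along the \emph{minimum} eligible port, so $r_i$ selecting the port toward $T_{j'}$ does not force the port toward $T_j$ to have become ineligible: if the port label toward $T_{j'}$ is smaller than the one toward $T_j$, it suffices that the $T_{j'}$ port (which was ineligible when $r_1$ crossed, since $r_1$ took the minimum) has meanwhile become eligible. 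Your case analysis misses this branch entirely. Likewise, the implication ``port toward $T_j$ ineligible $\Rightarrow$ $T_j$ is closed off by \texttt{FINISH}-colored robots'' does not follow from the definition of eligibility, since ineligibility can equally be caused by an active (non-\texttt{FINISH}) robot sitting two hops away along that port. A further structural problem is circularity: the invariants you lean on --- the no-return rule, follow-the-predecessor, and the chain never crossing itself --- are established in the paper (Remark \ref{remark:follow_the_predecessor_tree_singledoor}, Lemma \ref{lem:chain_like_structure}, Lemma \ref{lem:not_visit_old_edges}) only \emph{after}, and partly by means of, this very lemma; you would have to re-derive them independently before invoking them here.

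The paper's own argument avoids all of this by being direct rather than by contradiction. For a successive pair $(r_1, r_2)$: the waiting rules force $r_2$, upon reaching $v$, to stay put while its predecessor $r_1$ is within two hops; once $r_1$ moves three hops away along its chosen port $p_v$, $r_2$ recomputes eligibility. Since no other robot has left $v$ in the meantime and \texttt{FINISH}-colored robots never move, the eligibility status of every port incident to $v$ is exactly as it was when $r_1$ computed, so $r_2$'s minimum eligible port is again $p_v$ and it enters the same subtree; induction along the successive pairs $r_l, r_{l+1}$ finishes the proof. If you want to salvage your plan, this eligibility-preservation fact is precisely what you must prove in place of the deferred step --- everything else in your outline is scaffolding around it.
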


\begin{proof}
    \noindent 
    We first prove the case for the two robots $r_1$ and $r_2$. 
    Since $r_1$ and $r_2$ are successive robots, they are of color \texttt{OFF-i} and \texttt{OFF-j}, respectively, for some $i \in \{0,1,2\}$, where $j=(i+1)(mod~3)$. Each robot moves to its neighbour after computing the eligible port from its current vertex. Let $r_1$ leave the vertex $v$ through the port $p_v$, which is the minimum eligible port at the time of its movement. $r_1$ first moves to $v^1_{nbr}$ along $p_v$ and then to $v_{nbr}^2$. When $r_2$ reaches $v$ and finds $r_1$ at $v_{nbr}^2$, it remains in place until $r_1$ moves further to $v_{nbr}^3$. Once $r_1$ moves to $v_{nbr}^3$, $r_2$ lying at $v$ calculates the minimum eligible port. Since no other robots have left the vertex $v$ after $r_1$, all the ports that were eligible during $r_1$'s LCM cycle, where it moved to $v_{nbr}^1$ from $v$, remain eligible at the current time. Therefore, $p_v$ is the minimum eligible port for $r_2$ from $v$ and so it proceeds to $v_{nbr}^1$. Hence, both $r_1$ and $r_2$ lie on the same subtree rooted at $v$.

    Using a similar argument, we can prove that $r_2$ and $r_3$ lie on the same subtree rooted at $v$. In general $r_l$ and $r_{l+1}$ lie on the same subtree rooted at $v$, where $1 \leq l \leq k-1$. Therefore, the lemma holds. \qed
\end{proof}

\begin{remark}
    \label{remark:follow_the_predecessor_tree_singledoor}
    If $r_1$ and $r_2$ are two successive robots, with $r_1$ entering the graph before $r_2$, then $r_2$ follows the path of $r_1$ until $r_1$ terminates.
\end{remark}

\begin{lemma}
\label{lem:chain_like_structure}
    All the active robots lie on a path rooted at the door.
\end{lemma}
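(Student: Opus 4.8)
The plan is to argue structurally along the unique tree-path emanating from the door, peeling off one vertex at a time and invoking Lemma~\ref{lem:lie_same_subtree} at each step. First I would fix the current set of active robots and order them by entry time as $r_1, r_2, \ldots, r_k$, where $r_1$ is the head (the earliest entrant). The invariant I aim to establish is that there is a simple path $u_0, u_1, u_2, \ldots, u_m$ in the tree, with $u_0$ the door and $u_m$ the current position of the head, on which every active robot sits; this is exactly the assertion that all active robots lie on a path rooted at the door.

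To build this path, I would use that every robot enters through the single door $u_0$ and that, by Remark~\ref{remark:follow_the_predecessor_tree_singledoor}, each successor follows its predecessor until the predecessor terminates. Consequently the robots that have crossed any fixed vertex $v$ form a prefix $r_1, \ldots, r_j$ of the entry order and crossed $v$ successively, which is precisely the hypothesis Lemma~\ref{lem:lie_same_subtree} needs. Starting at $u_0$, the lemma forces all robots that have left the door into a single subtree of $u_0$, rooted at one neighbour $u_1$, while the robots that have not yet crossed $u_0$ remain at the door. Applying the same argument to the sub-collection lying beyond $u_1$ confines the robots that have crossed $u_1$ to a single child subtree rooted at some $u_2$, and so on. Because the host graph is a tree and a robot that moves from $x$ to $y$ is forbidden to return to $x$, the successive choices $u_0, u_1, u_2, \ldots$ are pairwise distinct, so the constructed walk is in fact a \emph{simple} path. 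Iterating until no robot crosses the current vertex terminates the construction at $u_m$, the head's position, with every active robot placed on some $u_\ell$, $0 \le \ell \le m$.

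The main obstacle I anticipate is making the ``prefix / crossed successively'' claim airtight so that Lemma~\ref{lem:lie_same_subtree} is genuinely applicable at each stage. I must rule out a later robot overtaking an earlier one — which the follow-the-predecessor behaviour of Remark~\ref{remark:follow_the_predecessor_tree_singledoor} together with the enforced $\ge 2$-hop spacing between successive robots should preclude — and I must handle the boundary cases cleanly: robots still queued at the door, and robots caught mid-move on an edge (the algorithm addresses the latter by prescribing that a robot ``does nothing when it sees a robot on the edges''). A secondary point to pin down is the simplicity of the path, i.e. $u_{\ell+1} \neq u_{\ell-1}$; here I would appeal explicitly to the acyclicity of the tree combined with the no-return rule, rather than to any spacing argument. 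Once these are settled, the induction along the path closes immediately and yields the claim.
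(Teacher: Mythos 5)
Your proposal is correct in substance and rests on the same pillar as the paper's own proof: repeated application of Lemma~\ref{lem:lie_same_subtree}. The organization differs only in bookkeeping. The paper argues by contradiction with an extremal choice: it takes the farthest active robot $r_1$, the path $P$ from the door to $r_1$, a hypothetical active robot $r$ off $P$, and the farthest vertex $v$ common to $P$ and the door-to-$r$ path; applying Lemma~\ref{lem:lie_same_subtree} at $v$ forces $r$ and $r_1$ into the same subtree hanging off $v$, contradicting the maximality of $v$. You replace this extremal argument with a forward induction that peels the path off one vertex at a time, applying the same lemma at each $u_\ell$. The two are interchangeable; your version has the minor merit of explicitly locating every active robot (at some $u_\ell$, or mid-edge along the path, or queued at the door), which the paper glosses over, while the paper's version is shorter because it invokes the lemma only twice.

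One point you must repair: do not invoke the ``enforced $\ge 2$-hop spacing between successive robots'' to rule out overtaking. That spacing is precisely Lemma~\ref{lemma:max_min_distance-btw-robots}, which appears \emph{after} the present statement and whose proof goes through Corollary~\ref{cor:unique_follower}, which in turn cites Lemma~\ref{lem:chain_like_structure} itself --- so the citation would be circular. Fortunately you do not need it: Remark~\ref{remark:follow_the_predecessor_tree_singledoor} alone yields your prefix property, since each active robot follows the path of its predecessor for as long as that predecessor remains active, so any vertex crossed by $r_j$ has already been crossed by $r_1, \ldots, r_{j-1}$; this is exactly the ``crossed successively'' hypothesis needed to apply Lemma~\ref{lem:lie_same_subtree} at each stage. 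With that single citation removed, your induction closes and the argument is sound.
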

 
\begin{proof}
    Let $r_k$ be the robot that lies on the door, and $r_1$ be the robot that is farthest from the door with $r_1.color, r_k.color \neq$ \texttt{FINISH}. 
    We prove that every robot $r'$ with $r'.color \neq $ \texttt{FINISH} lies on the path $P$ joining the current vertex of $r_1$ and the door. On the contrary, let's assume that $r$ is the robot that does not lie on $P$. From the Lemma \ref{lem:lie_same_subtree}, $r$ and $r_1$ both lie on the same subtree of the door. Let $v$ be the farthest vertex from the door that lies on the intersection of $P$ and the path between the door and $r$.
    Since $G$ is a tree, both $r$ and $r_1$ must leave the vertex $v$ at some time of their traversal. By Lemma \ref{lem:lie_same_subtree}, $r$ and $r_1$ must lie on the same subtree of the vertex $v$, which is a contradiction. \qed
    \end{proof}

\begin{corollary}
\label{cor:unique_follower}
    A robot $r$ with $r.color = \texttt{OFF-i}$ can see at most one robot with color $\texttt{OFF-(i+1)(mod 3)}$ and at most one robot with color $\texttt{OFF-(i-1)(mod 3)}$, for $\texttt{i} \in \{0, 1, 2\}$.
\end{corollary}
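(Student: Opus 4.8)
The plan is to combine the chain structure from Lemma~\ref{lem:chain_like_structure} with the periodicity of colors along the chain and a minimum-distance invariant between consecutive robots. First I would recall that by Lemma~\ref{lem:chain_like_structure} all active robots lie on a single simple path $P$ rooted at the door, so I can list them in order along $P$ as $r_1$ (the head) through $r_m$ (at or nearest the door). By the color-update rule at the door (Case~1.1) together with Remark~\ref{remark:follow_the_predecessor_tree_singledoor}, each robot's follower carries the color one larger modulo $3$; hence reading colors along $P$ yields a periodic pattern $c, c{+}1, c{+}2, c, \dots$ of period $3$. Consequently, if $r$ sits at position $k$ with $r.color = \texttt{OFF-i}$, every robot colored $\texttt{OFF-(i+1)(mod 3)}$ occupies a position $\equiv k+1 \pmod 3$, and every robot colored $\texttt{OFF-(i-1)(mod 3)}$ occupies a position $\equiv k-1 \pmod 3$.

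Second, I would establish the key metric fact: any two consecutive active robots on $P$ are at graph-distance at least $2$. This is exactly the behaviour exhibited in the proof of Lemma~\ref{lem:lie_same_subtree} and enforced by the definition of an eligible port: a port whose $2$-hop neighbour carries a non-\texttt{FINISH} robot is not eligible, so a follower never advances to within distance $2$ of its predecessor, and by the no-return rule the predecessor only moves away. Thus a follower steps forward only once its predecessor is at distance $\geq 3$, landing at distance exactly $2$, and the gap never drops below $2$ thereafter; the same reasoning applies at the door through Cases~1.1--1.2, where $r$ leaves only when it sees no non-\texttt{FINISH} robot within $2$ hops. Since $G$ is a tree and $P$ is a simple path, the distance between two robots that are two positions apart on $P$ equals the sum of the two intermediate gaps, hence is at least $2+2 = 4$.

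Third, I would conclude. For $r$ at position $k$, the robots colored $\texttt{OFF-(i+1)(mod 3)}$ lie at positions $k+1, k-2, k+4, \dots$: the immediate follower at $k+1$ may fall within $r$'s $2$-hop range, but the next-nearest one, at $k-2$, is at distance $\geq 4 > 2$ and hence invisible, with all others farther still. Therefore $r$ sees at most one robot colored $\texttt{OFF-(i+1)(mod 3)}$. The symmetric argument over positions $k-1, k+2, k-4, \dots$ shows $r$ sees at most one robot colored $\texttt{OFF-(i-1)(mod 3)}$, namely the immediate predecessor when it lies within $2$ hops. The main obstacle is pinning down the minimum-distance-$2$ invariant rigorously, in particular verifying that neither asynchronous in-transit positions nor the door-entry cases ever bring consecutive active robots closer than $2$ hops, since the entire counting argument hinges on robots two positions apart being strictly beyond the $2$-hop visibility.
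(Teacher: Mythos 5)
Your proposal is correct and follows essentially the same route as the paper's proof: the chain structure from Lemma~\ref{lem:chain_like_structure}, the period-3 color pattern induced by the door's color-assignment rule, and the observation that any robot of the same color class other than the immediate predecessor/successor lies beyond the 2-hop visibility range (the paper states this last step only implicitly, while you make the metric argument explicit). One minor imprecision: your claimed invariant that consecutive active robots are always at distance at least $2$ fails for the pair involving the robot on the door (which can be $1$ hop from the robot that just left it), so the sum of two consecutive gaps can be $3$ rather than $4$; since $3 > 2$, the counting conclusion is unaffected.
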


\begin{proof}
        By Lemma \ref{lem:chain_like_structure}, all the visible robots of color \texttt{OFF-(i+1) (mod 3)} (or \texttt{OFF-(i-1) (mod 3)}) along with $r$ are on the same path from the door. 
        From the steps of the algorithm, the robot which left the door before $r$ is with the color \texttt{OFF-(i-1) (mod 3)} and the robot which left the door after $r$ is of color \texttt{OFF-(i+1) (mod 3)}. 
    Thus, $r$ can see at most one such robot, as mentioned in the lemma statement. \qed
\end{proof}

\begin{remark}
    \label{remark:chain_of_robots_tree}

    We have the following observations. 
    
    \begin{enumerate}
        \item Starting from the door, the robots lying on the path having the vertices visited by the current head robot constitute the \emph{chain} of robots.
        By Remark \ref{remark:follow_the_predecessor_tree_singledoor}, any active robot must be a member of the current chain and follow the path traversed by the head. 

        \item The head of the current chain terminates before other members of the chain, and after termination, its immediate follower becomes the new head of the chain.
    \end{enumerate}
    
\end{remark}

\begin{lemma}
    \label{lemma:max_min_distance-btw-robots}
    The distance between two consecutive robots (except the one in the door) in the current chain of robots is at most three and at least two.
\end{lemma}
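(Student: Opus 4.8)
The plan is to prove the two bounds simultaneously by induction on the sequence of atomic moves, maintaining the invariant that every pair of consecutive \emph{off-door} active robots is at distance between $2$ and $3$. I would first fix the combinatorial picture supplied by the earlier results: by Lemma~\ref{lem:chain_like_structure} all active robots lie on a single path rooted at the door, by Remark~\ref{remark:follow_the_predecessor_tree_singledoor} each robot retraces its predecessor's route, and by Corollary~\ref{cor:unique_follower} the predecessor (color $\texttt{OFF-}(i{-}1)$) and the follower (color $\texttt{OFF-}(i{+}1)$) of a robot are unique. I would then orient this path by distance from the door and record the one structural fact that drives everything: a robot never returns to a vertex it has just vacated, so along the path every robot moves \emph{monotonically away} from the door. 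Consequently the distance between two fixed consecutive robots changes by exactly $\pm 1$ per move, increasing when the predecessor steps and decreasing when the successor steps.

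For the lower bound I would use eligibility. When a successor $r_s$ takes a forward step it moves along an \emph{eligible port}, which by definition forces the vertex two hops ahead to be unoccupied and free of any active robot; since the predecessor $r_p$ is the only active robot ahead on the path, $r_p$ must sit at distance at least $3$ at the instant $r_s$ commits to moving, so the unit step leaves them at distance at least $2$. Between consecutive steps of $r_s$ the gap to $r_p$ only widens (both move away from the door, $r_p$ ahead), so the bound is preserved. The base case is the moment a robot leaves the door: I would show, using the upper-bound analysis below, that its predecessor is then exactly $3$ hops from the door, so the step to the first neighbour lands it at distance $2$.

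For the upper bound I would use the follower-visibility gate in Case~2 (and Case~1.2 for the door): a robot steps forward only when it actually sees its follower, i.e.\ when the successor is within its $2$-hop range, hence at distance at most $2$; the unit step therefore leaves them at distance at most $3$. Between steps of the predecessor the gap can only shrink (the successor moves toward it), so $3$ is never exceeded. This gate also pins the base case: while the predecessor is the door robot's (color-)follower, the door robot is visible only when the predecessor is within $2$ hops of the door, so the predecessor halts at distance exactly $3$; the door robot, now seeing no active robot within $2$ hops, leaves by Case~1.2 and steps to distance $2$. Finally, a termination ($\to\texttt{FINISH}$) merely deletes a robot from the chain, leaving all surviving consecutive distances unchanged and freeing the new head from any predecessor constraint, so the invariant is undisturbed.

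The delicate point, and the step I expect to be the main obstacle, is arguing all of this under $\mathcal{ASYNC}$, where a robot's \emph{Move} acts on a possibly stale \emph{Look} and movements are not instantaneous. The saving observation is monotonicity: between a robot's snapshot and its move, its predecessor can only have moved farther away, so the eligibility argument for the lower bound is only strengthened, while its follower can only have moved toward it but \emph{cannot} breach distance $2$, because the follower's own forward step is itself blocked by eligibility as long as this robot occupies the vertex two hops ahead of it. Combined with the rule that a robot does nothing when it sees a robot in transit on an edge, this shows that every decision remains safe when finally executed, so each atomic move preserves the invariant and the induction goes through.
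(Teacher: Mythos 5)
Your proof is correct and takes essentially the same route as the paper's: both rest on Corollary~\ref{cor:unique_follower} to identify the unique predecessor and follower, and on the algorithm's movement gates (a robot steps only when its follower is within $2$ hops and its predecessor is not within $2$ hops), so that each unit step leaves the gap between $2$ and $3$. Your version merely packages this as an explicit induction with base case, termination, and $\mathcal{ASYNC}$-staleness checks, details which the paper's terser proof leaves implicit.
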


\begin{proof}
    By Corollary \ref{cor:unique_follower}, a robot $r$ with color \texttt{OFF-i} can see at most two robots on the current chain: one $r'$ with the color \texttt{OFF-(i-1) (mod 3)} which it follows, the other $r''$ with the color \texttt{OFF-(i+1) (mod 3)} which follows $r$. 
    Thus, whenever a robot with the color \texttt{OFF-(i+1) (mod 3)} comes within the $2$ hop distance from $r$, it correctly determines the immediate follower robot on the current chain and executes a movement when the follower is at $2$ hops away but does not find any robot with \texttt{OFF-(i-1) (mod 3)} within its $2$ hops visibility. 
    Till the time the follower $r''$ does not reach within $2$ hops of visibility, $r$ does not move ahead.
    The movement of $r$ makes $distance(r'', r) = 3$ and $distance(r, r') = 2$. \qed
\end{proof}

\begin{lemma}
    \label{lem:not_visit_old_edges}
    No robot visits a vertex from its current location, which is already visited by itself in one of its previous LCM cycles.
\end{lemma}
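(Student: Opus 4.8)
The plan is to prove the stronger statement that the trajectory of any single robot $r$ is a \emph{simple path}: $r$ never traverses an edge $e(y,x)$ in the direction $y \to x$ after having previously entered $y$ from $x$. Since $G$ is a tree, and hence acyclic, this immediately yields the lemma. Indeed, if $r$ were to revisit some previously occupied vertex, then its sequence of occupied positions would contain a closed sub-walk of positive length, and in an acyclic graph every closed walk must contain an immediate reversal of the form $x \to y \to x$; ruling out all such reversals therefore rules out all revisits. So the whole argument reduces to showing that $r$ never backtracks along the edge it has just used.

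Next I would fix attention on a single move of $r$. Suppose $r$ reached its current vertex $y$ from a neighbour $x$ in an earlier LCM cycle and is now about to compute its next target at $y$. The goal is to show that the port $p_y$ pointing from $y$ to $x$ is \emph{not} eligible, so $r$ cannot select it. By Lemma \ref{lem:chain_like_structure} all active robots lie on a single path rooted at the door, and by Remark \ref{remark:follow_the_predecessor_tree_singledoor} together with Remark \ref{remark:chain_of_robots_tree} the vertices currently occupied by the chain form (a prefix of) the trajectory of the head, with $r$'s immediate follower $r''$ sitting behind $r$ on the door-side of this path. Because $r$ entered $y$ from $x$, the unique path from $y$ back to the door begins with the edge $e(y,x)$; hence $r''$ lies along the port $p_y$ that points to $x$.

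Then I would combine the movement rule with Lemma \ref{lemma:max_min_distance-btw-robots}: $r$ executes a move only when its follower $r''$ has come to distance exactly $2$ (while its predecessor is more than $2$ hops ahead). At that instant $r''$ occupies precisely the $2$-hop neighbour of $y$ along $p_y$. Since $r''$ is active ($r''.color \neq \texttt{FINISH}$), condition (ii) of the definition of an eligible port fails for $p_y$, so $p_y$ is ineligible and $r$ moves (if at all) through a different port, never returning to $x$. Together with the reduction of the first paragraph, this establishes that $r$ visits no vertex twice.

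The one case needing separate care is $x = \text{door}$, where $r''$ may sit on the door at distance $1$ rather than $2$, so the eligibility argument via condition (ii) does not apply verbatim. Here I would instead appeal to collision-freedom and the door semantics: the door is an entry vertex that always carries the next queued robot, so the target $x$ is occupied and $r$ cannot move onto it; moreover, when the door has no other neighbour, condition (i) already fails because no $2$-hop neighbour exists along $p_y$. I expect this door-adjacency case, rather than the generic chain argument, to be the only delicate point, since everything else follows cleanly from the chain structure and from the exact distance-$2$ trigger already established in Lemma \ref{lemma:max_min_distance-btw-robots}.
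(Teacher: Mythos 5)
Your proof is correct and follows essentially the same route as the paper's: the paper likewise argues that the eligible-port computation disregards the port leading back toward the follower (ruling out any immediate reversal) and then invokes acyclicity of the tree to conclude that no vertex is ever revisited. Your write-up simply makes explicit what the paper leaves implicit, namely the distance-$2$ trigger from Lemma \ref{lemma:max_min_distance-btw-robots} that places the follower exactly at the $2$-hop neighbour along the backward port (so condition (ii) of eligibility fails there), together with the separate treatment of the door-adjacent case.
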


\begin{proof}
    During the computation of the eligible port, a robot $r$ situated at the vertex $v$ disregards the port (incident to $v$) that leads to the edge $e(v, v')$, where $v'$ is the vertex on the path connecting $r$ and its follower on the current chain. 
    Hence, $r$ cannot move in the direction of $v'$ from $v$. 
    Additionally, since $G$ is a tree and hence acyclic, $r$ cannot visit a vertex that is already been visited. \qed
    
\end{proof}

\begin{remark}
    \label{rem:collision-free}
    By Lemma \ref{lemma:max_min_distance-btw-robots}, the distance between two successive robots on the current chain is at least two and at most three.
    By Lemma \ref{lem:not_visit_old_edges}, if a robot $r$ reaches the vertex $v$ from $v'$, it never returns to $v'$.
    Both of these lemmas lead to collision-free movement of a robot. 
\end{remark}

\begin{lemma}
\label{lem:time_asyncSS}
    The algorithm terminates in $O(|E|)$ epochs.
\end{lemma}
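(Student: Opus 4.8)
The plan is to split the execution into an \emph{exploration} cost and a \emph{finishing} cost and bound each by $O(|E|)$, using the fact that $|V|=|E|+1$ for a tree. First I would track the \emph{head} of the chain and argue that the total number of forward moves made by the successive heads over the whole run is $O(|E|)$. By Lemma~\ref{lem:not_visit_old_edges} a robot never re-enters a vertex it has already left, and since $G$ is acyclic the vertices visited by a single head form a simple path descending away from the door. When a head terminates, its immediate follower becomes the new head (Remark~\ref{remark:chain_of_robots_tree}) and resumes exploration from a vertex close to the termination point, into a still-unexplored branch rather than back toward the door (again by Lemma~\ref{lem:not_visit_old_edges}). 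Hence the union of the head trajectories realizes a DFS-like descent in which every edge is crossed in the forward (away-from-door) direction by a head at most $O(1)$ times, giving $\sum_{\text{heads}}(\text{forward moves}) = O(|E|)$.

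Second, I would show that each forward step of the head is charged only $O(1)$ epochs, which is where the parallelism enters. A head has no predecessor, so by the rules of Case~2 it is blocked only until its follower lies within $2$ hops. By Lemma~\ref{lemma:max_min_distance-btw-robots} consecutive robots on the chain stay at distance $2$ or $3$, and a robot advances exactly when its predecessor is at distance $3$ and its follower at distance $2$, after which these two gaps swap. I would formalize this as an exclusion-type dynamics on the sequence of inter-robot gaps: the slack created by a head move propagates toward the door while alternate robots advance simultaneously, so that in a constant number of sub-steps---each costing $O(1)$ epochs under the fairness of $\mathcal{ASYNC}$---every robot shifts forward once and, in particular, the follower of the head returns to distance $2$. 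Thus the head becomes eligible again within $O(1)$ epochs and the chain \emph{pipelines} instead of advancing by a serial wave emanating from the door. Combined with the first part, all exploration completes in $O(|E|)$ epochs.

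Finally I would bound the finishing cost. Each robot switches to \texttt{FINISH} exactly once and the number of settled robots is at most $|V|$; a robot finishes in $O(1)$ epochs after the vertex and ports in front of it cease to be eligible, and a terminal cascade of finishes along a branch performs one \texttt{FINISH} per robot. Since the finish events are disjoint across robots, their total contribution is $O(|V|)=O(|E|)$ epochs. Adding the exploration and finishing contributions yields the claimed $O(|E|)$ bound.

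I expect the main obstacle to be the rigorous ``$O(1)$ epochs per head step'' claim under $\mathcal{ASYNC}$: because snapshots may be stale and moves are not instantaneous, I must argue that the adversary cannot force the enabling of a single head move to require a long serial chain of follower moves starting at the door. The crux is to show, from the invariant of Lemma~\ref{lemma:max_min_distance-btw-robots} together with Corollary~\ref{cor:unique_follower} and the path structure of Lemma~\ref{lem:chain_like_structure}, that alternate robots are simultaneously enabled and that their moves do not interfere (they are pairwise at least $2$ hops apart), so that a constant number of synchronization sub-steps always suffices; verifying that this parallel schedule survives the worst-case asynchronous activation is the delicate step.
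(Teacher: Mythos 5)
Your overall architecture (bound the number of head moves by $O(|E|)$ via Lemma~\ref{lem:not_visit_old_edges}, bound the cost per move, add an $O(|V|)$ finishing cost) mirrors the paper's, but the key step you propose is false as stated, and you have put your finger on exactly the spot where it breaks. You claim each head move is enabled within $O(1)$ epochs in the \emph{worst case}, i.e.\ that ``the adversary cannot force the enabling of a single head move to require a long serial chain of follower moves starting at the door.'' The adversary \emph{can} force precisely this. The invariant of Lemma~\ref{lemma:max_min_distance-btw-robots} only constrains each gap to lie in $\{2,3\}$; it does not forbid the configuration in which \emph{every} consecutive gap equals $3$. In that configuration a robot $r_i$ is enabled only if its follower is at distance $2$ and its predecessor at distance $3$, so the only enabled robot is the one at the door end of the chain: nothing is ``pipelined,'' and the enabling wave must travel serially from the door to the head, costing $\Theta(q)$ epochs for a chain of $q$ robots, with $q$ possibly $\Theta(|V|)$. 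Moreover this all-$3$ configuration is reachable: the adversary simply activates, after each head move, only the robots needed to restore the head's gap to $2$ while leaving the gaps behind them at $3$, gradually accumulating a chain of $3$'s. So your proposed worst-case charging scheme cannot work.

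The paper's proof accepts this and replaces your worst-case claim by an \emph{amortized} one, which is the missing idea. It classifies the head's waits into \emph{short waits} ($O(1)$ epochs) and \emph{long waits} (the all-$3$ prefix of length $q$, costing up to $q$ epochs), and then observes that when the serial wave of a long wait completes, the gaps along the chain are left staggered ($3,2,3,2,\dots$). A simple distance count then shows that before the adversary can rebuild an all-$3$ prefix of length $s$ and impose another long wait, the head must have performed at least $(q-1)/2$ (resp.\ $(s-1)/2$) moves, each preceded only by short waits; hence a long wait of $q$ epochs is paid for by $\Omega(q)$ cheap edge explorations, giving an average of at most $4$ epochs per explored edge. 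Your plan can be repaired by substituting this amortized accounting for your per-move $O(1)$ claim; the rest of your decomposition (head trajectories never revisit vertices, successive heads resume from nearby unexplored branches, $O(1)$ finishing epochs per robot summing to $O(|V|)=O(|E|)$) matches the paper and is sound.
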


\begin{proof}
    Notice that the head of the current chain of robots computes the minimum eligible port and moves forward only if the immediate follower is at $2$ hops away from it. 
    Before proceeding further in the proof, we define a set of \emph{explored edges} by the head. 
    The set of \emph{explored edges} includes those for which at least one endpoint is visited by the head for the first time.
    We first prove the following claim.

    \noindent\emph{Claim: A robot requires $O(k)$ epochs to explore $k$ new edges (if any) starting from the moment it becomes the head of the current chain until termination.}\\
    We begin by proving the claim for the first head (i.e., the first robot that entered the graph).
    Let us denote the first head by $r_1$.
    We discuss the base cases for $k = 2$ and $k = 3$ to give a perspective to the reader.
    It takes $1$ epoch to move to the neighbour of the door from the door along the minimum eligible port incident to the door with the color \texttt{OFF-0}. 
    It takes at most another 2 epochs to reach a 2-hop neighbour of the door: one for $r_2$ at the door to switch to \texttt{OFF-1}, and another one for $r_1$ to move to a 2-hop neighbour upon seeing the follower $r_2$ with \texttt{OFF-1}. 
    Therefore, the head $r_1$ takes $1$ epoch to explore $k = 2$ edges, $3$ epochs to explore $k = 3$ edges and another epoch is required if $r_1$ turns its color to \texttt{FINISH} and terminates at the current location.
    In general, at the end of $t$-th epoch, we assume that the head $r_1$ explores $l$ edges of $G$.
    Let $\{ r_1, r_2, \cdots, r_p\}$ be an ordered set of all robots lying on the current chain. 
    By Lemma \ref{lemma:max_min_distance-btw-robots}, $2 \leq distance(r_i, r_{i+1}) \leq 3$ (except at the door), where $1 \leq i \leq p-1$. 
    Depending on the waiting time of the head $r_1$ between two consecutive movements, we define two types of waiting: \emph{short wait} and \emph{long wait}. Note that we use such conventions strictly from a global view of the configuration and not from the point of view of a particular robot. 

    \begin{itemize}
        \item We call a waiting time of $r_1$ the \emph{short wait}, if between two movements, either $r_1$ does not wait for any other robots (i.e., $distance(r_1, r_2) = 2$) or waits only for the immediate next robot $r_2$ on the chain (i.e., $distance(r_1, r_2) = 3$ and $distance(r_2, r_3) = 2$).

        \item Otherwise, we call it the \emph{long wait} between two movements, i.e., there exist a $r_q$ for $3 \leq q \leq p$ such that $distance(r_1,r_2) = distance(r_2, r_3) = \cdots = distance(r_{q-1}, r_q) = 3$, and either $r_q$ is at the door or $distance(r_q, r_{q+1}) = 2$.
    \end{itemize}
    If $distance(r_1, r_2) = 2$, the head $r_1$ explores one more edge in another epoch. 
    If instead, $distance(r_1, r_2) =3$ and $distance(r_2, r_3) = 2$, then $r_1$ requires at most $2$ epochs to explore one more edge, where one epoch is needed for $r_2$ to move at a vertex so that $distance(r_1, r_2)$ becomes $2$ and another epoch for $r_1$ to execute the movement. 
    Therefore, in case of short wait, the current head $r_1$ takes at most $2$ epochs to explore a new edge after another movement. 
    A more critical scenario for $r_1$ is in the case of the long wait, where $r_1$ has to wait not only for $r_2$, but for $r_3$ to execute a movement towards $r_2$ so that $distance(r_2, r_3) = 2$, leading $r_2$ to execute a movement towards $r_1$ to make $distance(r_1, r_2) = 2$.
    In general, every $r_i$ waits for its follower $r_{i+1}$ on the current chain ($1 \leq i\leq q-1$) to move at a vertex $2$ hops away from it.
    Without loss of generality, we assume that $r_q$ does not wait for a complete epoch for its follower $r_{q+1}$ on the current chain.
    This is not a restrictive assumption for us in this process of the proof.
    Under the $\mathcal{ASYNC}$ scheduler, the adversary can activate $r_1$ first, followed by $r_2$ and so on till $r_q$ within an epoch. 
    Thus, in the worst case, $r_1$ needs to wait for $q-1$ epochs since its last movement, before executing another movement to cover a new edge. 
    However, we show that the robots $r_3, r_4, \cdots r_q$ execute multiple movements during these $q-1$ epochs when $r_1$ is waiting without any movement.
    Let $t$ be the epoch when $r_1$ has executed the last movement.
    Notice that, at the $(t+1)$-th epoch, $r_q$ moves toward $r_{q-1}$ to make $distance(r_{q-1}, r_q) = 2$.
    Similarly, at the $(t+2)$-th epoch, $r_{q-1}$ moves toward $r_{q-2}$. This movement makes $distance(r_{q-1}, r_{q-2}) = 2$ and $distance(r_{q-1}, r_q) = 3$. 
    At $(t+3)$-th epoch, both $r_{q-2}$ and $r_q$ move toward $r_{q-3}$ and $r_{q-1}$ respectively such that $distance(r_{q-1}, r_q) = 2$, $distance(r_{q-2}, r_{q-1}) = 3$ and $distance(r_{q-3}, r_{q-2}) = 2$.
    Similarly, at $(t+4)$-th epoch, both $r_{q-3}$ and $r_{q-1}$ move toward $r_{q-4}$ and $r_{q-2}$ respectively.
    In general, at $(t+(2j-1))$-th epoch, all of $r_{q}, r_{q-2}, \cdots, r_{q-(2j-2)}$ move together, where $1 \leq j \leq \frac{q}{2}$. Additionally, at $(t+2j)$-th epoch, all of $r_{q-1}, r_{q-3}, \cdots,$ $ r_{q-(2j-1)}$ move together, where $1 \leq j < \frac{q}{2}$.
    Finally, at the $(t+q)$-th epoch, $r_1$ moves forward upon seeing $r_2$ at $2$ hops away from its current location.
    At the end of $(t+q)$-th epoch, we have $distance(r_1, r_2)=3, distance(r_2, r_3) = 2, distance(r_3, r_4) = 3, \cdots, distance(r_{q-1}, r_q) = 2 \text{ or } 3$ (depending on odd or even $q$).
    Let us assume that $t' (> t+q)$ is the latest epoch, where the head $r_1$ again encounters a long wait.
    This means that there exists a $r_s$ for $s \geq 3$ such that $distance(r_1,r_2) = distance(r_2, r_3) = \cdots = distance(r_{s-1}, r_s) = 3$, and either $r_s$ is at the door or $distance(r_s, r_{s+1}) = 2$.
    We segregate two sub-cases as follows, based on the relation between $s$ and $q$.
    \begin{itemize}
        \item When $s \geq q$, observe that, at the $(t+q)$-th epoch,  $distance(r_1, r_q) \leq \frac{5(q-1)}{2}$ and at the epoch $t'$, $distance(r_1, r_q) = 3(q-1)$.
        Therefore, $r_1$ has explored at least $3(q-1) - \frac{5(q-1)}{2} = \frac{q-1}{2}$ new edges in between the epochs $(t+q)$ and $t'$.
        Moreover, by the definition of $t'$, the head $r_1$ can encounter only short waits between the two epochs $(t+q)$ and $t'$.
        Consequently, $r_1$ explores $\frac{q-1}{2}$ new edges in at most $2 \frac{(q-1)}{2} = q-1$ epochs, leading to $t' \leq t+q+q-1 = t+2q-1$.
        Hence, the head $r_1$ explores $1+\frac{(q-1)}{2} = \frac{q+1}{2}$ new edges in between the two epochs $t$ and $t'$, i.e., in at most $(2q-1)$ epochs. 
        Therefore, $r_1$ explores a new edge in an average of at most $\frac{(2q-1)}{(q+1)/2} \leq 4$ epochs.

        \item When $s < q$, we have, at the $(t+q)$-th epoch,  $distance(r_1, r_s) \leq \frac{5(s-1)}{2}$ and at the epoch $t'$, $distance(r_1, r_q) = 3(s-1)$.
        Therefore, $r_1$ has explored at least $3(s-1) - \frac{5(s-1)}{2} = \frac{s-1}{2}$ new edges in between the epochs $(t+q)$ and $t'$, where $r_1$ can only encounter short waits.
        Consequently, $r_1$ explores $\frac{s-1}{2}$ new edges in at most $2 \frac{(s-1)}{2} = s-1$ epochs, leading to $t' \leq t+q+s-1$.
        Since $r_1$ encounters a long wait at the $t'$-th epoch, it needs at most $s$ more epochs to explore a new edge.
        Hence, the head $r_1$ explores $1+\frac{(s-1)}{2} = \frac{s+1}{2}$ new edges in between the two epochs $t+q$ and $t'+s$, i.e., in at most $(t'+s)-(t+q) \leq t+q+s-1+s-(t+q) = (2s-1)$ epochs. 
        Therefore, in this case, $r_1$ explores a new edge in an average of at most $\frac{(2s-1)}{(s+1)/2} \leq 4$ epochs.
    \end{itemize}
In both accounts, we can say that either a long wait for $q$ epochs of the head $r_1$ is followed by a sequence of $\frac{q-1}{2}$ short waits (first case) or a sequence of $\frac{s-1}{2}$ short waits of $r_1$ is followed by a long wait for $s$ epochs (second case), leading to an average constant waiting time in between two movements of the head. 
By Lemma \ref{lem:not_visit_old_edges}, every movement of the current head $r_1$ leads it to a new explored edge of $G$. 
Lemma \ref{lem:not_visit_old_edges} and the above arguments jointly conclude the claim for the first head $r_1$.

We extend the proof of the claim for any current head, say $r_i$, except $r_1$.
By Remark \ref{remark:chain_of_robots_tree}, $r_i$ becomes the head of the current chain at the $t_i$-th epoch after all the robots in $\{r_1, r_2, \cdots r_{i-1} \}$ terminate.
By Remark \ref{remark:follow_the_predecessor_tree_singledoor}, $r_i$ follows the path traversed by $r_{i-1}$ before the $t_i$-th epoch. 
Let at the $t_i$-th epoch, $r_{i-1}$ and $r_i$ situated at the vertices $v_{i-1}$ and $v_i$ such that $distance(v_{i-1}, v_i) = 2$ and the vertex lying between $v_{i-1}$ and $v_i$ is $v'$.
Notice that all the edges incident to $v_i$ and $v'$ are already explored by the robots $r_1, r_2, \cdots r_{i-1}$. 
In the worst case, $r_i$ reaches to the vertex $v'$ to cover the edges (if any) incident to $v'$ except the edge $e(v_{i-1}, v')$ (which is already covered by $r_{i-1}$).
When $r_i$ reaches $v'$, the definition of the eligible port prevents $r_i$ to choose the port incident to $v'$ leading to the edge $e(v', v_{i-1})$. 
Lemma \ref{lem:not_visit_old_edges} shows that $r_i$ also does not choose the port leading the edge $e(v', v_{i})$.
Therefore, any movement from $v'$ leads $r_i$ toward a new edge to explore it. 
The same argument ensures that $r_i$ finds new explored edges after every future movement from $v'$.
Starting from the moment when $r_i$ becomes the head of the current chain of robots until termination, if $r_i$ explores $k_i$ edges, it needs at most $k_i + 1$ movements.
The similar arguments presented in the proof of the claim for $r_1$, the robot $r_i$ needs $O(k_i+1) \approx O(k_i)$ epochs. Hence the claim holds true for any robot $r_i$.

Recall that the set of \emph{explored edges} for a robot $r_i$ includes those for which at least one endpoint is visited by $r_i$ for the first time.
It is possible for $r_i$ to terminate without exploring any new edge. This case of trivial as in such a case, it needs a constant many epochs $c_i$ to terminate.
In the worst case, it is evident that the robots need to explore $O(|E|)$ many edges of $G$. 
Let $r_1, r_2, \cdots, r_N$ be the robots with the color \texttt{FINISH} at termination, where the robots $r_i$ explores $k_i$ many edges. 
Then we have $\sum_{i=1}^N k_i = O(|E|)$.
The above claim proves that the total number of epochs needed by all the robots to reach termination is $\sum_{i=1}^N (c_i+ O(k_i)) = O(N) + O(|E|) = O(|E|)$. \qed
\end{proof}

\begin{lemma}
    \label{lem:feasibility_asyncSS}
    At termination, all the \texttt{FINISH}-colored robots form an MVC of $G$.
\end{lemma}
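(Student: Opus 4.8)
The plan is to reduce the claim to two structural properties of the terminal configuration. Write $S$ for the set of vertices occupied by \texttt{FINISH}-colored robots at termination and $\overline{S} = V \setminus S$ for the unoccupied vertices. I would first record the standard fact (valid for any graph) that $S$ is a minimal vertex cover if and only if: (i) $\overline{S}$ is an independent set, equivalently every edge has an endpoint in $S$, so that $S$ is a vertex cover; and (ii) every $v \in S$ has a neighbour in $\overline{S}$, equivalently $v$ owns a private edge $e(v,w)$ with $w \in \overline{S}$, so that deleting $v$ uncovers $e(v,w)$ and $S$ is minimal. It therefore suffices to establish these two facts about where the \texttt{FINISH} robots come to rest.

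The engine for both properties is a \emph{spacing invariant} on the settled robots, enforced by the two-hop look-ahead built into the eligible-port definition together with the chain spacing of Lemma~\ref{lemma:max_min_distance-btw-robots} and the no-revisit guarantee of Lemma~\ref{lem:not_visit_old_edges}. The crucial observation is that a robot stops advancing along a branch \emph{exactly} when the vertex two hops ahead is occupied (or the branch dead-ends within one hop): if a robot settles at $v$ having been blocked toward a child $c$ by an occupied vertex $x$ at distance two, then $v$ and $x$ are both occupied with a single unoccupied vertex $c$ between them, giving the local pattern occupied--unoccupied--occupied. I would prove, by induction on the order in which robots turn \texttt{FINISH} and using Corollary~\ref{cor:unique_follower} and Remark~\ref{remark:chain_of_robots_tree} to track how each successive head takes over, that this local pattern propagates so that no two adjacent vertices are ever left simultaneously unoccupied and no settled vertex is ever surrounded entirely by occupied vertices.

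For property (i) I would argue by contradiction. Suppose an edge $e(u,w)$ has $u,w \in \overline{S}$ at termination; root the tree at the door so that $u$ is the endpoint nearer the door. By Lemma~\ref{lem:time_asyncSS} every edge of $G$ is explored, so (since the unique door--$w$ path runs through $u$) some head robot visits $u$, and by Remark~\ref{remark:follow_the_predecessor_tree_singledoor} its successors inherit the same open branch. Consider the last robot whose computation placed it two hops from the open endpoint along the $u$--$w$ direction: at that moment the open endpoint was an unoccupied two-hop neighbour carrying no active robot, so the corresponding port met both conditions of the eligible-port definition and was eligible. Hence that robot would have advanced into the branch and, by the trailing-distance bound of Lemma~\ref{lemma:max_min_distance-btw-robots}, a following robot would have settled within one hop of $e(u,w)$, contradicting $u,w \in \overline{S}$. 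Thus $\overline{S}$ is independent and $S$ is a vertex cover.

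Property (ii) is the part I expect to be the main obstacle, and specifically ruling out a \texttt{FINISH} robot all of whose neighbours end up occupied (such a robot would be redundant and break minimality). A robot settles at $v$ only when no port is eligible, so I would split on the blocking reason per port: when the child $c$ along the explored branch is a leaf, $c$ is unoccupied and $e(v,c)$ is the private edge, while when the block is a two-hop-occupied vertex the intermediate neighbour $c$ must be certified to remain unoccupied, which I would do using Lemma~\ref{lem:not_visit_old_edges} (no robot ever returns to the vertex it left, so the side from which $v$ was entered stays free) together with the distance-two-or-three spacing. Unlike the vertex-cover direction, this is a global statement about the final configuration: it forces me to control simultaneously the spacing between consecutive chain robots under $\mathcal{ASYNC}$ (Lemmas~\ref{lem:chain_like_structure} and~\ref{lemma:max_min_distance-btw-robots}), the takeover order of successive heads (Remark~\ref{remark:chain_of_robots_tree}), and the behaviour at branching vertices where a head passes through before a later head explores the remaining subtrees. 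The cleanest route, which I would aim to make precise, is to strengthen the induction of the second paragraph into the invariant that along every door-to-leaf path the occupied and unoccupied vertices alternate with the two-hop spacing forced by the look-ahead, from which both (i) and (ii) — and hence the minimal vertex cover — follow at once.
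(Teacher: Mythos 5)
Your opening reduction (cover $+$ every settled vertex has an unoccupied neighbour) is exactly the paper's feasibility/minimality split, and your argument for property (i) is in the same spirit as the paper's feasibility proof (contradiction via a robot that should still have had an eligible port), though note that citing Lemma~\ref{lem:time_asyncSS} for ``every edge is explored'' is shaky: that lemma only bounds the running time, and coverage is precisely what is being proved here. The genuine gap is property (ii), which you never actually prove: you defer it to the ``cleanest route'' invariant that occupied and unoccupied vertices alternate along every door-to-leaf path, and that invariant is \emph{false}. Counterexample: let $G$ be the path $d$--$a$--$b$--$c$--$e$--$f$ with the door at $d$ and one extra leaf $x$ attached at $c$. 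The first robot travels $d,a,b,c,e$ and turns \texttt{FINISH} at $e$ (the port toward $f$ has no $2$-hop neighbour). Its follower, then sitting at $b$, finds the port toward $c$ eligible, because the unoccupied $2$-hop neighbour $x$ exists and the only robot within two hops along that port, the one at $e$, is \texttt{FINISH}-colored; so it advances to $c$ and terminates there (the port toward $x$ dead-ends within one hop, and the port toward $e$ is blocked by the settled robot). The third robot then terminates at $a$, and the final cover is $\{a,c,e\}$: the adjacent vertices $c$ and $e$ are both occupied, so no alternation holds, yet the output is still a correct (indeed minimum) vertex cover, since $c$ and $e$ retain the private unoccupied neighbours $x$ and $f$. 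The failure traces back to your ``crucial observation'': a robot also halts when the vertex \emph{one} hop ahead is occupied by a \texttt{FINISH} robot, or when a port dead-ends, not only when the block sits two hops ahead, and that is exactly the case that produces adjacent settled robots.

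Because the terminal configuration genuinely violates your spacing pattern, no strengthening of that induction can deliver (ii); a different mechanism is needed, and the paper's is order-based rather than geometric. Assume a settled robot $r$ at $v_1$ has all neighbours occupied. Since all active robots form one chain from the door (Lemma~\ref{lem:chain_like_structure}) and followers retrace the head's path (Remark~\ref{remark:follow_the_predecessor_tree_singledoor}), every robot settled on a neighbour of $v_1$ other than the door-side neighbour $v_2$ must have crossed $v_2$ and then $v_1$ \emph{before} $r$ did, hence occupied its final vertex before $r$ moved from $v_2$ into $v_1$; moreover, in a tree no robot entering after $r$ can reach a neighbour of $v_1$ once $v_1$ itself is occupied. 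But $r$'s move from $v_2$ to $v_1$ required an eligible port, i.e.\ an unoccupied $2$-hop neighbour of $v_2$ through $v_1$ --- an unoccupied neighbour of $v_1$ --- which is now impossible: contradiction. You would need to adopt this chain-order argument (or something equivalent exploiting uniqueness of tree paths) in place of the alternation invariant.
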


\begin{proof}
    We divide the proof into two parts. First, we prove the feasibility, i.e., every edge is covered by at least one \texttt{FINISH}-colored robot and then we prove the minimality.

    \noindent \textit{Feasibility}: We prove by contradiction. Let $e = e(u, v)$ be the edge which is not covered by any \texttt{FINISH}-colored robot. $e$ can not be the edge incident to the door, as the termination condition suggests. 
    W.l.o.g, let $distance(u, door)< distance(v, door)$ and one of the neighbours of $u$ is occupied with a \texttt{FINISH}-colored robot. 
    Let $r$ be the last robot to terminate (with the color \texttt{FINISH}) among all the robots occupying the neighbours of $u$.
    When $r$ has decided to set its color to \texttt{FINISH} and terminate, it must have seen a \texttt{FINISH}-colored robot either on $u$ or on $v$, which is a contradiction. 

    \noindent \textit{Minimality:} Let the vertex $v_1$, which is occupied by a \texttt{FINISH}-colored robot $r$, be removed from the vertex cover set, and we still have a feasible vertex cover of $G$. 
    So, all the neighbours of $v_1$ must be occupied by \texttt{FINISH}-colored robots. 
    $v_1$ can not be the door, as a robot on the door sets its color to \texttt{FINISH} when it finds at least one unoccupied neighbour. 
    Also, the design of our algorithm ensures that $v_1$ can not be a leaf vertex. 
    Let $v_2$ be the neighbour of $v_1$ such that $distance(v_2, door) < distance(v_1, door)$.
    Then all the robots except on $v_2$, which are positioned on the neighbours of $v_1$, must cross the vertex $v_2$ and then $v_1$ to reach their final position. 
    So, $r$ sets its color to \texttt{FINISH} after all its neighbours except on $v_2$ set their color to \texttt{FINISH}, by Remark \ref{remark:follow_the_predecessor_tree_singledoor} 
    Considering the time $t$ when $r$ is at $v_2$ and $r$ decides to move at $v_1$, it must have found an unoccupied neighbour of $v_1$. It leads to a contradiction, as all the neighbours of $v_1$ at $t$ are already occupied. \qed
\end{proof}

\begin{lemma}
    \label{lem:optimal_tree}
    At termination, all the \texttt{FINISH}-colored robots form a MinVC of $G$.
\end{lemma}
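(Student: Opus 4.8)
The plan is to show that the \texttt{FINISH}-colored set $C$ produced by \textsc{Tree\_SingleDoor} has size exactly $\tau(G)$, the minimum vertex-cover number of $G$. Since Lemma~\ref{lem:feasibility_asyncSS} already gives that $C$ is a vertex cover (so $|C|\ge \tau(G)$), it suffices to exhibit a matching $M$ of $G$ with $|M|=|C|$: because the edges of a matching are pairwise disjoint, every vertex cover must contain a distinct vertex for each edge of $M$, whence $\tau(G)\ge |M|=|C|$ and therefore $|C|=\tau(G)$. Note this argument uses only the inequality ``matching $\le$ cover'' and does not even invoke K\"onig's theorem. Throughout I would root $G$ at the door and speak of parents and children with respect to this rooting; recall that by Lemma~\ref{lem:chain_like_structure} and Remark~\ref{remark:chain_of_robots_tree} the active robots form a DFS-like chain anchored at the door.

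The crux is the following claim. \emph{Claim} $(\star)$: every $v\in C$ has at least one child $c$ (in the tree rooted at the door) with $c\notin C$. Granting $(\star)$, I would set $\mu(v)$ to be such an uncovered child and define $M=\{\,e(v,\mu(v)):v\in C\,\}$. Then $M$ is a matching of size $|C|$: the map $\mu$ is injective because in a rooted tree every vertex has a unique parent, so $\mu(v)=\mu(v')$ forces $v=v'$; and no $C$-endpoint can coincide with an uncovered endpoint $\mu(v)\notin C$. Thus the $|C|$ edges of $M$ are pairwise disjoint, and the reduction above immediately finishes the proof.

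It remains to argue $(\star)$, which is exactly where the $2$-hop look-ahead built into the definition of an \emph{eligible port} is essential. A non-door robot turns \texttt{FINISH} at $v$ only upon arriving there as the head of the current chain and finding no eligible port; but by condition (i) of the eligible-port definition, the head could have advanced onto $v$ from its parent only because $v$ had an \emph{unoccupied} $2$-hop neighbour along that port, that is, an unoccupied child $c$ of $v$ at the instant of arrival. I would then show $c$ stays uncovered forever: while the head sits at $v$ no other robot can occupy $v$ (at most one robot per non-door vertex), and once the head becomes \texttt{FINISH} at $v$ the occupied $v$ permanently blocks the unique access in the tree to $c$ and its subtree, while Lemma~\ref{lem:not_visit_old_edges} and collision-freeness forbid any re-entry into $v$. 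Since a robot can settle at $c$ only after reaching it through $v$, the vertex $c$ is never occupied, so $c\notin C$, establishing $(\star)$. The door needs a separate but parallel treatment: a door robot turns \texttt{FINISH} (Case~1) precisely when it has an unoccupied neighbour $c$ yet no eligible port, and since the terminated door halts all subsequent movement, $c$ again remains uncovered.

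The main obstacle is proving $(\star)$ rigorously under $\mathcal{ASYNC}$: (i) verifying that the head enters $v$ only when some child of $v$ is currently unoccupied, which rests on the precise semantics of condition (i) together with the chain invariants of Remark~\ref{remark:follow_the_predecessor_tree_singledoor}, and (ii) ruling out, against an adversarial asynchronous scheduler, any later robot sneaking onto the witnessed child $c$. Both points reduce to the structural fact that in a tree the parent $v$ is the sole gateway into $c$'s subtree and that $v$ is occupied from the moment the head arrives until termination. Once $(\star)$ is secured, the matching-and-duality steps are routine. It is worth stressing that minimality alone cannot replace $(\star)$: the path on seven vertices admits a \emph{minimal}, leaf-free vertex cover of size $4$ whereas $\tau=3$, so any correct proof must exploit the look-ahead structure of the algorithm, which is exactly what $(\star)$ captures.
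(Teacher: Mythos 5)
Your proposal takes essentially the same route as the paper's proof: both extract, for each \texttt{FINISH}-colored robot at a vertex $v$, a permanently unoccupied neighbour (child) $c$ of $v$, witnessed by the eligible-port look-ahead at the moment the robot committed to moving onto $v$ (with the door handled as a separate case and permanence secured by the occupied gateway $v$ plus the no-revisit lemma), and both then turn the edges $e(v,c)$ into a matching of size $|C|$ so that the matching--cover duality forces $OPT \geq |C|$, hence $|C| = OPT$. Your explicit injectivity argument via unique parents is a slightly cleaner phrasing of the paper's pairwise-disjointness claim, but it is the same approach.
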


\begin{figure}[h]
 \centering
     \includegraphics[width=0.7\linewidth]{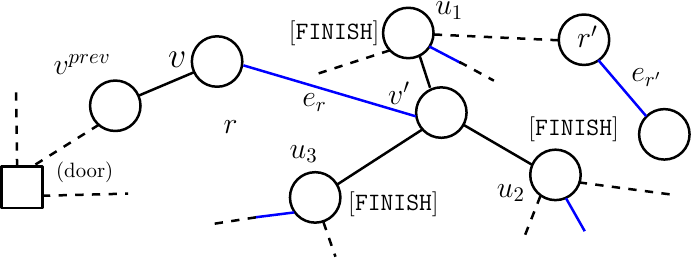}
     \caption{The blue edges are the edges that are covered by only one \texttt{FINISH}-colored robot}
     \label{fig:optimal_tree}
\end{figure}

\begin{proof}
    Let $r$ be a robot on the vertex $v$ that sets its color to \texttt{FINISH} at the time $t$. 
    There can be two cases depending on the position of  $v$. 
    Firstly, we discuss when $v$ is not the door vertex. Then $r$ must reach $v$ from some other vertex $v^{prev}$. Consider the time $t' (< t)$ when $r$ situated at $v^{prev}$ decides to move at $v$. 
    At $t'$, $r$ must find a $2$ hops unoccupied neighbour, say $v'$, through the port $p_{v^{prev}}^{min}$ incident to $v^{prev}$. 
    Let $p$ be the port of the edge $e(v^{prev}, v)$ incident to $v$.
    Since $r$ sets its color to \texttt{FINISH} lying at $v$, all the ports incident to $v$ are ineligible.
    This implies that all the neighbours of $v'$ are occupied.
    By Remark \ref{remark:chain_of_robots_tree}, all neighbours of $v'$ have color \texttt{FINISH} at $t$, as they left the door prior to $r$.
    After the time $t$, any robot on a vertex $u_i$, which is on the same subtree rooted at $v^{prev}$ as $v$ and $distance(u_i, v^{prev}) > distance(v, v^{prev})$ (refer to Fig. \ref{fig:optimal_tree}) can not reach $v'$, as it is with the color \texttt{FINISH}. 
    Any other robot also cannot reach $v'$ after $t$, as it has to cross the vertex $v$ to reach $v'$, which is not possible as $v$ is occupied. 
    Hence, there is an edge $e(v, v')$ that is covered only by the robot $r$.
    We denote this edge by $e_r$.
    Consequently, all the incident edges of $v$ and $v'$ are covered.
   Let us hypothetically remove all such edges from $G$, as they are already covered.
   Consider another robot $r'$ which has set its color to \texttt{FINISH} after $r$. 
   The edge $e_{r'}$, covered only by $r'$, is neither incident to $v$ nor $v'$.
   When $v$ is the door, an edge $e(v, v')$ must exist that is not covered by any \texttt{FINISH}-colored robot before the time $t$.
   Let $e(v, v') = e_r$.
   Hence, for every two \texttt{FINISH}-colored robots $r$ and $r'$, we can always select two edges $e_r$ and $e_{r'}$ such that they do not share any common vertex.
   Let $E' \subset E$ be all such edges. 
    $E'$ must be a matching (independent set of edges) of $G$, and $|E'|$ equals the total number of \texttt{FINISH}-colored robots at the end of the algorithm.
    Let $OPT$ be the size of the minimum vertex cover of $G$. 
    Since any two edges in $E'$ do not share a vertex, any vertex cover must have at least $|E'|$ many vertices to cover all the edges of $E'$. 
    So, $OPT \geq |E'|$. $OPT \leq |E'|$, as $E'$ form an MVC of $G$, by Lemma \ref{lem:feasibility_asyncSS}. Hence, $OPT = |E'|$. \qed
\end{proof}

\noindent In the algorithm \textsc{Tree\_SingleDoor}, we use $4$ colors in total: \texttt{OFF-0}, \texttt{OFF-1}, \texttt{OFF-2} and \texttt{FINISH}.
Combining Lemma \ref{lem:time_asyncSS}, \ref{lem:feasibility_asyncSS}, \ref{lem:optimal_tree}, and Remark \ref{rem:collision-free}, we conclude the following theorem.



\begin{theorem}
    \label{theorem:tree-singledoor}
    Algorithm \textsc{Tree\_SingleDoor} fills a MinVC of a tree $G$ with \texttt{FINISH}-colored robots having $2$ hops visibility and $4$ colors in $O(|E|)$ epochs under $\mathcal{ASYNC}$ and with no collision.
\end{theorem}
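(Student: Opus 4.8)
The plan is to assemble Theorem~\ref{theorem:tree-singledoor} directly from the lemmas already established, since each of the four claims in the theorem statement corresponds to exactly one prior result. First I would observe that the color count is immediate: the algorithm only ever assigns the labels \texttt{OFF-0}, \texttt{OFF-1}, \texttt{OFF-2}, and \texttt{FINISH}, and the visibility requirement of $2$ hops is part of the algorithm's specification (the predicates in the eligible-port definition and the follower-detection rule never inspect beyond a $2$-hop neighbourhood), so both of these are verified by inspection of the algorithm rather than by further argument.

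Next I would invoke the substantive lemmas in sequence. Collision-freeness follows from Remark~\ref{rem:collision-free}, which combines the distance bounds of Lemma~\ref{lemma:max_min_distance-btw-robots} (consecutive chain robots stay at distance between two and three) with the no-backtracking property of Lemma~\ref{lem:not_visit_old_edges}. The time bound of $O(|E|)$ epochs is exactly Lemma~\ref{lem:time_asyncSS}. That the \texttt{FINISH}-colored robots occupy a set forming a minimal vertex cover is Lemma~\ref{lem:feasibility_asyncSS}, and the strengthening to a \emph{minimum} vertex cover is Lemma~\ref{lem:optimal_tree}. Since MinVC implies MVC, the minimality half of feasibility is technically subsumed, but I would cite both so the logical chain is transparent: feasibility (every edge covered) plus minimum-cardinality together give the MinVC conclusion. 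The theorem then follows by simply conjoining these facts, which is precisely the one-line assembly the excerpt already signals with ``Combining Lemma~\ref{lem:time_asyncSS}, \ref{lem:feasibility_asyncSS}, \ref{lem:optimal_tree}, and Remark~\ref{rem:collision-free}.''

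I do not expect a genuine obstacle here, because the theorem is a corollary-style summary and all the analytical work has been front-loaded into the lemmas. If anything, the one point requiring a sentence of care is confirming that the termination configuration is well-defined and reached: I would note that Lemma~\ref{lem:time_asyncSS} guarantees the process halts after finitely many ($O(|E|)$) epochs, and that the termination condition (a \texttt{FINISH}-colored robot on the door, or all door-neighbours \texttt{FINISH}-colored) coincides with the configuration analyzed in Lemmas~\ref{lem:feasibility_asyncSS} and~\ref{lem:optimal_tree}, so that the ``at termination'' hypotheses of those lemmas are legitimately in force. With that bookkeeping in place, the proof is a direct citation of the four results and requires no new estimates.
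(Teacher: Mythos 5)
Your proposal is correct and coincides with the paper's own treatment: the theorem is proved there exactly by noting the four colors by inspection and combining Lemma~\ref{lem:time_asyncSS} (time bound), Lemma~\ref{lem:feasibility_asyncSS} (MVC at termination), Lemma~\ref{lem:optimal_tree} (MinVC), and Remark~\ref{rem:collision-free} (no collision). Your additional remark that the termination configuration invoked in the lemmas is the one actually reached is harmless bookkeeping and does not change the argument.
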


\subsubsection{Execution Example of the Algorithm \textsc{\textmd{Tree\_SingleDoor}.}}
\label{subsec-app:execution-example}

 The example provides a foundational understanding of the technique, which extends naturally to the multiple-door setting.
Please refer to Fig. \ref{fig:example} and the following description.

\begin{figure}[h]
    \centering
     \includegraphics[width=0.7\linewidth]{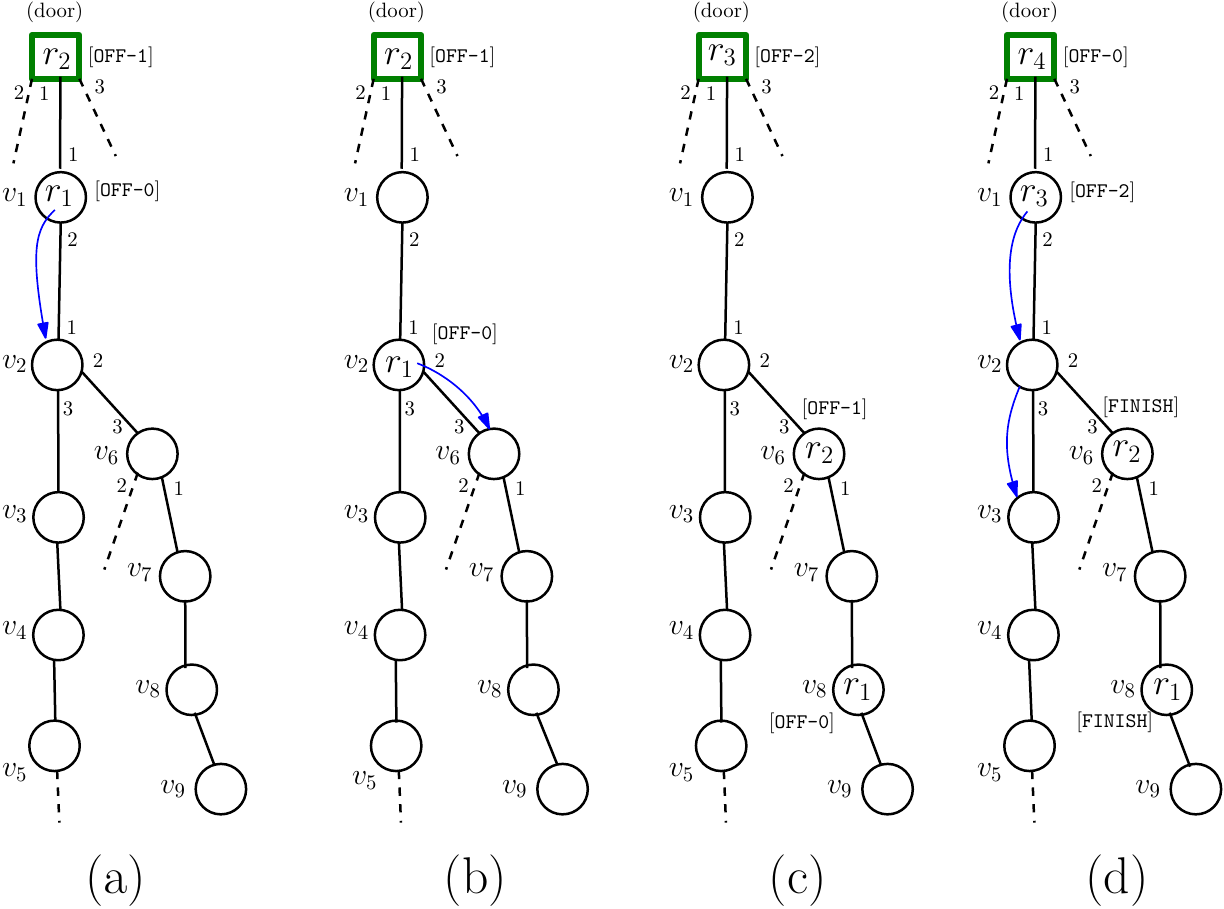}
     \caption{An execution of the algorithm \textsc{Tree\_SingleDoor}}
     \label{fig:example}
\end{figure}
Initially, the robot $r_1$ appears at the door (which is indicated with a green square in Fig. \ref{fig:example}) with the color \texttt{OFF-0}. 
After finding all the $1$ and $2$ hops neighbours unoccupied, it  calculates $p_{door}^{min} = 1$ and moves to $v_1$ with \texttt{OFF-0}.
Another robot $r_2$ is placed at the door after $r_1$ leaves.
$r_1$ remains stationary till it finds an \texttt{OFF-1}-colored robot.
$r_2$ switches its color to \texttt{OFF-1}, when it sees $r_1$ with color \texttt{OFF-0} at one of its neighbour $v_1$.
Now, $r_1$ discards the port $1$ as the eligible port after seeing $r_2$ along the port $1$ and computes $p_{v_1}^{min} = 2$. 
$r_1$ moves to $v_2$ with \texttt{OFF-0}, as shown in Fig. \ref{fig:example}(a).
Thereafter, $r_1$ moves to the neighbour $v_6$ of $v_2$ along the port $p_{v_2}^{min} = 2$ with \texttt{OFF-0}, while $r_2$ maintains status quo as it finds $r_1$ sitting at its $2$ hops neighbour with color $\neq \texttt{FINISH}$ (refer to \ref{fig:example}(b)).
In the same way, $r_1$ and $r_2$ eventually move to $v_8$ and $v_6$ respectively (Fig. \ref{fig:example}(c)). 
Meanwhile, another robot $r_3$ on the door sets its color \texttt{OFF-2}.
Now, $r_1$ terminates at $v_8$ with color \texttt{FINISH}, as it finds $r_2$ at $v_6$ and no other $2$ hops unoccupied neighbour.
$r_2$ similarly changes its color to \texttt{FINISH}, when $r_3$ is at $v_1$, as depicted in Fig. \ref{fig:example}(d).
Next, $r_3$ moves to $v_2$ and then calculate $p_{v_2}^{min} = 3$ after disregarding the ports $1$ and $2$.
Finally, $r_3$ moves to $v_3$ and continues the filling process.

\subsection{Algorithm (\textsc{\textmd{Tree\_MultiDoor}}) for Multiple Doors}
\label{subsec:tree-multidoor}
We consider $H (>1)$ doors, each assigned a unique color (can be thought of as ID) to establish a hierarchy, where the lower-ID doors dominate the higher-ID ones.
Barrameda et al. \cite{10.1007/978-3-540-92862-1_11} showed that $\Omega(H)$ colors are necessary to distinguish the robots entering from different doors. 
In this algorithm, robots use $4$ hops visibility and use $O(H)$ colors.
Initially, robots at the $h$-th lowest ID door are colored $\texttt{color}_h^0$, and two doors are not adjacent to each other.
Due to space constraints, we could not include the detailed description of the algorithm, and hence, we present a high-level idea of it.

\noindent \textbf{High-level Idea:} 
The strategy mirrors \textsc{Tree\_SingleDoor}, keeping a chain-like formation from each door, but with a distinct set of colors. Robots from $h$-th lowest ID door use the color set $\{\texttt{color}_h^0, \texttt{color}_h^1, \texttt{color}_h^2\}$, similar to $\{\texttt{OFF-0},$ $ \texttt{OFF-1}, \texttt{OFF-2}\}$ in previous algorithm.
The head of each chain explores the graph, while the other robots follow it. 
A head $r$ at vertex $v$ with the color \texttt{color}$_h^i$, moves to $v_{nbr}^1$ along the minimum eligible port to cover the edge $e(v_{nbr}^1, v_{nbr}^2)$, if $v_{nbr}^2$ is unoccupied. 
After moving, $r$ remains stationary until it finds its follower with \texttt{color}$_h^{i+1}$ at a distance of 2 hops. 
If no eligible port is available, $r$ terminates with \texttt{FINISH}, and the follower of $r$ becomes the new head.

\begin{figure}[h]

\centering
\begin{minipage}[b]{0.48\linewidth}
\centering
     \includegraphics[width=0.9\linewidth]{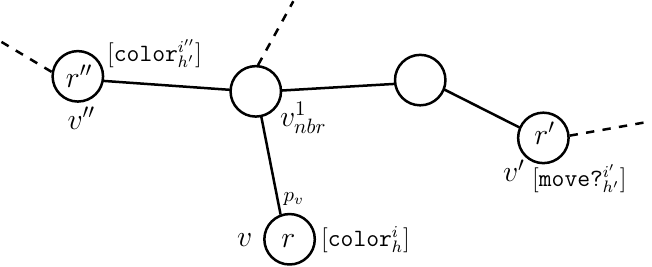}
     \caption{$r$ sees $r'$ and $r''$ due to $3$ hop visibility and do not move to $v_{nbr}^1$ to avoid possible collision}
     \label{fig:never_cross_chain}
\end{minipage}
\hfill
\begin{minipage}[b]{0.48\linewidth}
\centering
         \includegraphics[width=0.85\linewidth]{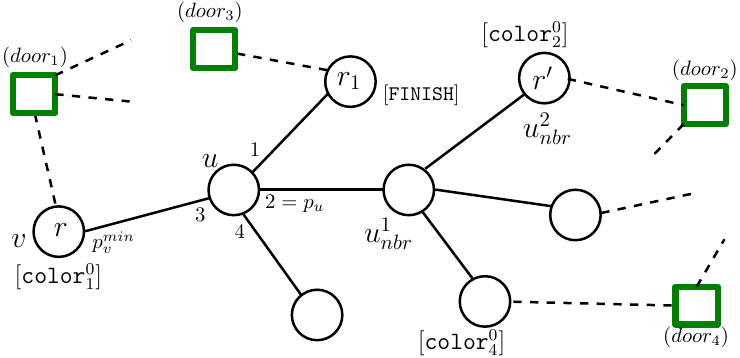}
         \caption{$r'$ moves to $u_{nbr}^1$ while $r$ remains stationary in spite of $r$ from higher ID door.}
         \label{fig:async_critical_multiDoor}
\end{minipage}
\end{figure}

To ensure collision-free operation, robots perform additional checks beyond maintaining the chain.
First, to prevent chain intersections, a robot $r$ considers a port $p_v$ \emph{ineligible} if it detects two robots $r'$ and $r''$ (from the same door but different from $r$) such that $v_{nbr}^1$ along $p_v$ lies on the path between their positions $v'$ and $v''$ (see Fig.~\ref{fig:never_cross_chain}).
Second, if two head robots from different doors potentially target the same vertex, only the robot from the lower ID door proceeds.

We address a critical $\mathcal{ASYNC}$ scenario to avoid collision. 
Let $r$ at $v$ with \texttt{color}$_h^i$ finds both $v_{nbr}^1 (=u)$ and $v_{nbr}^2 (=u_{nbr}^1)$ unoccupied along its minimum eligible port.
Meanwhile, another robot $r'$ at $u_{nbr}^2$ with \texttt{color}$_{h'}^{i'} (h' > h)$ sets $u_{nbr}^1$ as its target but hasn't moved yet (see Fig. \ref{fig:async_critical_multiDoor}). 
$r$ being a robot from a lower ID door than $r'$, it might complete two LCM cycles: first moves to $u$ and then to $u_{nbr}^1$, resulting in a collision with $r'$. 
To prevent this, $r$ remains stationary at $v$, allowing the closer robot $r'$ to proceed.
However, if $r'$ is initially outside the view of $r$ (e.g., approaching $u_{nbr}^2$ from $door_2$), $r$ can target $u$ without moving yet.
Once $r'$ reaches $u_{nbr}^2$ and sees $r$ at $v$, it may incorrectly assume itself as a robot closer to $u_{nbr}^1$ than $r$ and sets $u_{nbr}^1$ as the target, leading to  another collision.
To resolve this, if $r'$ first switches to \texttt{move?}$_{h'}^{i'}$ to signal its intent to move to $u_{nbr}^1$ and waits for confirmation from $r$.
In case $r$ reaches $u$, $r'$ in its next LCM cycle, after activating with \texttt{move?}$_{h'}^{i'}$, gives up its plan to move $u_{nbr}^1$ and reverts to \texttt{color}$_{h'}^{i'}$. 
If instead, $r$ is still at $v$, $r'$ moves to $u_{nbr}^1$ with \texttt{color}$_{h'}^{i'}$ safely.

\vspace{2em}
\noindent \textbf{Description of Algorithm \textmd{\textsc{Tree\_MultiDoor:}}}
Since $r$ performs different actions based on which predicate is satisfied, we use \textsc{If, ElseIf} (as in pseudocode), meaning the predicate in the $i$-th \textsc{ElseIf} is evaluated only if all previous ones are false.
For this algorithm, we modify the definition of an eligible port as follows.

\begin{definition}{(\textbf{Eligible Port from $v$})}
A port $p_v$ incident to $v$ is called eligible for $r$, positioned at $v$ with $r.color \in \{\texttt{color}_h^{i}, \texttt{move?}_h^i\}$, if the following conditions hold. 
(i) There exists an unoccupied $v_{nbr}^2$ along the port $p_v$.
(ii) No robot $r'$ with $r'.color = $ \texttt{color}$_h^{i'}$, for $1 \leq h \leq H$ and $0 \leq  i \neq i' \leq 2$ is present at a $v_{nbr}^2$ along $p_v$. 
(iii) $r$ doesn't see two robots $r'$ at $v'$ and $r''$ at $v''$ such that $v_{nbr}^1$ or $v_{nbr}^2$ along the port $p_v$ lies on the path between $v'$ and $v''$ with $r'.color =$ \texttt{col}$_{h'}^{i'}$ and $r''.color =$ \texttt{col}$_{h'}^{i''}$, where \texttt{col} $\in \{\texttt{color}, \texttt{move?} \}$, $i', i'' \in \{0, 1, 2\}$, and $h' \neq h$. (refer to Fig. \ref{fig:never_cross_chain})

\end{definition}

We identify following three cases based on the position and color of $r$.

\noindent $\blacktriangleright$~  \textbf{Case B.1 ($r$ is at the $door_h$):} We further identify the two sub-cases based on the current color of $r$.
    \begin{itemize}
        \item \textbf{Case B.1.1} ($r.color =$ \texttt{color}$_h^0$): \textsc{If} $\exists p_v ~ occupied(v_{nbr}^1,$ $ p_v, \texttt{color}_h^i)$ or $\exists p_v \exists v_{nbr}^2$ $ occupied(v_{nbr}^2, p_v, \texttt{color}_h^i)$ for $\texttt{i}\in \{0,1,2\}$, it changes its color to \texttt{color}$_h^j$ where \texttt{j = i+1 (mod3)}. 
        \textsc{ElseIf}, $ \exists p_v \exists v_{nbr}^3~ occupied(v_{nbr}^3, p_v, \texttt{color}_h^2)$, $r$ moves to the vertex $v_{nbr}^1$ along the port $p_v$ with the current color.

        \item \textbf{Case B.1.2} ($r.color =$ \texttt{color}$_h^i$ for \texttt{i} = $1$ or $2$):  \textsc{If} $\exists p_v~ occupied(v_{nbr}^1, p_v, \neq \texttt{FINISH}) \lor \exists v_{nbr}^2 ~ occupied(v_{nbr}^2, p_v, \neq \texttt{FINISH}))$, $r$ maintains the status quo.
        \textsc{ElseIf} $ \exists p_v \exists v_{nbr}^3~ occupied(v_{nbr}^3,$ $ p_v, \texttt{color}_h^{i-1})$, $r$ moves to the vertex $v_{nbr}^1$ along the port $p_v$ with the current color. 
        This essentially means that $r$ with color \texttt{color}$_h^i$ follows its predecessor whose color is \texttt{color}$_h^{(i-1)}$.
    \end{itemize}    

    The above two sub-cases deal with all those robots that are followers of some robots from the same door. 
    We now discuss the case when $r$ is the first robot to move through one of the incident ports.
    If the predicates mentioned in the above two sub-cases are not true, irrespective of $r$ having color \texttt{color}$_h^i$ for \texttt{i} $\in \{0,1,2 \}$, it finds the minimum eligible port $p_v^{min}$ (if exists). 
    If $p_v^{min}$ exists, there must be an unoccupied $v_{nbr}^2$ along the port $p_{v}^{min}$. Let us denote $v_{nbr}^1$ along the port $p_{v}^{min}$ by $u$.
    Let $p_u$ be the minimum among all incident ports to the vertex $u$ along which the $1$ hop neighbour of $u$ is unoccupied.

    \begin{itemize}
        \item \textsc{If} $\exists u_{nbr}^3~ occupied(u_{nbr}^3, p_u, \texttt{col}_{h'}^{i'})$ and $h > h'$, $r$ remains in place. 
        If we denote the robot occupying a $3$ hop neighbour of $u$ by $r'$ which has entered the graph through $door_{h'}$, the two robot $r$ and $r'$ can move simultaneously and reach at the vertices $u$ and $u_{nbr}^2$ respectively. 
        This might lead to a violation of the minimality of the vertex cover of the graph. 
        To tackle this possibility, we refrain $r$ from reaching $u$ using $4$ hop visibility. 
        
        \item \textsc{ElseIf} $\exists u_{nbr}^1~ occupied(u_{nbr}^1, p_u, \texttt{col}_{h'}^{i'})$, where $h > h'$ and \texttt{col} $\in \{\texttt{color}, \texttt{move?}\}$, $r$ does nothing. In this case, $r$ allows the robot $r'$ situated on $u_{nbr}^1$ to cover the edge $e(u, u_{nbr}^1)$, as $r'$ is from a lower ID door than $r$.

        \item \textsc{ElseIf} $\neg exists(p_u, 2)$ or $\forall u_{nbr}^2  (\neg occupied(u_{nbr}^2, p_u, \sim) \lor occupied(u_{nbr}^2, p_u, $ $ \texttt{FINISH}))$, $r$ moves to $u$ with the current color. This is because $r$ is either the only robot left to cover the edge $e(u, u_{nbr}^1)$ or all the occupied neighbours of $u$ are from a higher door than $r$.

        \item  \textsc{ElseIf} $\exists u_{nbr}^2~ \neg occupied(u_{nbr}^2, p_u, \sim) $ and $ \exists u_{nbr}^2 ~ occupied$ $(u_{nbr}^2,  p_u, \texttt{col}_{h'}^{i'})$ with \texttt{col} $\in \{ \texttt{color}, \texttt{move?}\}$ such that $h$ is lesser than the minimum among all such $h'$, $r$ remains stationary. Let $r'$ be the robot located at $u_{nbr}^2$ with the smallest $h'$. Although $r'$ comes from the higher ID door than $r$, $r$ allows $r'$ to move at $u_{nbr}^1$. Due to asynchrony, it may happen that $r'$ had already decided to move at $u_{nbr}^1$ before $r$ started its own LCM cycle (See Fig \ref{fig:async_critical_multiDoor}). Hence, the movement of $r$ towards $u$ may create a collision with $r'$.

        \item  \textsc{ElseIf} $\exists u_{nbr}^2~ \neg occupied(u_{nbr}^2, p_u, \sim) $ and $ \exists u_{nbr}^2 ~ occupied$ $(u_{nbr}^2,  p_u, \texttt{color}_{h'}^{i'})$ such that $h$ is greater than the minimum among all such $h'$, $r$ changes its color to \texttt{move?}$_h^i$. This is one of the negation conditions of the above. So, $r$ is willing to move. But before executing the movement, it seeks permission from $r'$. Since $r'$ may have already decided to move at $u_{nbr}^1$. In that case, after seeing the robot on $u_{nbr}^1$, $r$ drops its wish to move at $u$. The detailed conditions are mentioned later in Case B.3. 

        \item \textsc{ElseIf} $\forall v_{nbr}^2~(\neg occupied(v_{nbr}^2, p_v^{min}, \sim) \lor (occupied(v_{nbr}^2, $ $ p_{v}^{min}, \texttt{color}_{h'}^{i'}) \land (h < h')))$, the robot $r$ moves to $u$. This is the case when all the above predicates related to the vertex $u$ do not hold, and $r$ is from the lower ID door among all its $2$ hops occupied neighbours along $p_v^{min}$. 

        \item \textsc{ElseIf} $p_v^{min}$ does not exist and $\exists p_v \neg occupied(v^1_{nbr}, p_v, \sim)$, $r$ changes its color to \texttt{FINISH}.

    \end{itemize}

    \noindent $\blacktriangleright$~  \textbf{Case B.2} ($r$ is not on any door and $r.color =$ \texttt{color}$_h^i$ for \texttt{i} $\in \{0,1,2 \}$):
    $r$ remains stationary till it does not see a robot $r'$ with color \texttt{color}$_h^j$      (the follower of $r$) at most $2$ hops away from the current vertex of $r$, where \texttt{j = i+1 (mod 3)}. 
    It also remains stationary if it finds a robot $r'$ with color \texttt{color}$_h^j$, at most $2$ hops away from $v$, where \texttt{j = i-1 (mod 3)}. 
    If $\exists p_v \exists v_{nbr}^3 ~ occupied(v_{nbr}^3, p_v, \texttt{color}_h^j)$  for \texttt{j = i-1 (mod 3)}, $r$ moves to the vertex $v_{nbr}^1$ along the port $p_v$.
    Otherwise, $r$ checks whether there is any eligible port incident to the current vertex $v$ or not. If there does not exist any eligible port, it changes its color to \texttt{FINISH}. 
    If it exists, let $u = v_{nbr}^1$ along the port $p_{v}^{min}$ and $p_u$ be the port minimum among all the incident ports to $u$ along which the $1$ hop neighbour is unoccupied.

    \begin{itemize}
        \item \textsc{If} $\exists u_{nbr}^3~ occupied(u_{nbr}^3, p_u, \texttt{col}_{h'}^{i'})$ and $h > h'$, $r$ remains in place.
    
        \item \textsc{ElseIf} $\exists u_{nbr}^1~ occupied(u_{nbr}^1, p_u, \texttt{move?}_{h'}^{i'})$, $r$ remains stationary. In this case, $r$ is eligible to move to $u$ if $h < h'$. However, before moving, $r$ allows the robot with color $\texttt{move?}_{h'}^{i'}$ to change its color, so that the robot gives up its intent to move to $u_{nbr}^1$. 

        \item \textsc{ElseIf} $\exists u_{nbr}^1~ occupied(u_{nbr}^1, p_u, \texttt{col}_{h'}^{i'})$, where $h > h'$ and \texttt{col} $\in \{\texttt{color}, \texttt{move?}\}$, $r$ remains in place, as it is from the higher ID door than the robot at $u_{nbr}^1$.

        \item \textsc{ElseIf} $\neg exists(p_u, 2)$ $\lor$ $\forall u_{nbr}^2  (\neg occupied(u_{nbr}^2, p_u, \sim) \lor $ $ occupied(u_{nbr}^2, p_u, $ $ \texttt{FINISH}))$, $r$ moves to $u$ with the current color.

        \item \textsc{ElseIf} $\exists u_{nbr}^2~ \neg occupied(u_{nbr}^2, p_u, \sim) $ and $ \exists u_{nbr}^2 ~ occupied$ $(u_{nbr}^2,  p_u, \texttt{col}_{h'}^{i'})$ with $\texttt{col} \in \texttt{color}, \texttt{move?}$ such that $h$ is lesser than the minimum among all such $h'$, $r$ remains stationary. This is similar to one of the subcases discussed in Case B.1.

        \item  \textsc{ElseIf} $\exists u_{nbr}^2~ \neg occupied(u_{nbr}^2, p_u, \sim) $ and $ \exists u_{nbr}^2 ~ occupied$ $(u_{nbr}^2,  p_u, \texttt{color}_{h'}^{i'})$ such that $h$ is greater than the minimum among all such $h'$, $r$ changes its color to \texttt{move?}$_h^i$ to get a confirmation from the robots with color \texttt{color}$_{h'}^{i'}$.

        \item \textsc{ElseIf} $\forall v_{nbr}^2 (\neg occupied(v_{nbr}^2, p_v^{min}, \sim) \lor (occupied(v_{nbr}^2, $ $ p_v^{min}, \texttt{color}_{h'}^{i'}) \land (h < h')))$, $r$ moves to $u$.
    \end{itemize}

    \noindent $\blacktriangleright$~  \textbf{Case B.3} ($r.color =$ \texttt{move?}$_h^i$):
    Let $u = v_{nbr}^1$, which is the $1$ hop neighbour along the port $p_v^{min}$. 
    If all the $1$ hop neighbours of $u$, except $r$ itself, are unoccupied or occupied by robots with color from the set $\{\texttt{color}_{h'}^{i'}, \texttt{FINISH}, $ $ \texttt{move?}_{h'}^{i'}$\}, where $h' > h$, it moves to the vertex $u$ after changing its color to \texttt{color}$_h^i$. 
    On the other hand, when $r$ sees a robot occupying $u_{nbr}^1$ with the color \texttt{color}$_{h'}^{i'}$, where $h' < h$, it changes its color to \texttt{color}$_h^i$ without any movement. 
    For any other cases, $r$ does nothing.

The termination occurs when for every $1\leq h \leq H$, $door_h$ has a \texttt{FINISH}-colored robot or all the $1$ hop neighbours of the door are occupied with \texttt{FINISH}-colored robots.

\noindent \textbf{Analysis of the \textsc{\textmd{Tree\_MultiDoor}} Algorithm:}
In this section, we analyse the time complexity and the correctness of the algorithm \textsc{Tree\_MultiDoor}. 
The concept of the chain and the head of the chain remain the same as Section \ref{subsec:tree-singledoor}.
Notice that if the graph $G$ has $H$ distinct doors, robots form at most $H$ different chains originating from the distinct doors.

\begin{lemma}
    \label{lemma:time-complexity-tree-multidoor}
    The algorithm terminates in $O(|E|)$ epochs.
\end{lemma}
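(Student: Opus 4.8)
The plan is to reduce the multi-door analysis to the per-chain amortized bound already proved for the single-door case, and then account separately for the extra delays that inter-chain contention introduces. First I would use the observation (stated just above this lemma) that the robots form at most $H$ pairwise-disjoint chains, one per door, and that within a single chain the colors $\{\texttt{color}_h^0, \texttt{color}_h^1, \texttt{color}_h^2\}$ play exactly the role of $\{\texttt{OFF-0}, \texttt{OFF-1}, \texttt{OFF-2}\}$ from \textsc{Tree\_SingleDoor}. Hence the entire short-wait/long-wait machinery of Lemma \ref{lem:time_asyncSS} transfers verbatim to each chain in isolation: if a head explores $k$ new edges, then, disregarding stalls caused by robots of other chains, it does so in $O(k)$ epochs. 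Summing over all heads of all chains gives $O(|E|)$, since the chains cover disjoint vertex sets (so each edge is newly explored only a bounded number of times, $\sum_i k_i = O(|E|)$ in a tree), and the additive $O(N)$ term for the number $N$ of \texttt{FINISH}-colored robots is again absorbed because $N = O(|V|) = O(|E|)$.

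Second, I would isolate every mechanism by which a head of one chain can be blocked by a robot of a different door, and bound the total such delay by $O(|E|)$. Inspecting Cases B.1--B.3, these are exactly: (i) a head $r$ from $door_h$ that detects a robot $r'$ from a lower-ID door $door_{h'}$ ($h'<h$) at $u_{nbr}^1$ or $u_{nbr}^3$ and therefore yields the edge $e(u, u_{nbr}^1)$ to $r'$; and (ii) the \texttt{move?} handshake, in which a would-be mover switches to \texttt{move?}$_h^i$, waits one activation for confirmation, and then either commits by moving (recoloring to \texttt{color}$_h^i$) or reverts to \texttt{color}$_h^i$. For mechanism (i), the strict door hierarchy guarantees that the lower-ID robot $r'$ never defers to $r$, so $r'$ proceeds within amortized-constant epochs by the first part and covers the contested edge; I would charge the blocking time of $r$ to that edge. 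Since each edge is covered exactly once, the total blocking time over all such deferrals is $O(|E|)$.

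For the handshake (ii), the key step is to argue that each \texttt{move?} episode terminates in $O(1)$ epochs and can be charged to the covering of a distinct edge. A robot enters \texttt{move?}$_h^i$ only when the contested neighbour $u_{nbr}^1$ is momentarily unoccupied while a higher-ID robot sits at $u_{nbr}^2$; after one activation it either finds $u_{nbr}^1$ still free (and moves, covering $e(u,u_{nbr}^1)$) or finds it now occupied (and reverts). In the reverting case the contention is resolved permanently: $u_{nbr}^1$ is occupied, and the graph being a tree together with the analogue of the no-return property (Lemma \ref{lem:not_visit_old_edges}) prevents the robot from re-entering the same configuration, so no robot can oscillate into \texttt{move?} over the same vertex more than $O(1)$ times. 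Each handshake is thus charged to the single edge incident to $u_{nbr}^1$ that gets covered when it resolves, again summing to $O(|E|)$.

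Combining the two contributions, the total running time is the per-chain exploration cost $O(|E|)$ plus the aggregate inter-chain contention delay $O(|E|)$, giving $O(|E|)$ epochs. I expect the main obstacle to be the handshake analysis: one must rule out a livelock in which two robots repeatedly signal \texttt{move?}, defer to one another, and revert without progress. The resolution should follow from the strictness and acyclicity of the door ordering (a deferral chain always terminates at the unique lowest-ID contender) combined with the one-way movement in a tree, which together force every contention over a fixed target vertex to be decided in favour of a single robot within a bounded number of activations.
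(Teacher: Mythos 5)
Your overall architecture (per-chain amortized exploration via Lemma~\ref{lem:time_asyncSS}, plus a separate bound on inter-chain contention) is close in spirit to the paper's proof, but your treatment of contention rests on a false premise about the algorithm. You claim that ``the strict door hierarchy guarantees that the lower-ID robot $r'$ never defers to $r$.'' This is not how the algorithm works: in the asynchrony-critical case (Fig.~\ref{fig:async_critical_multiDoor}, and the corresponding \textsc{ElseIf} sub-cases of Cases B.1 and B.2), a robot $r$ from the \emph{lower}-ID door explicitly remains stationary and yields the contested vertex to a \emph{higher}-ID robot $r'$ that is closer to it, precisely because under $\mathcal{ASYNC}$ the closer robot may already have committed to the move and proceeding would risk a collision. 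Deference is therefore governed by proximity first, with door IDs only breaking symmetric ties; the winner of a contention need not be the lower-ID robot. This undermines both your charging argument for mechanism (i) (you charge the blocked head's waiting time to an edge that you assume the lower-ID robot will cover) and your dismissal of the livelock concern at the end, which you resolve by appealing to ``the strictness and acyclicity of the door ordering'' --- exactly the property the algorithm does not provide. You also have the handshake roles inverted: it is the \emph{higher}-ID robot that switches to \texttt{move?}$_h^i$ upon seeing a lower-ID robot at $u_{nbr}^2$, not the other way around.

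The paper's proof avoids this trap by not deciding in advance which robot wins: it performs a case analysis on the position and color of the encountered robot $r_{h'}$ (at a $v_{nbr}^2$, or two hops from $v_{nbr}^1$; colored \texttt{color} or \texttt{move?}) and shows that in every case \emph{one of the two} robots, whichever it may be, becomes eligible to move within $2$ epochs. From this pairwise resolution it concludes that at any time instant there exists a chain whose head explores $k$ new edges in $O(k)$ epochs, and the wall-clock bound $O(|E|)$ follows by summing exploration budgets. Note that this ``at any time some chain progresses'' formulation is also what makes the final accounting sound: your plan of summing per-robot blocking delays does not directly bound wall-clock time, since several heads may be blocked during the same (overlapping) interval, and the length of a single blocking interval is not $O(1)$ but only amortized-constant from the winning chain's perspective. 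To repair your proof you would need to (a) replace the ``lower-ID always wins'' premise by the paper's two-epoch pairwise resolution claim, and (b) convert your edge-charging into the statement that every epoch is covered by some chain's amortized exploration budget.
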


\begin{proof}
    The claim mentioned in Lemma \ref{lem:time_asyncSS} holds true for the head of any chain of robots till it encounters any other robots from another door situated on a vertex along its minimum eligible port. 
    The algorithmic steps are different when robots from two different doors meet within their visibility range. 
    Without loss of generality, let us assume that the current head $r_h$ of the chain originated from $door_h$ is at the vertex $v$ and encounters another robot $r_{h'}$ from $door_{h'}$ along its minimum eligible port $p_v^{min}$. 
    Let us also denote the neighbour of $v$ along the port $p_{v}^{min}$ by $v_{nbr}^1$.
    By the definition of the eligible port, the robot $r_{h'}$ cannot be positioned at the vertex $v_{nbr}^1$.
    Consider a situation where $r_{h'}$ is located at a vertex $1$ hop away from $v_{nbr}^1$ (i.e., at a $v_{nbr}^2$). 
    If $r_h.color = \texttt{color}_h^i$ and $r_{h'}.color = \texttt{color}_{h'}^{i'}$ for some $i, i' \in \{0,1,2\}$, the robot $r_h$ is eligible to move forward when $h < h'$, otherwise $r_{h'}$.
    If $r_h.color = \texttt{color}_h^i$ and $r_{h'}.color = \texttt{move?}_{h'}^{i'}$ for some $i, i' \in \{0,1,2\}$, the robot $r_h$ waits for one epoch for the robot $r_{h'}$ to let it change its color to $\texttt{color}_{h'}^{i'}$. 
    However if $r_h.color = \texttt{move?}_{h}^{i}$, for some $i \in \{0,1,2\}$, the robot $r_h$ either switches to $\texttt{color}_{h}^{i}$ without any movement or moves to the vertex $v_{nbr}^1$ with the color $\texttt{color}_{h}^{i}$.
    Consider another situation where $r_{h'}$ is located at $2$ hops away from $v_{nbr}^1$.
    If $r_{h'}.color = \texttt{move?}_{h'}^{i'}$, the robot $r_{h'}$ is eligible to move forward toward $v_{nbr}^1$.
    If instead, $r_{h'}.color \neq \texttt{move?}_{h'}^{i'}$, in just $2$ epochs, $r_h$ is eligible to move to $v_{nbr}^1$ when $h > h'$; else it is $r_{h'}$.
    In both situations, we can conclude that the robot $r_h$ or $r_{h'}$ is eligible to move forward in just $2$ epochs.
    The above argument, together with the claim stated in Lemma \ref{lem:time_asyncSS}, establishes the following claim, which is a slight modification of that claim. 
    At any time instance, there exists a chain of robots on which a robot requires $O(k)$ epochs to explore $k$ new edges starting from the moment it becomes the head of the current chain until termination.
    Therefore, by Lemma \ref{lem:time_asyncSS}, the total number of epochs required to reach termination is $O(|E|)$ epochs.  \qed
\end{proof}

\begin{theorem}
\label{thm:coll_tree_MS}
    Algorithm \textsc{Tree\_MultiDoor} fills an MVC of a tree having $H$ doors with \texttt{FINISH}-colored robots having $4$ hop visibility and $O(H)$ colors in $O(|E|)$ epochs under $\mathcal{ASYNC}$ and without collision. 
\end{theorem}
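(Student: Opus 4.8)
The plan is to establish Theorem~\ref{thm:coll_tree_MS} by verifying the four claimed guarantees separately: time complexity, feasibility (covering), minimality, and collision-freeness, reusing as much of the single-door machinery as possible. The time bound $O(|E|)$ is already secured by Lemma~\ref{lemma:time-complexity-tree-multidoor}, and the color count $O(H)$ and visibility range $4$ are immediate from the algorithm's specification (three \texttt{color} shades plus three \texttt{move?} shades per door, together with \texttt{FINISH}, and the $4$-hop predicates used in Cases~B.1--B.3). Remark~\ref{remark:impossbility-on-color-multidoor} certifies that $\Omega(H)$ colors are necessary, so $O(H)$ is also optimal. Thus the substantive work is to port the \emph{feasibility}, \emph{minimality}, and \emph{collision-freeness} arguments of Lemmas~\ref{lem:feasibility_asyncSS}, \ref{lem:optimal_tree}, and Remark~\ref{rem:collision-free} to the multi-door setting.

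First I would re-establish the structural invariants per chain. Each door $h$ maintains its own chain colored from $\{\texttt{color}_h^0,\texttt{color}_h^1,\texttt{color}_h^2\}$, so Lemmas~\ref{lem:lie_same_subtree}--\ref{lemma:max_min_distance-btw-robots} and Remarks~\ref{remark:follow_the_predecessor_tree_singledoor}--\ref{remark:chain_of_robots_tree} carry over verbatim \emph{within} a single chain, since the follower-detection logic (color $\texttt{color}_h^{i+1}$ at $2$ hops) and the eligible-port definition restricted to same-door robots mirror the single-door algorithm. Condition~(iii) of the multi-door eligible port, combined with the no-backtracking property (which still holds because $G$ is a tree and a robot never takes the port toward its same-door follower), guarantees that a chain never self-intersects; I would record this as a lemma. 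The genuinely new invariant to prove is that \emph{distinct chains remain vertex-disjoint and never collide}, and this is where the door-hierarchy and the \texttt{move?} handshake come into play.

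The main obstacle is the inter-door collision analysis, precisely the critical $\mathcal{ASYNC}$ scenario described before the algorithm (Fig.~\ref{fig:async_critical_multiDoor}). I would argue that whenever two heads $r_h,r_{h'}$ from different doors contend for a common target vertex $u_{nbr}^1$, the case split in Cases~B.1/B.2 together with the \texttt{move?} protocol of Case~B.3 enforces a consistent tie-break. The key claim is: at any instant at most one robot has both selected and committed to moving onto a given vertex. When both contenders are within mutual $4$-hop view, the lower-ID door wins deterministically (the higher-ID robot either waits via the first \textsc{If} of B.1/B.2 when the rival is at $3$ hops, or yields via the $u_{nbr}^1$-occupancy checks). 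The subtle case is when $r'$ enters $r$'s view only after $r$ has already targeted $u$; here the \texttt{move?} state lets $r'$ \emph{request} the move and retract it upon seeing $u$ become occupied, so no two robots ever land on the same cell. I would prove this by a short case analysis over the relative distances ($1$, $2$, or $3$ hops from the contested vertex) and the two possible activation orders the adversary can impose, showing each branch resolves without collision; I expect this enumeration, not any single deep idea, to be the bulk of the proof.

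Finally, for feasibility and minimality I would adapt Lemmas~\ref{lem:feasibility_asyncSS} and~\ref{lem:optimal_tree}. Feasibility follows as before: the termination condition forces every door and its neighborhood to be \texttt{FINISH}-covered, and any allegedly uncovered edge $e(u,v)$ yields a contradiction by examining the last neighbor of $u$ to terminate, which must have witnessed a \texttt{FINISH} robot on $u$ or $v$. Minimality is the more delicate part in the multi-door case, because the single-door proof exploited the fact that every robot settling at $v$ arrived through a unique parent $v^{prev}$ whose far neighborhood was fully \texttt{FINISH}-occupied; with several chains this must be re-checked, but the eligible-port definition still requires an \emph{unoccupied} $v_{nbr}^2$ before any settling decision, and the hierarchy guarantees that when a robot switches to \texttt{FINISH} all ineligible ports are ineligible \emph{permanently} (no later robot can reach the once-free vertex, as it would have to cross the now-occupied $v$). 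Hence the matching argument of Lemma~\ref{lem:optimal_tree} that pairs each \texttt{FINISH} robot with a privately covered edge $e_r$ goes through, establishing minimality (though not necessarily MinVC optimality, which is why the theorem only claims MVC). Assembling the four parts yields the theorem. \qed
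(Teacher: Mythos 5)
Your overall decomposition (per-chain structural lemmas carried over from Section~\ref{subsec:tree-singledoor}, time via Lemma~\ref{lemma:time-complexity-tree-multidoor}, feasibility via the chain/path structure, color count, and collision-freeness via a case analysis on the door hierarchy plus the \texttt{move?} handshake) coincides with the paper's proof, which indeed spends most of its effort on exactly the collision case analysis you outline: it splits on where the lower-ID-door robot is at the moment the higher-ID-door robot commits (already at or moving toward the adjacent vertex, versus still two hops back), and uses the \texttt{move?} request/retract to resolve the asynchronous race. Up to that point your plan is faithful to the paper.

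The genuine gap is in your minimality argument. You transplant the private-edge/matching argument of Lemma~\ref{lem:optimal_tree}, justified by two claims: that once a robot switches to \texttt{FINISH} its ineligible ports are ineligible \emph{permanently}, and that ``no later robot can reach the once-free vertex, as it would have to cross the now-occupied $v$.'' Both are single-door facts that fail for $H>1$: a port can be ineligible merely because two \emph{active} robots of a foreign chain are currently visible (condition~(iii) of the eligible-port definition), which is transient since those robots move on; and a robot of another chain can reach the once-free $2$-hop neighbour along the tree path from \emph{its own} door, which need not pass through $v$. Moreover, the full matching argument provably cannot go through in the multi-door setting: on a $5$-vertex path $d_1, a, b, c, d_2$ with doors at $d_1$ and $d_2$, the algorithm terminates with three \texttt{FINISH}-colored robots (at $d_1$, $b$, $d_2$), yet a $4$-edge path has no matching of size $3$; if the matching argument were valid it would certify MinVC, contradicting that MinVC here is $2$. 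Your own parenthetical --- minimality but ``not necessarily MinVC'' --- is symptomatic of this inconsistency, since a valid matching argument always yields MinVC. What can survive is only the weaker statement that each \texttt{FINISH} robot retains one privately covered edge, and establishing that with several chains requires a genuine inter-chain analysis (which vertices a foreign chain is permitted to settle on, in the spirit of the minimality cases of Theorem~\ref{theorem:Graph_MultiDoor}); the paper instead adapts the direct contradiction argument of Lemma~\ref{lem:feasibility_asyncSS}, not Lemma~\ref{lem:optimal_tree}.
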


\begin{proof}
    Similarly to Lemma \ref{lem:chain_like_structure}, we can prove that all robots that are not with color \texttt{FINISH} and left from the same door lie on a path. In other words, all robots with color from the set $\{\texttt{color}_h^i, \texttt{move?}_h^i~ | 0\leq i \leq 2\}$ lie on a path starting from the door with the $h$-th lowest ID, for some $1 \leq h \leq H$. This result validates the feasibility of the solution.
    Lemma \ref{lemma:time-complexity-tree-multidoor} The proof of time complexity and minimality can be proved by similar arguments as presented in Lemma \ref{lem:time_asyncSS} and Lemma \ref{lem:feasibility_asyncSS}. 

    The algorithm uses $6H+1 = O(H)$ colors in total: \texttt{color}$_h^0$, \texttt{color}$_h^1$, \texttt{color}$_h^2$, \texttt{move?}$_h^0$, \texttt{move?}$_h^1$, \texttt{move?}$_h^2$ and \texttt{FINISH}, where $1\leq h \leq H$.
    
    We now prove that the movements of the robots are free from collisions. 
    On the contrary, let two robots meet a collision at the vertex $z$. 
    If both of them enter the graph through the same door, they do not collide by Remark \ref{rem:collision-free}.
    Otherwise, let $r_h$ and $r_{h'}$ with $h \neq h'$ be the two robots with colors \texttt{color}$_h^i$ and \texttt{color}$_{h'}^{i'}$, respectively collide at the vertex $z$. 
    Let $x$ and $y$ be the previous positions of $r_h$ and $r_{h'}$, respectively, and let $w$ be the previous position of $r_{h'}$ before it lies on $y$.
    Without loss of generality, we assume that $h > h'$. Consider the time $t$ when the robot $r_h$ is at $x$ and decides to move to $z$. We identify the following two cases. (Case I) At $t$, $r_{h'}$ is at $y$ or on the edge $e(y, w)$. In this case, $r_h$ never moves to $z$ as $h > h'$. (Case II) At $t$, $r_{h'}$ is at $w$. Then, before $r_h$ moves to $z$, it first changes its color to \texttt{move?}$_h^i$. We further divide Case II into the following two sub-cases. 
    (Case IIa) $r_{h'}$ has already decided to move to $y$ from the vertex $w$. 
    After the completion of the movement, either $r_{h'}$ finds $r_h$ with the color \texttt{move?}$_h^i$ or \texttt{color}$_h^i$, or $r_h$ is on edge $e(x, z)$. 
    If $r_h$ is on the edge, $r_{h'}$ does not move to $z$. However, $r_{h'}$ sees $r_h$ on the vertex $x$, it means that $r_h$ has already decided not to move to $z$, so the movement of $r_{h'}$ to $z$ does not make the collision with $r_h$. 
    If $r_h$ is still with the color \texttt{move?}$_h^i$, $r_{h'}$ waits until $r_h$ updates to \texttt{color}$_h^i$. 
    (Case IIb) $r_{h'}$ has not decides to move at $y$ from $w$ at the time $t$. 
    When $r_{h'}$ is activated again after the time $t$, it observes $r_h$ at the vertex $x$ and chooses to remain stationary. 
    Hence, for all the cases, $r_h$ and $r_{h'}$ never collide. \qed
\end{proof}

\noindent{\textbf{Lower Bound on Time for Multidoor.}} 
Consider the graph as depicted in Fig. \ref{fig:lower_bound_multidoor}. The length of the path $P$ from $v$ to $door_1$ is $|E|-2H$.
All the edges on $P$ must be covered by the robots entered through $door_1$.
By Theorem \ref{thm:time_lowerBound}, any algorithm needs $\Omega(|E|-H)$ epochs to cover $P$.
This shows that our algorithm is time-optimal when there is a constant number of doors in the graph.

\begin{figure}[h]
     \centering
     \includegraphics[width=0.5\linewidth]{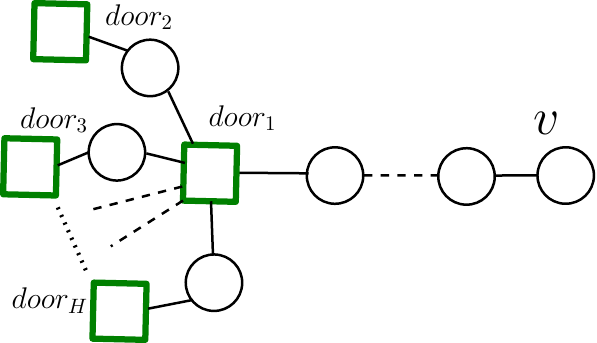}
     \caption{Robots from $door_1$ fill the vertex cover of the path from $door_1$ to $v$}
     \label{fig:lower_bound_multidoor}
 \end{figure}



\section{Algorithm for Filling MVC Vertices on General Graph}
\label{sec:general_graph}

The algorithms designed for trees do not extend directly to general graphs, as multiple paths can exist between two vertices, making it difficult for a robot to accurately determine which robot it is following or which one is following it.
Even if a robot has sufficient persistent memory, relying only on 2-hop visibility may lead to violating the minimality of the vertex cover. 
Robots maintain a \emph{predecessor-successor} relationship during the execution of the algorithm
In general graphs, we assume $3$ hop visibility and $O(\log \Delta)$ additional memory in general graph to store the ports toward the predecessor and the successor. 


\subsection{Algorithm (\textsc{\textmd{Graph\_SingleDoor}}) for Single Door}
\label{subsec:graph-singledoor}
Each robot initially has the color \texttt{OFF} and uses $4$ colors in total. 
We define two variables $r.pred$ and $r.succ$ for a robot $r$ situated at the vertex $v$.

\begin{definition}{\textbf{(Predecessor and Successor of $r$)}}
    $r.pred$ (resp. $r.succ$) is a tuple $(p_1, p_2)$ (resp. $(p'_1, p'_2)$), implying that there is a 2-hop between the predecessor (resp. successor) robot of $r$ situated at $u$ and $r$ situated at $v$ such that $p_1$ (resp. $p'_1$) is the port of the edge $e(v,v_{nbr}^{1})$ and $p_2$ (resp. $p'_2$) is the port of the edge $e(v_{nbr}^1, u)$, where $v_{nbr}^1$ is the neighbour of $v$ along $p_1$ (resp. $p'_1$), 
    see Fig. \ref{fig:predecessor_successor}. 
\end{definition}

\begin{wrapfigure}[7]{r}{0.4\linewidth}

     \centering
    \includegraphics[width=\linewidth]{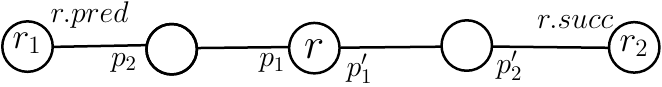}
     \caption{Predecessor and successor of $r$}
     \label{fig:predecessor_successor}
 \end{wrapfigure}
\noindent We misuse the notation $r.pred$ and $r.succ$ to denote the predecessor and successor robot of $r$ as well.
The computation of $r.pred$ is possible when the predecessor is exactly at $2$ hops away from $v$. Otherwise, $r$ can only store the first component of the tuple $r.pred$. In that case, $r.pred = (p_1, \bot)$.
$r$ can also have $r.pred = (\bot, \bot)$, indicating that $r$ does not have a predecessor, and we call such robot $r$ as \emph{head}.
$r$ also maintains a variable $r.avoid$, storing a port $p$ incident to its current position, and all ports $p' < p$ are ignored when computing eligible port.
Initially, $r$ at the door has no predecessor or successor, i.e., $r.pred = r.succ = (\bot, \bot)$ and $r.avoid = 0$.
We redefine the eligible port.

\begin{definition}{\textbf{(Eligible Port from $v$)}}
\label{def:eligible-general-single}
A port $p_v$ incident with $v$ is called an eligible port for $r$ positioned at $v$ if the following conditions hold.
(i) $p_v \neq p'_1$, where $r.succ = (p'_1, p'_2)$.
(ii) There exists at least one unoccupied vertex $2$ hops away from $v$ along the port $p_v$.
(iii) There is no robot $r'$ with $r'.color \neq $ \texttt{FINISH} present at most $3$ hops away from $v$ along $p_v$. 
\end{definition}

\begin{figure}[h]
    \centering
    \includegraphics[width=0.8\linewidth]{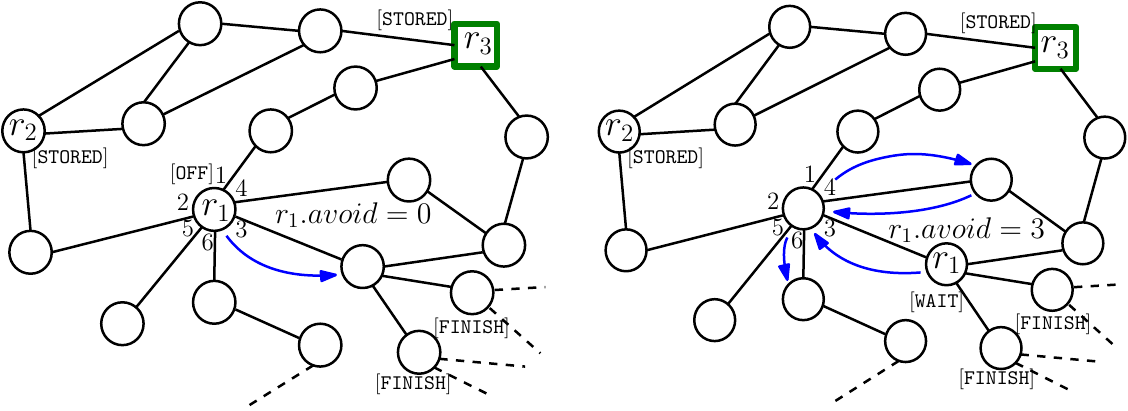}
    \caption{$r_1$ is the head and $r_1.succ = r_2$ and $r_2.succ = r_3$. (left) $r_1$ moves along $p_v^{min} = 3$ (right) $r_1$ backtracks multiple times due to presence of $r_3$ at $3$ hop neighbour}
     \label{fig:backtrack}
\end{figure}

\noindent We distinguish the following cases based on the position and current color of $r$.

\noindent $\blacktriangleright$~ \textbf{Case 1 ($r.color $ = \texttt{OFF} and $r$ is situated on the door):}
If $\exists p_v occupied$ $(v_{nbr}^1, p_v, \neq\texttt{FINISH})$, $r$ does nothing.
Elseif $\exists p_v \exists v_{nbr}^2 occupied(v_{nbr}^2, p_v, \neq\texttt{FINISH})$ $r$ stores $r.pred = (p_1, p_2)$, where $p_1 =p_v$ and $p_2$ is the port of $e(v_{nbr}^1, v_{nbr}^2)$. $r$ then changes its color to \texttt{STORED} and remains in place until the robot on $v_{nbr}^2$ moves further or changes its color to \texttt{FINISH}.
Otherwise, if $\exists p_v \exists v_{nbr}^2~ \neg occupied(v_{nbr}^2, p_v,$ $ \sim)$, $r$ computes the minimum eligible port $p_v^{min}$. 
It now sets its target $v_{nbr}^1$ along the port $p_v^{min}$ and stores $r.succ = (p'_1, \bot)$, where $p'_1$ is the port of the edge $e(v_{nbr}^1, v)$. Then, it moves to $v_{nbr}^1$ with the current color.
If such $p_v^{min}$ does not exist but $\exists p_v \neg occupied(v^1_{nbr}, p_v, \sim)$, $r$ changes its color to \texttt{FINISH}.

\noindent $\blacktriangleright$~ \textbf{Case 2 ($r.color =$ \texttt{OFF} or \texttt{WAIT} and $r$ is not on the door):}
    In this case, $r$ may or may not be the head.   
    So, we further divide this case into the following subcases.
    In case of $r$ being the head, and situated more than $1$ hop away from the door (Case 2.2), it might switch its color to \texttt{WAIT}.
    
    \begin{itemize}[left=0pt]
        \item \textbf{Case 2.1 ($r.pred = (\bot, \bot) \text{ and } r.succ=(p_1', \bot)$):} It means that $r$ is the head and $1$ hop away from the door along $p'_1$. 
        We slightly modify Definition \ref{def:eligible-general-single} for such a robot, where the computation of the minimum eligible port $p_v^{min}$ does not disregard a port $p_v$ even when a $3$ hop neighbour of $v$ along $p_v$ is occupied by (which is the door).
        If $p_v^{min}$ exists, $r$ updates $r.succ$ as $(p''_2, p'_1)$ and moves to the vertex $v_{nbr}^1$ along $p_v^{min}$ with the current color, where $p''_2$ is the port of the edge $e(v_{nbr}^1, v)$.
        Otherwise, $r$ sets its color \texttt{FINISH} without any movement.

        \item \textbf{Case 2.2 ($r.pred = (\bot, \bot) \text{ and } r.succ = (p_1', p_2')$):} 
        This indicates that the head $r$ is situated at a vertex at least $2$ hops away from the door. Let $w$ be the neighbour of $v$ along $p_1'$. 
        Based on the values of the local variables, we segregate the following sub-cases.

        \begin{itemize}
            \item \textbf{$r.color =$ \texttt{OFF}:} $r$ waits until $occupied(w_{nbr}^1, p_2', \texttt{STORED})$ is satisfied. 
            When the predicate becomes true, $r$ computes $p_v^{min}$. 
            If $p_v^{min}$ exists, it moves to $v_{nbr}^1$ along $p_v^{min}$ after updating $r.succ = (p''_2, p'_1)$ and $r.color =$ \texttt{WAIT}, where $p''_2$ is the port of the edge $e(v_{nbr}^1, v)$ (refer to Fig. \ref{fig:backtrack} (up)). If instead $p_v^{min}$ does not exist, it terminates after changing its color to \texttt{FINISH}.

            \item \textbf{$r.color =$ \texttt{WAIT} and $r.avoid = 0$:} In this case, $r$ might have to backtrack in the direction of its successor.
            $r$ calculates $p_v^{min}$ and if it exists, $r$ turns its color to \texttt{OFF} without any movement. 
            If instead, $p_v^{min}$ does not exist, and the predicate $\exists p  (\neg occupied(v_{nbr}^1, p , \sim) \land \forall v_{nbr}^2 occupied(v_{nbr}^2, p, $ $\texttt{FINISH}))$ holds true, $r$ switches its color to \texttt{FINISH} without any movement. 
            Otherwise, if $\exists ~occupied(v_{nbr}^3, p_v, col\}$ is in $col \in \{\texttt{OFF}, \texttt{STORED}\}$, $r$ moves to $v_{nbr}^1$ with the current colour and after updating $r.avoid = p^*$, where $v_{nbr}^1$ is the neighbour of $v$ along the port $p'_1$ and $p^*$ is the port of the edge $e(v_{nbr}^1, v)$ (see Fig. \ref{fig:backtrack} (down)).
            We call the movement toward $r.succ$ as backtracking.

            \item \textbf{$r.color =$ \texttt{WAIT} and $r.avoid = p^* > 0$:} $r$ first computes all the eligible ports $p > p^*$ and chooses the minimum among them as $p_{v}^{min}$.
            If no such port exists, it changes its color to \texttt{FINISH} without any movement. Else, it moves to $v_{nbr}^1$ along $p_v^{min}$ after updating $r.succ = (p''_2, p'_2)$, $r.avoid = 0$ and $r.color =$ \texttt{WAIT}, where $p''_2$ is the port of $e(v, v_{nbr}^1)$ incident to $v_{nbr}^1$.
        \end{itemize}
        
        \item \textbf{Case 2.3 ($r.pred = (p_1, \bot)$):}
        In this case, $r$ has partial information about the position of its predecessor.
        So, it needs to find the exact position of the predecessor.
        Let $u = v_{nbr}^1$, a $1$ hop neighbour of $v$ along the port $p_1$. 
        A robot $r'$ must exist with $r'.color \neq \texttt{FINISH}$  at a vertex $2$ hops away from $v$ along the port $p_1$.
        Let $p_2$ be the port of the edge $e(u, r')$, incident to $u$.
        $r$ updates $r.pred = (p_1, p_2)$ and changes its color to \texttt{STORED} without any movement.
        We prove in our analysis that there can not be two such robots $r'$ along $p_1$.   
    \end{itemize}

    \noindent $\blacktriangleright$~ \textbf{Case 3} ($r.color =$ \texttt{STORED}):
    Since the current color of $r$ is \texttt{STORED}, there exists $p_1, p_2$ such that $r.pred = (p_1, p_2)$.
    The value of $r.succ$ depends on its distance from the door.
    If $r$ is on the door, $r.succ = (\bot,\bot)$. 
    If $r$ is $1$ hop away, $r.succ = (p'_1, \bot)$ and if at least $2$ hops away, $r.succ = (p'_1, p'_2)$ for some ports $p'_1$ and $p'_2$.
    Let $w = v_{nbr}^1$ be the neighbour of $v$ along $p'_1$. 
    If $\exists w_{nbr}^1~ occupied(w_{nbr}^1, p'_2, \texttt{STORED})$ does not hold, $r$ remains stationary.
    Otherwise, regardless of the position of $r$, it considers the vertex $u$, a neighbour of $v$ along the port $p_1$.
    If the predicate $occupied(u_{nbr}^1, p_2, \neq \texttt{FINISH}) \lor occupied(u_{nbr}^2, p_2,  \texttt{WAIT})$ holds, $r$ maintains the status quo.
    If $occupied(u_{nbr}^1, p_2, \texttt{FINISH})$ holds, it updates $r.pred = (\bot, \bot)$ and $r.color=\texttt{OFF}$, thus becoming the new head.
    However, if $\neg occupied(u_{nbr}^1, p_2, \sim)$ holds but $occupied(u_{nbr}^2, p_2, \texttt{WAIT})$ does not hold, $r$ sets $u$ as the target.  
    Finally, $r$ moves to $u$ after updating $r.color = \texttt{OFF}, r.pred = (p_2, \bot)$ and $r.succ =(p,\bot) \text{ or } (p, p_1')$, depending on whether it is on the door or not, where $p$ is the port of the edge $e(u, v)$.
    The termination condition is the same as earlier.

\noindent \textbf{Execution Example of the Algorithm \textsc{\textmd{Graph\_SingleDoor}}}
This section illustrates the execution of the algorithm \textsc{Graph\_SingleDoor}, which requires the robots to have $3$ hops visibility, $4$ colors and $O(\log \Delta)$ memory. 
The algorithm achieves MVC on a general graph with a single door. 
Fig. \ref{fig:ex-general-1} to \ref{fig:ex-general-12} depicts the process of the algorithm \textsc{Graph\_SingleDoor}. 
We keep the captions of the following figures self-explanatory.

\begin{figure}

\centering
\begin{minipage}[b]{0.48\linewidth}
\centering
     \includegraphics[width=0.9\linewidth]{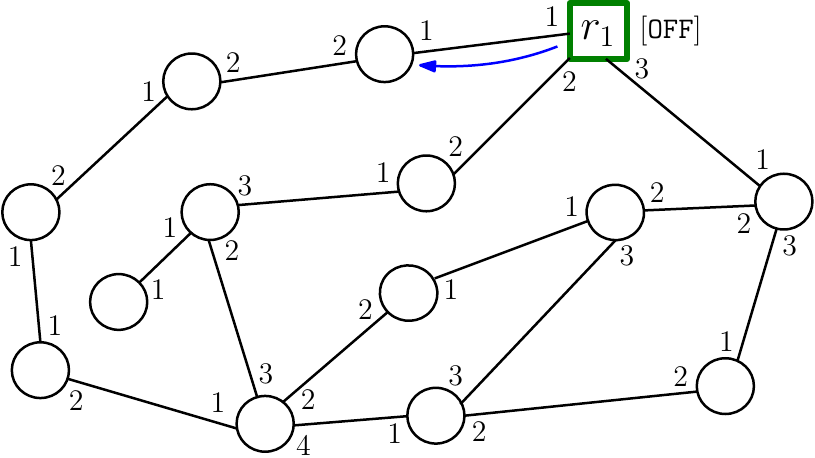}
     \caption{Initially, the robot $r_1$ is on the door with the color \texttt{OFF} and targets a neighbouring vertex along the minimum eligible port $1$.}
     \label{fig:ex-general-1}
\end{minipage}
\hfill
\begin{minipage}[b]{0.48\linewidth}
\centering
         \includegraphics[width=0.9\linewidth]{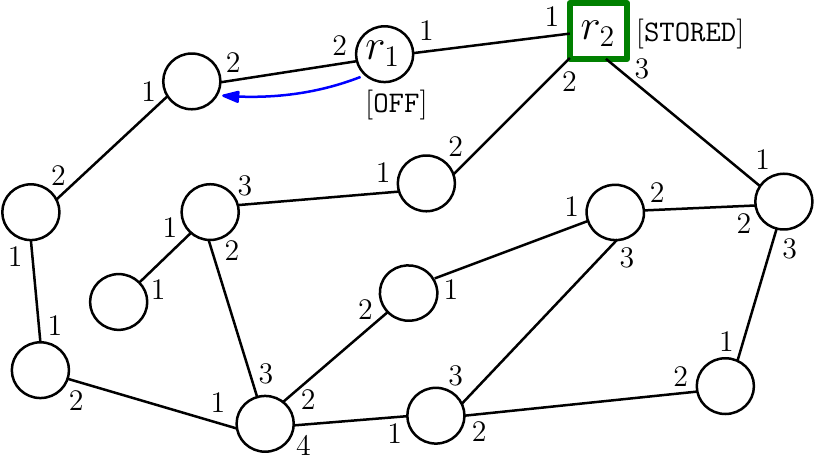}
         \caption{The head $r_1$ finds $r_2$ as its successor with \texttt{STORED} and moves ahead similarly by calculating minimum eligible port.}
         \label{fig:ex-general-2}
\end{minipage}
\begin{minipage}[b]{0.48\linewidth}
\centering
     \includegraphics[width=0.9\linewidth]{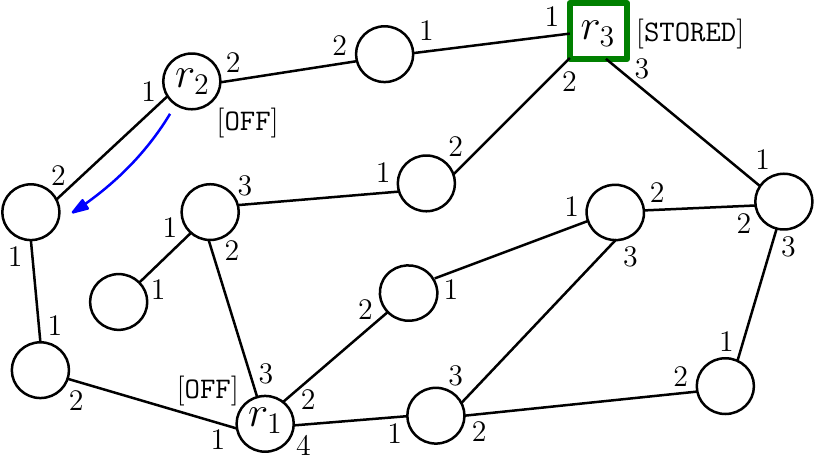}
     \caption{$r_2$ finds its successor $r_3$ with \texttt{STORED} and moves in the direction of its predecessor $r_1$ while $r_1$ waits till $r_2$ arrives at a $2$ hop neighbour.}
     \label{fig:ex-general-3}
\end{minipage}
\hfill
\begin{minipage}[b]{0.48\linewidth}
\centering
         \includegraphics[width=0.9\linewidth]{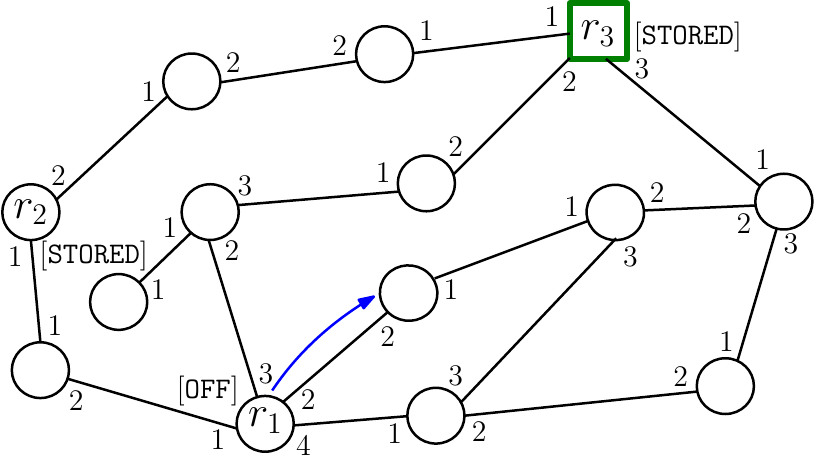}
         \caption{$r_2$ sitting at a $2$ hop neighbour of $r_1$ stores its ports corresponding to $r_1$ in $r_2.pred$ while $r_1$ selects a vertex as target along port $2$.}
         \label{fig:ex-general-4}
\end{minipage}

\begin{minipage}[b]{0.48\linewidth}
\centering
     \vspace{1.5em}
     \includegraphics[width=0.9\linewidth]{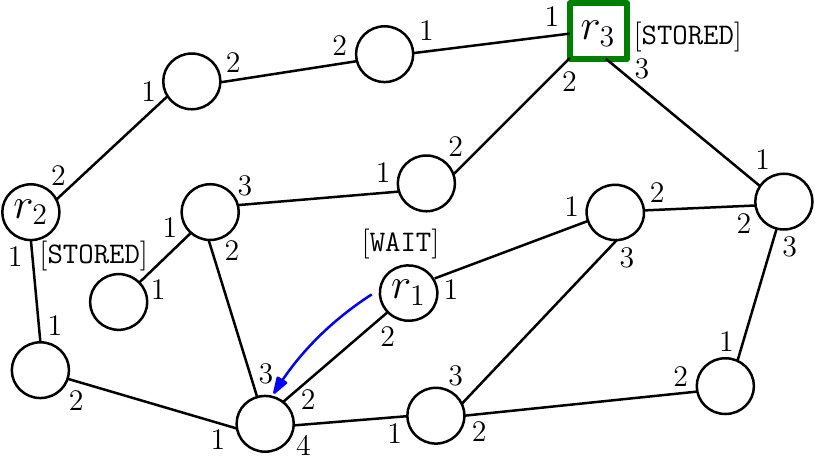}
     \caption{$r_1$ finds $r_3$ at a $3$ hop neighbour with color $\neq \texttt{FINISH}$ and backtracks to its previous position and stores the port $2$ in $r.avoid$.}
     \label{fig:ex-general-5}
\end{minipage}
\hfill
\begin{minipage}[b]{0.48\linewidth}
\centering
         \includegraphics[width=0.9\linewidth]{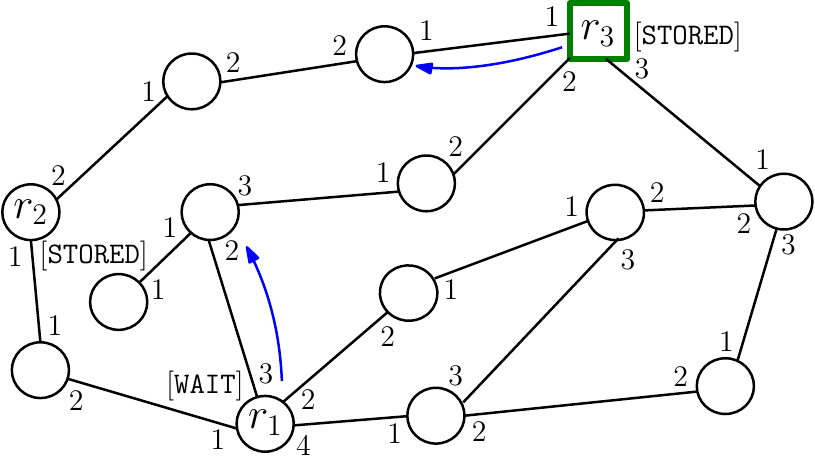}
         \caption{$r_1$ discards all the ports $\leq r.avoid$ and moves to the neighbour along the port $3$ while $r_2$ maintains status quo seeing $r_1$ with \texttt{WAIT}.}
         \label{fig:ex-general-6}
\end{minipage}

\end{figure}

\begin{figure}

\begin{minipage}[b]{0.48\linewidth}
\centering
     \includegraphics[width=0.9\linewidth]{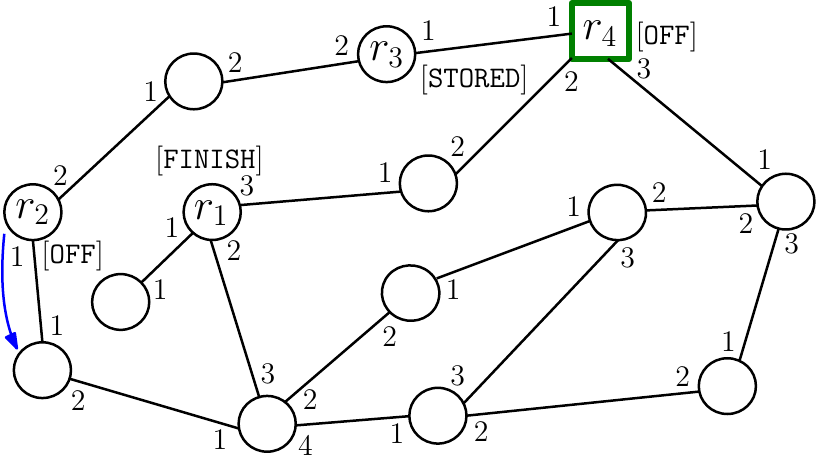}
     \caption{$r_1$ finds no eligible port but sees an unoccupied neighbour along the port $1$ with no other $2$ hop neighbour in that direction. So, it switches to \texttt{FINISH} and its successor $r_2$ becomes the new head.}
     \label{fig:ex-general-7}
\end{minipage}
\hfill
\begin{minipage}[b]{0.48\linewidth}
\centering
         \includegraphics[width=0.9\linewidth]{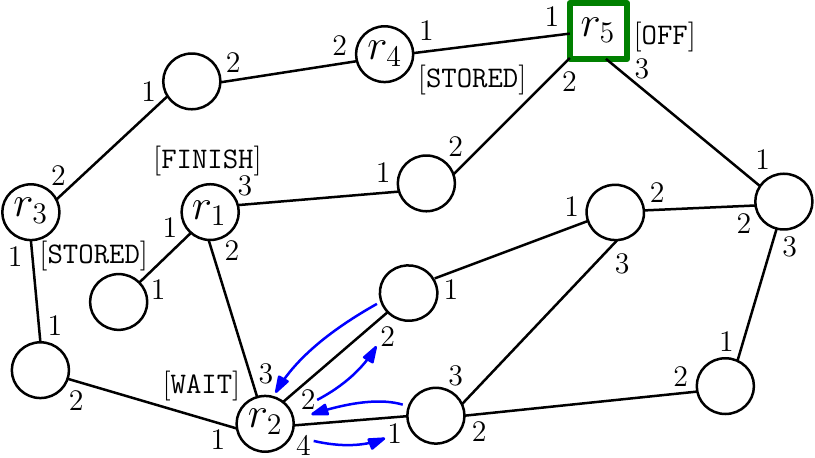}
    \caption{$r_2$ first moves along port $2$ with the color \texttt{WAIT} and returns back after seeing $r_5$ with the color $\neq \texttt{FINISH}$. Similarly, it also moves along $4$ and comes back with $r.avoid = 4$, while $r_3$ remains stationary. }
         \label{fig:ex-general-8}
\end{minipage}

\begin{minipage}[b]{0.48\linewidth}
\centering
     
     \includegraphics[width=0.9\linewidth]{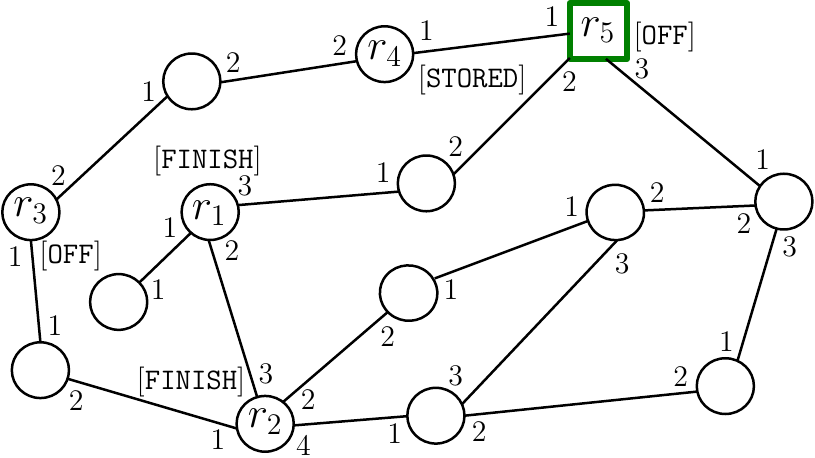}
     \caption{$r_2$ turns its color to \texttt{FINISH} as no eligible port $> r.avoid = 4$ exists. So, $r_3$ becomes the new head by changing its color to \texttt{OFF} from \texttt{STORED}.}
     \label{fig:ex-general-9}
\end{minipage}
\hfill
\begin{minipage}[b]{0.48\linewidth}
\centering
         \includegraphics[width=0.9\linewidth]{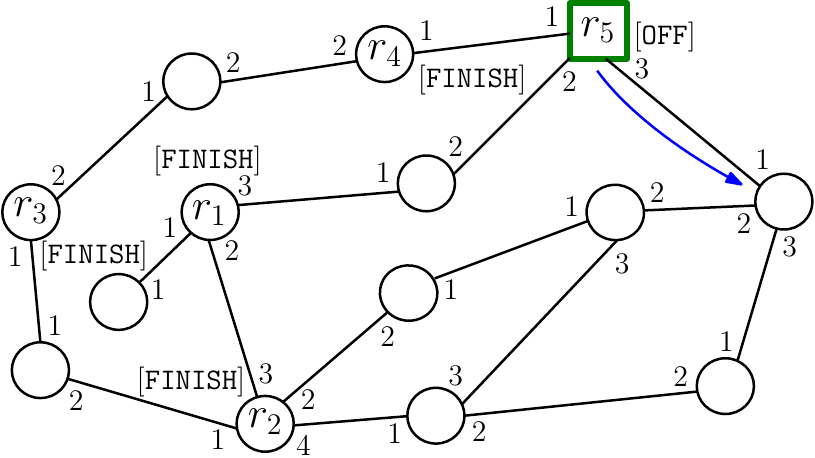}
         \caption{Both $r_3$ and $r_4$ change their color to \texttt{FINISH} after finding no eligible port from their respective positions. $r_5$ at the door becomes the new head and moves along the port $3$.}
         \label{fig:ex-general-10}
\end{minipage}

\begin{minipage}[b]{0.48\linewidth}
\centering
     \vspace{1.5em}
     \includegraphics[width=0.9\linewidth]{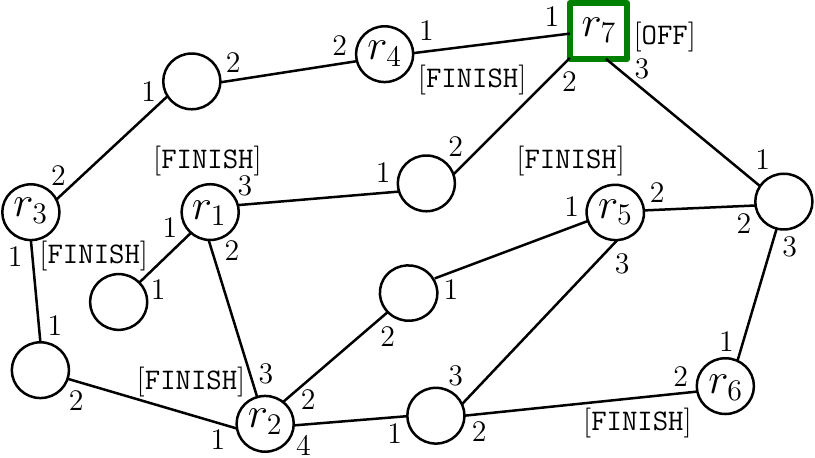}
     \caption{Similarly as earlier, $r_5$ and $r_6$ settle at proper vertices of the graph. Afterwards, $r_7$ becomes the new head sitting at the door.}
     \label{fig:ex-general-11}
\end{minipage}
\hfill
\begin{minipage}[b]{0.48\linewidth}
\centering
         \includegraphics[width=0.9\linewidth]{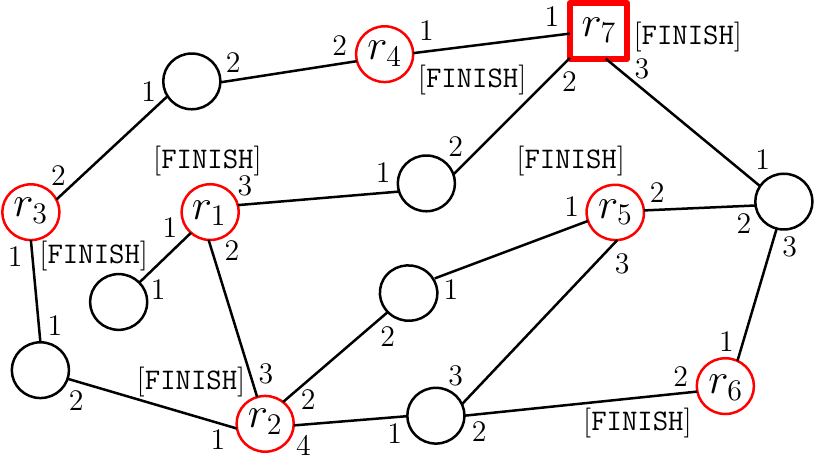}
         \caption{$r_7$ finds no eligible port to move to but has an unoccupied neighbour of the door. So, it switches to \texttt{FINISH} and the algorithm terminates.}
         \label{fig:ex-general-12}
\end{minipage}
\end{figure}

\noindent \textbf{Deferred Analysis of \textsc{\textmd{Graph\_SingleDoor}} Algorithm:}
Here, we analyse the correctness and time complexity of the algorithm \textsc{Graph\_SingleDoor} with detailed proofs. Several arguments closely parallel those presented in Section \ref{subsec:tree-singledoor}.
We first show that each robot maintains a predecessor and successor relationship before termination (Lemma \ref{lem:unique_pred} and \ref{lem:unique-succ}). 
Once established, we can restate the following results, as previously mentioned in Section \ref{subsec:tree-singledoor}.

\begin{itemize}
    \item Robots with color $\neq$\texttt{FINISH} collectively form a chain-like structure during the exploration of the graph. (Similar arguments as in Lemma \ref{lem:chain_like_structure}).
    
    \item The head of the chain terminates before other members of the chain. 
    Upon its termination, its immediate successor becomes the new head of the chain (Remark \ref{remark:chain_of_robots_tree}) and the rest of the robots continue to be the members of the current chain. 
    
    \item Moreover, any two consecutive robots (except the robot on the door) in the chain maintain a distance of at most $3$ hops and at least $2$ hops from each other (Similar as Lemma \ref{lemma:max_min_distance-btw-robots}).
\end{itemize}

\begin{figure}[h]
        \centering
        \includegraphics[width=0.5\linewidth]{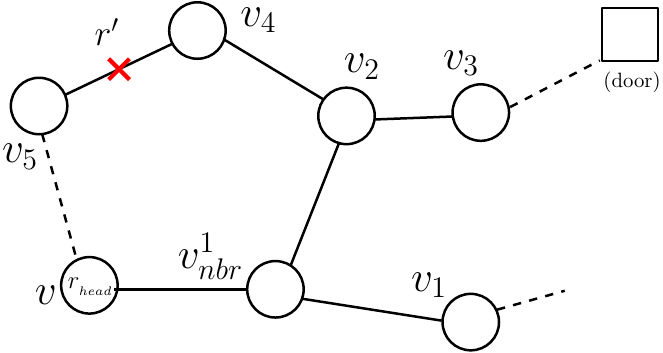}
        \caption{When $r'$ is on the edge $e(v_4, v_5)$, the follower of $r'$, say $r''$, must be at $v_3$, So, $r_h$ can see $r''$ and will not move to $v_{nbr}$}
        \label{fig:correctly_find_pred}
    \end{figure}

\begin{lemma}
    \label{lem:unique_pred}
    For any robot $r$ positioned at vertex $v$, if $r.pred = (p_1, \bot)$, then there exists exactly one robot, with color \texttt{OFF} or \texttt{STORED} located at $2$ hops away from $v$ along the port $p_1$, i.e.,  $r$ can correctly determine the port $p_2$ such that $r.pred$ lies exactly at $2$ hops away from $v$ along the port $p_1$ and then $p_2$.
\end{lemma}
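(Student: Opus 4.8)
The plan is to prove the statement in two halves, existence and uniqueness, treating uniqueness as the substantive part and existence as careful bookkeeping. I would first track how the state $r.pred=(p_1,\bot)$ is created: it is set only in Case~3, immediately after $r$ follows its predecessor one hop to the former intermediate vertex $u=v_{nbr}^1$ along $p_1$. At that instant the old predecessor vertex (the $1$-hop neighbour of $v$ along $p_1$) has just been vacated, and since every move is a single hop, the predecessor has advanced to a neighbour of that vertex, i.e.\ to a vertex exactly $2$ hops from $v$ along $p_1$. The key observation for existence is that the predecessor cannot advance \emph{away} from $r$ while $r$ is in the state $(\texttt{OFF},\,r.pred=(p_1,\bot))$: by inspection of Case~3 and Case~2.2, a robot moves toward its own predecessor only when it detects its successor $r$ at $2$ hops \emph{and} coloured \texttt{STORED} (the guard $occupied(w_{nbr}^1,p'_2,\texttt{STORED})$), whereas $r$ carries colour \texttt{OFF} throughout this interval. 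Hence the predecessor stays frozen at distance $2$ along $p_1$, in a stable \texttt{OFF}/\texttt{STORED} state, until $r$ resolves $p_2$ and returns to \texttt{STORED}. Together with the chain structure (the analogues of Lemmas~\ref{lem:chain_like_structure} and~\ref{lemma:max_min_distance-btw-robots}), this frozen robot is precisely the unique predecessor $r$ is following, which establishes existence and justifies the unconditional assertion made in Case~2.3.

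Next I would prove uniqueness by contradiction. Suppose a second robot $r''$ with colour \texttt{OFF} or \texttt{STORED} also lies $2$ hops from $v$ along $p_1$. Since ``along $p_1$'' fixes the first hop, both the true predecessor $r'$ and $r''$ are distinct neighbours of the common vertex $u$, and (as $r',v$ are the two $P$-neighbours of $u$ on the path $P$ traced by the chain) $r''$ must be adjacent to $u$ through a \emph{chord} rather than through $P$. Every active robot lies on the simple, non-self-crossing path $P$, so $r''$ occupies a vertex $z$ of $P$ joined to $u$ by a non-$P$ edge. The idea is to rewind the execution to the chain-extension move that last installed the edge of $P$ incident to $u$ on the side toward $z$, and to invoke clause~(iii) of the eligible-port definition: when a robot extends the chain across a port it verifies, using $3$-hop visibility, that no robot of colour $\neq\texttt{FINISH}$ lies within $3$ hops along that port. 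Because $z$ is adjacent to $u$, the vertex $z$ --- or, in the sliding configuration of Fig.~\ref{fig:correctly_find_pred}, the mandatory follower of $r''$ that must then sit one hop behind $z$ --- falls inside this $3$-hop window; had that robot already been active there, the port would have been declared ineligible and the move declined, so $P$ could never have been routed to place two active robots adjacent to $u$.

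The delicate point, which I expect to be the main obstacle, is the asynchronous bookkeeping of \emph{when} $r''$ becomes active at $z$ relative to when $P$ is routed through $u$: a naive argument fails because a trailing robot can arrive at $z$ long after the head has passed. To close this I would argue on the \emph{last} among the three critical moves --- creating $e(v,u)$, creating the edge toward $r'$, and the chordal arrival of an active robot at $z$ --- and show that in every ordering one of these moves is executed while an active robot (or its mandatory follower, positioned exactly as in Fig.~\ref{fig:correctly_find_pred}) already occupies a vertex within $3$ hops along the chosen port, contradicting clause~(iii). This is precisely the configuration that $2$-hop visibility would miss but $3$-hop visibility detects, which is why the general-graph algorithm raises the visibility range; the figure exhibits the tightest such case, where $r''$ (the follower of $r'$) sits at $v_3$ exactly $3$ hops from the robot $r_h$ attempting the move.

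Finally, once uniqueness is secured, determining $p_2$ is immediate: $r$ reads the unique active robot $r'$ at $2$ hops along $p_1$, and $p_2$ is unambiguously the port of the edge $e(u,r')$ incident to $u$, so $r$ may safely set $r.pred=(p_1,p_2)$ and switch to \texttt{STORED}, as claimed.
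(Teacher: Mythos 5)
Your overall strategy matches the paper's: the paper also proves uniqueness by contradiction, rewinds to the earlier LCM cycle in which the head chose the port toward the contested region, and uses clause (iii) of the eligible-port definition---no active robot within $3$ hops along the chosen port---together with the fact that the impostor robot cannot have arrived there without its own follower trailing at distance at most $3$ (exactly the configuration of Fig.~\ref{fig:correctly_find_pred}) to reach a contradiction. Your added existence half (the predecessor stays frozen at exactly $2$ hops because it only advances after seeing its successor at $2$ hops with color \texttt{STORED}) is not spelled out in the paper, but it is correct and a useful complement.

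The genuine gap is circularity. Both halves of your argument lean on the chain structure---``every active robot lies on the simple, non-self-crossing path $P$'' and ``the analogues of Lemmas~\ref{lem:chain_like_structure} and~\ref{lemma:max_min_distance-btw-robots}''---but in the general-graph setting those facts (Corollary~\ref{cor:forming-chain-graph}, Corollary~\ref{cor:chain-prpoerty}, Lemma~\ref{lem:min_distance_between_two_consec_robots}) are themselves derived \emph{from} Lemma~\ref{lem:unique_pred}: the chain exists only because every robot correctly identifies its predecessor. As written, your proof assumes what it is proving. The paper escapes this with a minimal-counterexample device: it takes $r$ to be the \emph{first} robot (in time) that fails to identify its predecessor, so at every instant strictly before $t$ the predecessor--successor relations are correct and the chain invariants are legitimately available for the rewinding argument. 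Your rewind-to-the-last-critical-move plan is fully compatible with that device, but without stating it the appeals to $P$ and to the distance bounds are unjustified. A secondary point: the three-way case analysis you defer (``in every ordering one of these moves violates clause (iii)'') is precisely the part the paper actually carries out, so it must be executed rather than promised---though once the first-failure hypothesis is in place, your plan for it is sound.
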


\begin{proof}
    We prove this by contradiction. 
    Suppose that at some time $t$, $r$ finds two robots, $r_1$ at $v_1$ and $r_2$ at $v_2$, with color \texttt{OFF} or \texttt{STORED}, such that both are $2$ hops away from $v$ along the port $p_1$. 
    Let $v^1_{nbr}$ be the $1$ hop neighbour of $v$ along the port $p_1$. 
    Without loss of generality, assume that the correct predecessor of $r$ is $r_1$, and the robot $r$ is the first robot that fails to correctly identify its predecessor at the time $t$.
    That is, until time $t$, all robots correctly maintain the predecessor-successor relationship, forming a chain originating from the door.
    Consider the time $t' (< t)$ when the head $r_{head}$ is at the vertex $v$ and decides to move to $v^1_{nbr}$. 
    If $v_2$ were a door vertex, it would necessarily be occupied by a robot, leading to a contradiction. 
    Therefore, $v_2$ cannot be a door vertex, which implies that $r_2$ must have both a successor and a predecessor.
    Let $v_3$ and $v_4$ be the neighbours of $v_2$, where $v_4$ lies along $r_2$'s successor direction and $v_3$ along its predecessor (as shown in Fig.~\ref{fig:correctly_find_pred}). 
    By the definition of an eligible port, robot $r_{head}$ cannot have any visible neighbours within $3$ hops other than its successor. 
    Therefore, at time $t'$, there cannot be any robot positioned on $v_3$, $v_4$, or on the edges $e(v_2, v_3)$ and $e(v_2, v_4)$.
    Since $t'<t$, all the robots with color $\neq \texttt{FINISH}$ form a chain-like structure at time $t'$.
    Hence, by Lemma \ref{lemma:max_min_distance-btw-robots}, the distance between two successive robots on the chain is at most 3, implying that a robot $r'$ must lie on the edge $e(v_3, v_5)$ or on $v_5$.
    When $r'$ is at the vertex $v_4$ at time $t'' (< t')$ and selects $v_5$ as its target, it must have observed its predecessor at $v_3$. 
    This contradicts the configuration at time $t'$, where no robot occupies $v_3$. \qed
\end{proof}

\begin{lemma}
    \label{lem:unique-succ}
    A robot $r$ can correctly determine the successor that lies $2$ hops away from its current location.
\end{lemma}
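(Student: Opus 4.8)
The plan is to prove successor-correctness by coupling it with the already-established predecessor-correctness of Lemma~\ref{lem:unique_pred} together with the chain structure. First I would recall that, exactly as in Lemma~\ref{lem:chain_like_structure}, all robots with color $\neq$ \texttt{FINISH} originating from the door occupy a single self-avoiding path (the chain), and that by the follower relation a successor re-walks the exact sequence of edges its predecessor traversed. The key observation is that $r.succ$ is \emph{never} recomputed by scanning the $2$-hop neighbourhood; it is carried forward by the algorithm via a prepend-and-shift update. Whenever $r$ takes one forward step it prepends the reverse port of that edge and discards the stale component (Cases~2.1, 2.2 and~3). Hence, by construction, $r.succ = (p'_1, p'_2)$ is meant to be the reverse of the two-edge walk that $r$'s immediate follower will occupy once it has caught up to two hops behind $r$.

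Second, I would invoke Lemma~\ref{lem:unique_pred} on the follower $r'$ of $r$. That lemma guarantees that once $r'$ settles two hops away it correctly stores $r'.pred = (q_1, q_2)$ pointing to the unique robot $r$ along a single $2$-hop walk. Because the port labelling at each endpoint is fixed by the graph, the walk certified by $r'.pred$ and the walk encoded in $r.succ$ are the \emph{same} pair of edges described from opposite ends: both pass through the same intermediate vertex $w$, with $p'_1$ (the port at $v$ toward $w$) reversing $q_2$ and $p'_2$ (the port at $w$ toward $r'$) reversing $q_1$. Therefore the vertex reached from $v$ along $p'_1$ then $p'_2$ is precisely the vertex holding $r'$. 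Since every vertex carries at most one robot, this identification is unambiguous.

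Third, I would rule out that $r$ mistakes some other active robot for its successor. Here one-robot-per-vertex already forces the $r.succ$ vertex to hold at most one robot, so the only remaining task is to ensure that any active robot appearing there must be $r'$. This follows from the self-avoiding chain (no chain member other than the one two hops behind can occupy that vertex) together with Definition~\ref{def:eligible-general-single}(iii): at the instant $r$ fixed the current value of $r.succ$ by moving, no robot with color $\neq$ \texttt{FINISH} other than its follower lay within three hops along the chosen port, and Definition~\ref{def:eligible-general-single}(i) forbids $r$ from stepping back along its successor edge, so no third chain member can drift into the successor slot before $r'$ arrives.

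The step I expect to be the main obstacle is the second one: matching the stored pair $r.succ$ to the walk the follower actually occupies in a graph with multiple paths between two vertices. I would discharge this by an induction over the chain-advancing moves, maintaining the invariant that whenever $r'$ sits two hops behind $r$ with a fully stored $r'.pred$, the tuple $r.succ$ equals its reverse. The base case is the moment $r$ leaves the door holding $r.succ = (p'_1,\bot)$ and then takes one more step, updating to the full tuple $(p''_2, p'_1)$ while the door robot simultaneously records its predecessor by Lemma~\ref{lem:unique_pred}; the inductive step is exactly the prepend-and-shift update, whose correctness rests on the fact (from the chain structure) that the follower re-traverses its predecessor's edges in the same port order, so the intermediate vertex and both ports coincide on the two stored tuples.
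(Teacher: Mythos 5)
Your proposal is correct and follows essentially the same route as the paper's proof: an induction over the chain-advancing moves (base case at the door with $r.succ=(p,\bot)$, inductive step via the prepend-and-shift update of $r.succ$), closed by invoking Lemma~\ref{lem:unique_pred} to guarantee that the follower re-walks the same two-edge path and thus lands exactly where the stored tuple points. Your third paragraph, ruling out an impostor robot in the successor slot via the chain structure and Definition~\ref{def:eligible-general-single}, is a useful piece of extra rigor that the paper leaves implicit, but it does not change the underlying argument.
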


\begin{proof}
    A robot positioned on the door does not have successor and sets $r.succ = (\bot, \bot)$. When such a robot $r$ targets the vertex $v_1$, it updates $r.succ = (p, \bot)$ before moving. After reaching $v_1$, it can correctly determine its successor (the robot on the door), which is located one hop away along the port $p$.
    Therefore, if the current position of $r$, say $v$, is one hop away from the door, $r$ can accurately identify its successor.
    Next, we consider the case where $v$ is at least two hops away from the door.
    Let $v^{prev}$ be the previous position of $r$ and at that time $r.succ = (p'_1, p'_2)$. Without loss of generality, we further assume that $r$ always identifies its successor correctly before reaching $v$.
    Thus, when $r$ was at $v^{prev}$, its successor was two hops away from $v^{prev}$ along the port $p'_1$ and then $p'_2$.
    Before moving to $v$, the robot $r$, lying on $v^{prev}$, updates $r.succ$ by $r.succ = (p, p'_1)$, where $p$ is the port of the edge $e(v_t, v)$ incident to $v_t$. Since the successor of $r$ correctly calculates its predecessor (which is $r$ itself) and follows it (by Lemma \ref{lem:unique_pred}), $r$ will eventually arrive two hops away from $v$, allowing it to determine the correct position of the successor, which is two hops away from $v$ along the port $p$ and then $p'_1$. \qed
\end{proof}

\begin{corollary}
    \label{cor:forming-chain-graph}
     Starting from the door, all the robots, which are not with the color \texttt{FINISH}, lie on a path (which we refer to as a chain).  
\end{corollary}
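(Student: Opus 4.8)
The plan is to reduce the statement to the tree case (Lemma~\ref{lem:chain_like_structure}) by replacing the acyclicity argument used there with the explicit predecessor–successor pointers that each robot now stores in memory. First I would invoke Lemmas~\ref{lem:unique_pred} and~\ref{lem:unique-succ} to conclude that, in any reachable configuration, every robot $r$ with $r.color \neq \texttt{FINISH}$ has a \emph{unique} predecessor (the robot sitting $2$ hops away along the ports recorded in $r.pred$) and a \emph{unique} successor (the robot $2$ hops away along $r.succ$), with the single exception that the head satisfies $r.pred=(\bot,\bot)$ and the robot on the door satisfies $r.succ=(\bot,\bot)$. Since the algorithm sets these pointers consistently (the successor of $r_i$ is exactly the robot whose predecessor pointer references $r_i$), they linearly order the active robots as $r_1,\dots,r_k$, where $r_1$ is the head, $r_k$ sits on the door, $r_i = r_{i+1}.pred$ and $r_{i+1} = r_i.succ$.

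Next I would exhibit the physical path. For each consecutive pair $(r_i,r_{i+1})$ the stored tuple determines a walk $\pi_i$ of length at most $2$ between their positions (the two recorded ports, or a single port while a component is still $\bot$). Concatenating $\pi_{k-1},\pi_{k-2},\dots,\pi_1$ from the door to the head yields a walk $W$ through the current positions of all active robots. The remaining content is to argue that $W$ is a \emph{simple} path, and this is the only place where generality over trees matters, since in a graph with cycles two segments of $W$ could a priori meet. I would establish this by induction on the execution, maintaining the invariant that $W$ is simple, in the same spirit as the proof of Lemma~\ref{lem:chain_like_structure}. The decisive transition is when the head advances from $v$ to $v_{nbr}^1$ along its minimum eligible port $p_v^{min}$: by condition~(iii) of Definition~\ref{def:eligible-general-single}, no robot with color $\neq \texttt{FINISH}$ lies within $3$ hops of $v$ along $p_v^{min}$, so in particular $v_{nbr}^1$ neither coincides with nor is adjacent to any vertex already occupied by the chain; hence the extended walk remains simple.

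Finally I would dispose of the backtracking transitions (Case~2.2). A backtracking step only returns the head, through the port stored in $r.succ$, to the vertex it occupied immediately before, and simultaneously vacates the vertex it just left, so it retraces an existing chain edge and shortens $W$ rather than creating a new coincidence; the invariant is therefore preserved. I expect the main obstacle to be exactly this simplicity argument: one must check that every move permitted by Cases~1, 2.1, 2.2, 2.3, and~3 either extends the chain to a vertex certified fresh by the eligible-port test or retraces an already-present chain edge, and that the interplay between a head advancing under the \texttt{WAIT}/$r.avoid$ bookkeeping and a successor still catching up never lets two distinct segments of $W$ share a vertex. Once simplicity is secured, $W$ is a path from the door to the head containing every active robot, which is precisely the claim.
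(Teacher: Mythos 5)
Your proposal is correct in outline and starts exactly where the paper does: the paper's entire proof of this corollary is a one-sentence appeal to Lemma~\ref{lem:unique_pred} (with Lemma~\ref{lem:unique-succ} implicit), asserting that correctness of the predecessor--successor pointers immediately yields the chain. Where you differ is that you actually attempt to prove the part the paper leaves unsaid, namely that the walk obtained by concatenating the stored $2$-hop segments is a \emph{simple} path in a graph that may contain cycles. The paper does not address simplicity inside this proof at all; it handles non-self-intersection separately and later, via Corollary~\ref{cor:chain-prpoerty} (any three consecutive chain vertices contain a robot), Lemma~\ref{lem:head_never_visit-very_oldVertex} and Corollary~\ref{cor:never-visit-old-vertex-graph} (no revisits). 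Your single induction over the execution, maintaining simplicity as an invariant, effectively folds those later results into one argument, and it also sidesteps the slightly awkward ordering in the paper, where Corollary~\ref{cor:chain-prpoerty} presupposes the chain and Lemma~\ref{lem:head_never_visit-very_oldVertex} then uses Corollary~\ref{cor:chain-prpoerty}.

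One step needs patching. From condition~(iii) of Definition~\ref{def:eligible-general-single} you conclude that $v_{nbr}^1$ ``neither coincides with nor is adjacent to any vertex already occupied by the chain; hence the extended walk remains simple.'' The ``hence'' is not valid on its own: the walk also contains \emph{unoccupied} intermediate vertices (consecutive robots sit $2$ or $3$ hops apart), and condition~(iii) only constrains vertices occupied by active robots, so nothing yet prevents the head from landing on such an unoccupied intermediate vertex. The missing link is exactly the density property: because consecutive active robots are at most $3$ apart, every unoccupied walk vertex is adjacent, along the walk, to an occupied one; hence if $v_{nbr}^1$ were any walk vertex, an active robot would lie within $2$ hops of $v$ along $p_v^{min}$, contradicting condition~(iii). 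Your induction must therefore carry the spacing bound (the restated analogue of Lemma~\ref{lemma:max_min_distance-btw-robots}, equivalently Corollary~\ref{cor:chain-prpoerty}) as a co-invariant alongside simplicity. With that addition the argument goes through, including the backtracking case, which, as you note, only retraces the most recent chain edge and shortens the walk.
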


\begin{proof}
    The statement is based directly on the above Lemma \ref{lem:unique_pred}, which guarantees that the predecessor-successor relationship is correctly established for all robots currently with color $\neq$ \texttt{FINISH}. \qed
\end{proof}

\begin{corollary}
    \label{cor:chain-prpoerty}
    Any path of three consecutive vertices of a current chain (path consisting of all the robots with color $\neq$\texttt{FINISH}) contains at least one robot.
\end{corollary}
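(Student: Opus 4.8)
The plan is to read the corollary as a statement about the spacing of robots along the chain and to reduce it to the distance bound between consecutive robots. The general-graph counterpart of Lemma \ref{lemma:max_min_distance-btw-robots} (restated just before Lemma \ref{lem:unique_pred}) guarantees that any two consecutive robots of the current chain are at most $3$ hops apart; equivalently, strictly between two consecutive robots there lie at most two unoccupied vertices. Thus the corollary amounts to ruling out a run of three consecutive empty vertices on the chain, which I would establish by contradiction.

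First I would invoke Corollary \ref{cor:forming-chain-graph} to treat the current chain as a genuine path whose two endpoints are robots — the head at one end and the tail (the active robot at the door end of the chain) at the other — so that ``three consecutive vertices'' is well defined and every empty stretch is flanked by robots on both sides. Then, supposing for contradiction that some three consecutive chain vertices $x,y,z$ are all unoccupied, I would let $a$ be the nearest occupied chain vertex preceding $x$ and $b$ the nearest occupied chain vertex following $z$. Since both endpoints of the chain are robots, $a$ and $b$ exist, carry consecutive robots of the chain, and satisfy $distance(a,b)\ge 4$, because at least the three vertices $x,y,z$ lie strictly between them. This contradicts the upper bound of $3$ on the distance between consecutive robots and completes the proof.

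The step that needs the most care is the door end, because the distance bound is stated with an exception for the pair involving the door. Here I would emphasise that this exception relaxes only the \emph{lower} bound (that pair may be as close as $1$ hop), while the \emph{upper} bound of $3$ still holds: the door robot advances precisely when its predecessor has reached $3$ hops, so it never lets the gap grow to $4$. Consequently the contradiction applies verbatim even when the flanking pair $(a,b)$ is the door pair, and the boundary cases — a triple touching the head or the door endpoint — are immediate since those endpoints are occupied. No further subtlety arises, so once the distance bound and the path structure (Corollary \ref{cor:forming-chain-graph} and the analog of Lemma \ref{lemma:max_min_distance-btw-robots}) are in place, the argument is short.
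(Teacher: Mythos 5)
Your proposal is correct, but it follows a different route from the paper's own proof. The paper does not invoke the restated distance bound at all: it argues directly from the movement rule. Writing the empty triple as $v_1,v_2,v_3$ (with $v_3$ farthest from the door) and letting $r$ be the robot currently at the next chain vertex $v_4$ (or on the edge $e(v_3,v_4)$), it considers the moment when $r$, sitting at $v_3$, decided to move to $v_4$: at that moment $r$'s successor had to be exactly two hops behind, i.e.\ at $v_1$, and that successor can have advanced at most to $v_2$ while $r$ sits at $v_4$ --- so one of $v_1,v_2$ is still occupied, a contradiction. Your argument instead reduces the corollary to the upper bound of $3$ on the spacing of consecutive robots (the bullet-point restatement of Lemma~\ref{lemma:max_min_distance-btw-robots} given before Lemma~\ref{lem:unique_pred}) plus a counting step: the robots $a,b$ flanking the gap are consecutive on the chain yet at chain-distance at least $4$. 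That reduction is sound, and your observation that the door exception relaxes only the lower bound (the predecessor of the door robot cannot advance past $3$ hops, since its own forward move requires its successor to be $2$ hops behind) is accurate and is a boundary case the paper's proof glosses over. The caveat is that in the general-graph section the paper never proves the upper bound as a standalone lemma --- it proves the lower bound separately (Lemma~\ref{lem:min_distance_between_two_consec_robots}), and this corollary is effectively the paper's formal record of the upper-bound content, established self-containedly from the movement rule. So your proof presupposes essentially the statement being recorded; it is defensible under the paper's own framing (the restatement is asserted to hold once Lemmas~\ref{lem:unique_pred} and~\ref{lem:unique-succ} are in place), but a reader who views the corollary as the place where that spacing bound gets justified for general graphs would find your argument circular, whereas the paper's direct derivation avoids this. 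What your route buys in exchange is brevity and an explicit, careful treatment of both chain endpoints.
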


\begin{proof}
    On the contrary, let's assume that $P=\{v_1, v_2, v_3\}$ be the path on the chain that does not possess any robot with color $\neq$\texttt{FINISH}. Let $v_3$ be the farthest vertex from the door among all the vertices of $P$ in the current chain. Let $r$ be the robot that is currently located on $v_4$ or on edge $e(v_3, v_4)$, where $v_4$ is the next vertex after $v_3$ in the current chain. Consider the time when $r$ was at $v_3$ and decide to move to $v_4$. At this time, the successor of $r$ must be at $v_1$, which is a contradiction. \qed
\end{proof}

\begin{lemma}
    \label{lem:head_never_visit-very_oldVertex}
    The head robot of the current chain never revisits a vertex that it has previously visited in an earlier LCM cycle, excluding its immediate previous position.
\end{lemma}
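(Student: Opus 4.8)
The plan is to prove Lemma~\ref{lem:head_never_visit-very_oldVertex} by distinguishing the two kinds of movement a head can make and showing that each type respects the stated restriction. Recall from Case 2.2 that a head with color \texttt{WAIT} either explores forward (selecting a vertex along the minimum eligible port $p_v^{min}$) or \emph{backtracks} toward its successor, updating $r.avoid$ in the latter case. First I would observe that a forward exploration step uses an eligible port, and by Definition~\ref{def:eligible-general-single}, condition (iii), no robot with color $\neq\texttt{FINISH}$ lies within $3$ hops along $p_v^{min}$; in particular, the head's own recently visited vertices (which it abandoned only an epoch or two earlier and which still carry its \texttt{WAIT}/\texttt{OFF}-colored successor robots as witnesses of the chain) cannot be targeted. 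The key structural fact I would invoke is Corollary~\ref{cor:forming-chain-graph} together with Lemma~\ref{lemma:max_min_distance-btw-robots} (restated for general graphs in the bulleted remarks): the chain emanating from the head is occupied densely enough (Corollary~\ref{cor:chain-prpoerty}) that any previously visited vertex lies on or near the current chain and is thus blocked as a target by the eligibility conditions.

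The second and more delicate case is backtracking. Here I would use the role of the $r.avoid$ variable. When the head backtracks from $v$ to its successor's side, it sets $r.avoid = p^*$, the port of the edge it just traversed in reverse. In the subsequent \texttt{WAIT} step with $r.avoid = p^* > 0$ (third bullet of Case 2.2), the head only considers eligible ports $p > p^*$ as candidates for $p_v^{min}$. The intent is precisely that, having retreated from a vertex $w$ along a port and recorded that port in $r.avoid$, the head will never re-explore $w$ via the same edge, nor any port of smaller index that it had already examined and rejected before the backtrack. I would argue that the chain of $r.avoid$ updates, combined with the monotone increase of the admissible port index after each backtrack from a given vertex, guarantees that the set of edges incident to a vertex that the head is willing to traverse forward strictly shrinks each time it returns, so no forward step ever re-enters a vertex already exited by the head.

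The main obstacle, I expect, is ruling out the subtle re-entry along a \emph{different} edge: in a general graph there may be several edges between the head's current region and a previously visited vertex, so a vertex visited earlier might in principle be reachable along a fresh port even though the original exit port is avoided. To close this gap I would lean on condition (iii) of eligibility again: a previously visited vertex that still lies on the current chain carries, within $3$ hops, a robot of the head's chain with color $\neq\texttt{FINISH}$ (by Corollary~\ref{cor:chain-prpoerty}), which renders every port leading back to it ineligible regardless of which edge is used. The only vertices the head has visited but which no longer lie on the chain are those it abandoned during backtracking; for these I would argue that the head has strictly moved \emph{away} from them in the successor direction and, by the acyclic progression enforced by $r.avoid$, can only reach them again by crossing its own successor, which is forbidden by eligibility condition (i) ($p_v \neq p'_1$). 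The explicit exclusion of the immediate previous position in the lemma statement accounts for the single backtracking hop itself, which is the one deliberately permitted re-entry.

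\begin{proof}
I defer the detailed argument, which proceeds by the case analysis on forward-exploration versus backtracking steps sketched above, to the formal development; the essential ingredients are the eligibility conditions of Definition~\ref{def:eligible-general-single}, the chain density guaranteed by Corollaries~\ref{cor:forming-chain-graph} and~\ref{cor:chain-prpoerty}, and the monotone behaviour of the $r.avoid$ variable across successive backtracks. \qed
\end{proof}
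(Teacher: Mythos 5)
Your plan follows essentially the same route as the paper's proof: split the head's moves into forward exploration versus backtracking, kill the forward case with eligibility condition (iii) of Definition~\ref{def:eligible-general-single} plus the chain-density property (Corollary~\ref{cor:chain-prpoerty}), and kill the backtracking case with the monotone growth of $r.avoid$. However, there are two genuine problems. First, you never actually carry out the proof: your proof environment explicitly defers ``the detailed argument to the formal development.'' The paper's proof is a concrete contradiction argument --- it assumes the head at $v$ targets a previously visited vertex $v_{nbr}^1$ that is not its immediate previous position, shows that in the forward case $v_{nbr}^1$ must lie on the current chain and is not the door, hence has two chain neighbours, at least one of which carries an active robot by Corollary~\ref{cor:chain-prpoerty}, so the port toward $v_{nbr}^1$ is ineligible; and in the backtracking case it derives the chain of inequalities $p > r.avoid \geq p'' > p'$ to rule out every earlier target from $v$. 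A list of ingredients followed by deferral is not a substitute for this argument.

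Second, in the one place where you go beyond the paper --- re-entry into an abandoned (backtracked-from) vertex along a \emph{different} edge --- your argument is unsound. You claim the head ``can only reach them again by crossing its own successor,'' which you then forbid via eligibility condition (i). But condition (i) only excludes a single port at the head's current position, namely the one stored in $r.succ$; it says nothing about routes that approach an abandoned vertex from the far side through a cycle, which is precisely what distinguishes general graphs from trees and is why this case is delicate in the first place. So condition (i) cannot carry that case. The paper sidesteps it through its two-case decomposition (forward targets that were previously visited are chain vertices; same-vertex revisits are governed by $r.avoid$); if you want to address the different-edge scenario explicitly, the usable fact is that an abandoned vertex is adjacent to a chain vertex, so it is the chain-density/condition-(iii) argument of your first case --- not condition (i) --- that must be invoked to block the approach.
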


\begin{proof}
    We prove the lemma by contradiction.
    Let $r$ be the head robot of the current chain.
    In contrast, let us assume that the head robot $r$, currently located at the vertex $v$, selects the vertex $v_{nbr}^1$ as its next target, which is not its immediate previous position, but a vertex already visited by itself in one of its previous LCM cycles.
    This movement is either part of the backtrack procedure or is just a forward movement.
    If it is the forward movement and is previously visited by $r$, then $v_{nbr}^1$ must be on the current chain. 
    Since the door vertex is always occupied, $v_{nbr}^1$ cannot be the door.
    Hence, there must be two neighbours of $v_{nbr}^1$, say $v_2$ and $v_3$, which is part of the current chain.
    By Corollary \ref{cor:chain-prpoerty}, at least one of $v_2$ or $v_3$ must be occupied by a robot with color $\neq \texttt{FINISH}$. Consequently, the port of the edge $e(v, v_{nbr}^1)$ incident to $v$ becomes ineligible for $v$, and $r$ cannot move to $v_{nbr}^1$, which is a contradiction.
    Therefore, we remain to prove when $r$ decides to move to $v_{nbr}^1$ from $v$ in the backtracking process, i.e., $r$ has visited $v_{nbr}^1$ from $v$ previously.
    In this process, when a robot first moves to $v'$, a neighbour of $v$, and then returns to $v$ again, it updates $r.avoid = p'$, where $p'$ is the port of the edge $e(v, v')$ incident to $v$.
    After this, when $r$ again decides to move to another neighbour of $v$, say $v''$, it only considers those ports that have a value greater than $r.avoid = p'$. 
    Consequently, the second target $v''$ in this backtracking procedure is not the first target $v'$. 
    When $r$ again returns to $v$ from $v''$, it updates $r.avoid = p'' > p'$, where $p''$ is the port of the edge $e(v, v'')$ incident to $v$.
    Hence, if the current target of $r$ is $v_{nbr}^1$ along the port $p$, it could not be the vertex $v'$ or $v''$ or any other vertex, which was the previous target of $r$ from $v$, as $p > r.avoid \geq p'' > p'$, which is again a contradiction.    \qed
\end{proof}

\begin{corollary}
    \label{cor:never-visit-old-vertex-graph}
    A robot $r$, which is not the head of the current chain, never revisits a vertex that it has already visited in one of its previous LCM cycles.
\end{corollary}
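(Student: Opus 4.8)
The plan is to exploit the asymmetry between the head and the non-head robots: unlike the head, a non-head robot only ever steps \emph{toward} its predecessor and never backtracks, so its trajectory stays pinned to a monotonically growing, simple ``tail'' of the chain. First I would read off from the algorithm that a robot $r$ with $r.pred \neq (\bot,\bot)$ executes a movement only in Case 3, where it advances to the vertex $u$ lying one hop toward its predecessor (the neighbour of $v$ along the first port of $r.pred$). All backtracking behaviour is confined to Case 2.2, whose guard requires $r.pred = (\bot,\bot)$ together with a position at least two hops from the door; hence backtracking is a head-exclusive action. Consequently every move of a non-head robot is directed along the chain toward the head (away from the door).

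Next I would fix a non-head robot $r$ and, for each time $t$ during which $r$ remains non-head, let $T_r(t)$ be the subpath of the current chain joining the door to $r$. By Corollary~\ref{cor:forming-chain-graph} the chain is a simple path, so $T_r(t)$ is simple, and in particular each vertex occurs at most once on it. The central claim I would establish is that $T_r$ \emph{only grows}: no vertex ever leaves it. This follows because no robot lying on $T_r$ backtracks --- those strictly between the door and $r$ possess a predecessor (so are non-head and only advance by Case 3), while the door robot sits at the door and therefore also cannot satisfy the Case 2.2 guard. Thus a move by any successor of $r$ merely slides it forward onto a vertex already in $T_r$, and only a forward move of $r$ itself appends a single new vertex at the $r$-end of $T_r$.

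Combining these, when $r$ moves from $v$ to $u$, the vertex $u$ lies on the predecessor side of $r$ on the current simple path, so $u \notin T_r(t)$; by the monotonic growth just argued, $u \notin T_r(t')$ for every earlier $t'\le t$ as well. Since every vertex $r$ has previously occupied lies behind $r$ on the tail and hence belongs to $T_r(t)$, the target $u$ differs from all of them, and $r$ does not revisit. (Note this is sharper than Lemma~\ref{lem:head_never_visit-very_oldVertex}: there is no ``immediate previous position'' exception precisely because $r$ never backtracks.) The main obstacle I anticipate is justifying the monotonicity of $T_r$ carefully under $\mathcal{ASYNC}$: I must rule out that the head's local backtracking, interleaved adversarially with the motions of the tail robots, ever re-routes a vertex between $r$ and the head so as to later force $r$ onto an abandoned branch. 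This is exactly where I would invoke that backtracking is head-only, that abandoned vertices are always the head's own positions (strictly ahead of $r$, hence never in $T_r$), and that the $occupied(u_{nbr}^2, p_2, \texttt{WAIT})$ guard in Case~3 stalls $r$ while its predecessor is mid-backtrack, so that $r$ follows only the committed forward spine of the chain.
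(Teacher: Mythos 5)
Your proposal is correct, but it takes a genuinely different route from the paper. The paper's proof is a two-line reduction: first, by Lemma~\ref{lem:unique-succ}, $r$ excludes the port toward its successor and hence never returns to its immediate previous vertex; second, since every chain member follows the head and the head never revisits a vertex except its immediate previous position (Lemma~\ref{lem:head_never_visit-very_oldVertex}), followers inherit simplicity of the head's committed path. You instead avoid Lemma~\ref{lem:head_never_visit-very_oldVertex} entirely and argue via an invariant: the tail $T_r(t)$ (the door-to-$r$ portion of the chain) only grows, is simple at every instant by Corollary~\ref{cor:forming-chain-graph}, and contains every vertex $r$ has ever occupied, while each forward move of $r$ targets a vertex strictly ahead of $r$ on the instantaneous simple chain, hence outside $T_r(t)$ and, by monotonicity, outside every earlier tail. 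What each approach buys: the paper's proof is shorter given the machinery already established, but it is terse precisely where care is needed --- it does not spell out why followers skip the head's backtrack detours rather than retracing them. Your argument makes this explicit (backtracking is syntactically head-only since Case 2.2 requires $r.pred=(\bot,\bot)$; abandoned vertices lie strictly ahead of $r$; the $occupied(u_{nbr}^2,p_2,\texttt{WAIT})$ guard in Case 3 stalls $r$ while its predecessor is mid-backtrack), and it yields a slightly sharper conclusion with no ``immediate previous position'' exception. The one place where your write-up would need to be tightened into a full proof is the monotonicity of $T_r$ under $\mathcal{ASYNC}$: you assert that successor moves merely slide robots along vertices already in $T_r$, which is true but deserves an induction along the chain (each robot strictly between the door and $r$ has a stored predecessor path and moves only along it, so the vertex set of the door-to-$r$ subpath never changes except by $r$'s own appends). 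You correctly flag this as the main obstacle and identify the right ingredients to close it, so the gap is one of elaboration, not of substance.
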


\begin{proof}
    By Lemma \ref{lem:unique-succ}, since $r$ correctly identifies its successor $r.succ = (p'_1, p'_2)$, it excludes the port $p'_1$ from the set of eligible ports and thus never returns to the vertex $v^{prev}$ from which it arrived at $v$.
   Moreover, as all robots in the current chain follow the head, and the head never revisits any previously occupied vertex except its immediate predecessor (by Lemma \ref{lem:head_never_visit-very_oldVertex}), no robot in the chain revisits such vertices. \qed
\end{proof}

\begin{lemma}
\label{lem:min_distance_between_two_consec_robots}
    The distance between two consecutive robots on the current chain (excluding the robot on the door) is at least two. This distance is measured along the ports from where it reaches the current position.
\end{lemma}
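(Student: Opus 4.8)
The plan is to establish the statement as an invariant maintained throughout the execution, arguing by induction over the LCM cycles in which some robot actually moves. By Corollary~\ref{cor:forming-chain-graph}, at every instant the non-\texttt{FINISH} robots occupy a well-defined chain rooted at the door, and by Lemmas~\ref{lem:unique_pred} and \ref{lem:unique-succ} each robot correctly tracks, through its $r.pred$ and $r.succ$ ports, the exact path by which it reached its current vertex. This makes the ``distance measured along the ports'' well-defined and lets me talk about the chain-distance $d(r_s, r_p)$ between a successor $r_s$ and its immediate predecessor $r_p$. The invariant to maintain is: for every consecutive pair $(r_p, r_s)$ with neither robot on the door, $d(r_s, r_p) \ge 2$. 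The base case is the first off-door pair, which arises when the second robot leaves the door via Case~3: that robot advances one hop toward its predecessor only after the predecessor has vacated the two-hop spot, so the distance is exactly two.

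For the inductive step I would classify each move into three types and show the invariant survives each. First, a predecessor $r_p$ moving forward (Cases~1, 2.1, and 2.2 with \texttt{WAIT}, $r.avoid > 0$) can only increase its chain-distance to $r_s$, pushing $d(r_s, r_p)$ from $\ge 2$ to $\ge 3$. Second, a successor $r_s$ advancing toward $r_p$ (Case~3) is triggered precisely when the two-hop spot $u_{nbr}^1$ along $p_2$ is unoccupied, i.e.\ $r_p$ has already left it; since $r_p$'s forward move used an eligible port, which by condition~(i) of Definition~\ref{def:eligible-general-single} is never the port toward its successor, $r_p$ now sits at chain-distance $\ge 2$ from the target $u$, so after $r_s$ reaches $u$ the pair is again at distance $\ge 2$.

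The delicate case is the head's backtracking (Case~2.2 with \texttt{WAIT} and $r.avoid = 0$), the only move that decreases the predecessor--successor distance. Here I would use that the \texttt{WAIT} color is set only immediately after a forward move, so at the instant of backtracking $d(r_s, r_p) \ge 3$ by the first type above; since backtracking returns the head to exactly its immediate previous position (Lemma~\ref{lem:head_never_visit-very_oldVertex}), it advances one hop toward $r_s$ and restores $d(r_s, r_p) \ge 2$. Crucially, I must also rule out the scenario in which $r_s$ advances onto the very vertex the head is about to backtrack onto. This is exactly the role of the guard $occupied(u_{nbr}^2, p_2, \texttt{WAIT})$ in Case~3: while the head carries color \texttt{WAIT} three hops ahead of $r_s$, the successor is forced to hold position, and is permitted to advance only once the head has committed (switched to \texttt{OFF} or moved further), after which the head no longer backtracks onto the intervening vertex. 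I expect verifying this mutual exclusion between backtracking and successor advancement---tracing the \texttt{WAIT}/\texttt{OFF} transitions so that no activation ordering under $\mathcal{ASYNC}$ brings the pair within one hop---to be the main obstacle, and I would resolve it by a short case analysis on the head's color as seen by $r_s$ at the moment $r_s$ decides to move.
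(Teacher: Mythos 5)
Your proposal is correct and takes essentially the same approach as the paper: the paper likewise treats non-head pairs as trivial (a robot never moves toward its successor and waits until its predecessor vacates the two-hop spot stored in $r.pred$), and isolates the head's backtracking as the only delicate case, resolving it exactly as you do --- a backtrack moves the head from a $3$-hop to a $2$-hop neighbour of its successor, while the successor, seeing the head's \texttt{WAIT} color, holds its position. Your explicit inductive framing and the tracing of the \texttt{WAIT} guard in Case 3 is a more formal packaging of the same mechanism the paper states in a few lines.
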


\begin{proof}
    The claim holds trivially for any two robots neither of which is the head.
    This is because a robot never moves back toward its successor and always verifies the presence of its predecessor before moving. Specifically, a robot remains stationary when its predecessor is two hops away along the stored path in $r.\text{pred}$.
    Now consider the case where $r_1$ is the head robot and $r_2$ is its successor. 
    If $r_1.color=\texttt{OFF}$, then by Lemma~\ref{lem:unique-succ}, $r_1$ correctly identifies its successor $r_2$ and never moves back in its direction.
    If $r_1.color = \texttt{WAIT}$, then $r_1$ may backtrack. However, such a backtracking move occurs from a $3$ hop neighbour from the current position of $r_2$ to a $2$ hop neighbour.
    In the meantime, $r_2$, upon seeing $r_1$ as of color \texttt{WAIT}, remains in place.
    As a result, the distance between $r_1$ and $r_2$ remains at least two. \qed
\end{proof}

\begin{theorem}
     \label{thm:asyncGenSS}
     Algorithm \textsc{Graph\_SingleDoor} fills an MVC of an arbitrary graph $G$ with \texttt{FINISH}-colored robots having $3$ hops visibility, $4$ colors and $O(\log \Delta)$ memory in $O(|E|)$ epochs under $\mathcal{ASYNC}$ and without collision. 
\end{theorem}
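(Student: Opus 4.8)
The plan is to split the theorem into four claims --- feasibility, minimality, collision-freeness, and the $O(|E|)$ time bound --- and to dispatch the resource counts directly. The colour set used by the algorithm is exactly $\{\texttt{OFF}, \texttt{WAIT}, \texttt{STORED}, \texttt{FINISH}\}$, giving $4$ colours; beyond its colour a robot stores only the port pairs $r.pred$, $r.succ$ and the single port $r.avoid$, each a label in $\{0,1,\dots,\Delta\}$, so $O(\log \Delta)$ bits suffice; and every predicate evaluated in Cases~1--3 queries vertices at most $3$ hops away, matching the visibility bound. With the chain structure already in hand from Corollary \ref{cor:forming-chain-graph} and the no-revisit guarantees of Lemma \ref{lem:head_never_visit-very_oldVertex} and Corollary \ref{cor:never-visit-old-vertex-graph}, the remaining three claims can be proved largely in parallel with Section \ref{subsec:tree-singledoor}.

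For feasibility I would argue by contradiction exactly as in Lemma \ref{lem:feasibility_asyncSS}: an uncovered edge $e(u,v)$ with $distance(u, door) < distance(v, door)$ would force the last robot to settle among the neighbours of $u$ to have switched to \texttt{FINISH} without ever seeing a \texttt{FINISH} robot on $u$ or $v$, contradicting the \texttt{FINISH}-transition and termination rules. Minimality rests on the observation that no robot adopts \texttt{FINISH} while all its neighbours are occupied: the door rule requires a free neighbour, and each \texttt{FINISH}-transition in Case~2.2 is guarded either by the predicate $\exists p\,(\neg occupied(v_{nbr}^1, p, \sim) \wedge \forall v_{nbr}^2\, occupied(v_{nbr}^2, p, \texttt{FINISH}))$ or by the exhaustion of all ports exceeding $r.avoid$, both of which leave an unoccupied neighbour. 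That private neighbour certifies an edge covered solely by the settling robot, so deleting any cover vertex would expose an edge, establishing minimality.

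Collision-freeness combines Lemma \ref{lem:min_distance_between_two_consec_robots} (consecutive chain robots remain at least two hops apart) with clause~(iii) of Definition \ref{def:eligible-general-single}, which bars moving along a port carrying a non-\texttt{FINISH} robot within $3$ hops, and with clause~(i) together with Corollary \ref{cor:never-visit-old-vertex-graph}, which forbids moving back toward the successor. The only residual hazard is the $\mathcal{ASYNC}$ interleaving of Case~3, in which a \texttt{STORED} follower advances toward its predecessor while the head may simultaneously backtrack; here the guards $occupied(u_{nbr}^1, p_2, \neq \texttt{FINISH})$ and $occupied(u_{nbr}^2, p_2, \texttt{WAIT})$ are exactly what hold the follower in place until its target $u$ is safe. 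I would therefore verify, under every activation order consistent with $\mathcal{ASYNC}$, that the target of each move is unoccupied and is not the simultaneous target of another robot.

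The main obstacle is the time bound, since the head now performs backtracking moves that explore no new edge. The plan is to reuse the amortized short-wait/long-wait analysis of Lemma \ref{lem:time_asyncSS} --- which already shows the chain's waiting discipline costs only a constant average delay per forward move --- and to charge backtracking separately. Because $r.avoid$ increases monotonically while the head occupies a fixed vertex $v$ and is reset only after a successful forward step, the number of failed probes (each a forward-then-backtrack pair) charged to $v$ is at most $\delta(v)$; summing over the vertices the head visits bounds the total backtracking cost by $O(\sum_v \delta(v)) = O(|E|)$. Adding the inherited exploratory cost $\sum_i O(k_i) = O(|E|)$, where $k_i$ counts the new edges explored by the $i$-th head and $\sum_i k_i = O(|E|)$, yields $O(|E|)$ epochs overall, which Theorem \ref{thm:time_lowerBound} certifies to be optimal.
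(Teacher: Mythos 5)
Your resource counts, the collision-freeness argument, and the time bound are essentially sound; in particular, your charging of failed probes to vertex degrees (each probe from $v$ consumes a strictly larger port because of the monotone behaviour of $r.avoid$, so at most $\delta(v)$ failures per head per vertex, summing to $O(\sum_v \delta(v)) = O(|E|)$) is an acceptable repackaging of the paper's amortization, which instead redefines an \emph{explored edge} to exclude backtracked endpoints and shows the $2\Delta'+2$ epochs spent probing at $v$ are paid for by the $\Delta'$ newly explored edges incident to $v$. The genuine gaps are in the two correctness claims. For \emph{feasibility}, you import the tree argument of Lemma \ref{lem:feasibility_asyncSS} verbatim, but it fails here: in the graph setting a port can be ineligible because of clause (iii) of Definition \ref{def:eligible-general-single}, i.e., because an \emph{active} (non-\texttt{FINISH}) robot sits up to $3$ hops away along that port, and that robot need not be a neighbour of $u$ or $v$ at all. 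Hence ``the last robot to settle among the neighbours of $u$ must have seen a \texttt{FINISH} robot on $u$ or $v$'' is simply false as stated; the blocker may be a chain member that terminates later and elsewhere. The paper closes this with a recursion absent from your sketch: repeatedly pass to the later-terminating robot responsible for the ineligibility, until one reaches a robot that sees both $u$ and $v$ unoccupied with no active blocker along the relevant port, yet declares the port ineligible --- contradicting its own transition rule.

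For \emph{minimality} the gap is sharper. You argue that every \texttt{FINISH} transition leaves an unoccupied ``private'' neighbour, but minimality is a property of the configuration \emph{at termination}, so you must show that some neighbour of the settled vertex remains unoccupied \emph{forever}; this persistence claim is the entire content of the paper's proof, and you assert it rather than prove it. Concretely: (a) the invariant ``no robot adopts \texttt{FINISH} while all its neighbours are occupied'' is not syntactically guarded in all cases (Case 2.1 and the \texttt{OFF} sub-case of Case 2.2 switch to \texttt{FINISH} merely because $p_v^{min}$ fails to exist); when the invariant does hold for a head at least $2$ hops from the door, the guaranteed free neighbour is the one toward its successor --- which is exactly the vertex the \emph{new} head is liable to occupy in its next moves, so that neighbour's freeness is not persistent. (b) The paper instead fixes the unoccupied $2$-hop neighbour $v_2$ witnessed when the robot chose to move to its final vertex $v$, and proves no robot can ever reach $v_2$ afterwards (all of $v_2$'s other neighbours being \texttt{FINISH}-occupied and $v$ itself now occupied); in the backtracking case it separately proves that no robot ever settles on the vertex backtracked from, since any later arrival there would itself backtrack by symmetry. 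Without these unreachability arguments your ``private neighbour'' may later be filled, and the claimed contradiction with removability of a cover vertex does not go through.
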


\begin{proof}
    
    By Lemmas \ref{lem:unique_pred}, \ref{lem:unique-succ}, \ref{lem:min_distance_between_two_consec_robots} and Corollary \ref{cor:forming-chain-graph}, \ref{cor:chain-prpoerty}, all the active robots (robots with the color $\neq \texttt{FINISH}$) form a chain-like structure, where the distance between the two consecutive robots is at most three and at least two. 
    These are analogous to those established in the Lemma \ref{lem:chain_like_structure},\ref{lemma:max_min_distance-btw-robots},\ref{lem:not_visit_old_edges} and Remark \ref{remark:chain_of_robots_tree}.
    All the results together along with Lemma \ref{lem:head_never_visit-very_oldVertex} and Corollary \ref{cor:never-visit-old-vertex-graph}, the algorithm remains free from collision, as noted in Remark \ref{rem:collision-free}. 

    During the execution of the algorithm, a robot uses $4$ colors: \texttt{OFF, STORED, WAIT} and \texttt{FINISH}. 
    Moreover, every robot $r$ stores the port numbers $r.pred, r.succ$ and $r.avoid$ to remember the location of the predecessor, successor and a port to avoid visiting in future. This requires $O(\log \Delta)$ memory in total, since the maximum value of a port incident to a vertex is the maximum degree $\Delta$ of the graph $G$.
    
    \textit{Time Complexity:} 
    Using a similar argument as in the proof of Lemma \ref{lem:time_asyncSS}, we have the identical result on the time complexity. 
    The key distinction lies in how robots identify their predecessor and successors. In \textsc{Tree\_SingleDoor} algorithm, a robot with color \texttt{OFF-i} determines them by observing the robots with the colors \texttt{OFF-(i+1)} and \texttt{OFF-(i-1)}, whereas, in \textsc{Graph\_SingleDoor} algorithm, $r$ determines those two robots by computing $r.pred$ and $r.succ$, which are correctly computed as shown in \ref{lem:unique_pred}, \ref{lem:unique-succ}.
    Additionally, each robot $r$ waits one extra epoch for its successor $r.succ$ to change its color to \texttt{STORED}, adding at most $|E|$ epochs to the total time.
    The only thing remaining in analysing the time complexity is when a robot $r$ backtracks during the execution of the algorithm.
    In this process, it is possible that the head robot $r$ might revisit a particular vertex multiple times. 
    For example, let us assume that the current vertex $v$ of $r$ is of degree $\Delta$ and $v_i$ is a neighbour of $v$ for $1 \leq i \leq \Delta$ along the port $i$ (which is incident to $v$).
    In the worst case, $r$ first moves to $v_1$ with color \texttt{WAIT} and then backtracks to $v$ with \texttt{WAIT} after finding a $3$ hop neighbour of $v_1$ occupied by an active robot. 
    It then moves to $v_2$ and similarly observes an active robot on a $3$ hop neighbour of $v_2$, hence backtracks to $v$ again. 
    Similarly, this back-track process might continue for $\Delta' (\leq \Delta-1)$ neighbours of $v$, except for the neighbour (say $v_{\Delta}$) in the direction of its successor. 
    Without loss of generality, we assume that the set of all neighbours from where $r$ backtracks to $v$ is $\{v_1, v_2, \cdots, v_{\Delta'}\}$.
    Since every visit of $r$ to a neighbour $v_i (1 \leq i \leq \Delta')$ and backtracking to $v$ together requires at most $2$ epochs, the process of backtracking to $v$ from its neighbours requires at most $2\Delta'$ epochs in total.
    If in case $\Delta'= \Delta -1$, the robot $r$ switches its color to \texttt{FINISH}.
    Otherwise, $r$ moves to one of the neighbours of $v$, where it changes its color either to \texttt{OFF} or \texttt{FINISH}.
    In either case, $r$ needs at most two more epochs. 
    After this, the successor of the head $r$ either becomes eligible to move toward $r$ or becomes the new head, depending on whether the current color of $r$ is \texttt{OFF} or \texttt{FINISH}.
    Therefore, the whole chain might wait for at most $(2 \Delta' + 2) \leq 2 \Delta$ epochs due to the head's backtracking.
    At first glance, this may appear to contradict the claim in the proof of Lemma \ref{lem:time_asyncSS}, which states that head takes $O(k)$ epochs to explore $k$ many edges.
    Recall that in the algorithm \textsc{Tree\_SingleDoor}, an edge is considered explored if the head visits at least one endpoint of the edge for the first time.
    For the algorithm \textsc{Graph\_SingleDoor}, we slightly refine the definition of the \textit{explored edge}.
    An edge is said to be explored if the head visits at least one of its endpoints for the first time and does not subsequently backtrack from it.
    The proof of the claim in Lemma \ref{lem:time_asyncSS} was based on the worst-case scenario where each movement of the head results in exploring only one edge.
    However, in the case that we described above, when $r$ reaches the vertex $v$, all its $\Delta'$ neighbours $\{v_i : 1\leq i \leq \Delta'\}$ are unoccupied. Consequently, all the edges $\{e(v, v_i): 1 \leq i \leq \Delta'\}$ are explored by the head $r$ for the first time.
    Therefore, even if $r$ takes $(2\Delta'+2)$ epochs to reach a termination at the vertex  $v$ or move to a neighbouring vertex from $v$ and sets the color \texttt{OFF}, it finds $\Delta'$ many explored edges in between the two movements.
    As a consequence, the average waiting time of the head $r$ per edge explored by it is $\frac{2\Delta'+2}{\Delta'} \leq 4$.
    Thus we still have the claim: ``head takes $O(k)$ epochs to explore $k$ edges".
    Hence, the overall time complexity of \textsc{Graph\_SingleDoor} remains $O(|E|)$ asymptotically.

    We now prove the feasibility and minimality of the solution returned by the Algorithm \textsc{Graph\_SingleDoor}.

    \textit{Feasibilty:} We prove by contradiction. Let $e=(u,v)$ be an edge that is not covered by any \texttt{FINISH}-colored robot. Such existence robot the chain neither arrives at $u$ nor $v$.
    The edge $e$ can not be the edge incident to the door, as the termination condition ensures that such an edge is always covered.
    Without loss of generality, let us assume that $e$ is the only uncovered edge. Then, at least one of the neighbours of $u$ must be occupied with a \texttt{FINISH}-colored robot. 
    Let $r$ be the last robot to terminate (with the color \texttt{FINISH}) among all the robots occupying the neighbours of $u$.
    Consider the time $t$, when $r$, lying on one of the neighbours of $u$, decides to terminate. By our assumption, both $u$ and $v$ must be unoccupied at $t$.
    $r$ decides to terminate despite seeing its 2-hop neighbour $v$ unoccupied. This implies that either $u$ has some one/two hop neighbours or $v$ has some neighbours occupied by $\neq \texttt{FINISH}$-colored robots. 
    In either case, we treat such a robot as the new $r$ and again inspect the time $t'>t$ when this robot terminates.
    By continuing this argument recursively, we eventually reach a situation where a robot $r$ finds both of $u$ and $v$ unoccupied, and no other \texttt{FINISH}-colored robots exist along the port in the direction of edge $e(u,v)$. Despite this, $r$ marks that port as ineligible and terminates, which contradicts its own termination condition. 

    \textit{Minimality:} We prove the minimality of the solution by contradiction. 
    Suppose, for the sake of contradiction, that a robot $r$ at the vertex $v$ violates the minimality. 
    This implies that all the neighbours of $v$ are occupied by robots with the color \texttt{FINISH}.
    It is evident that the vertex $v$ cannot be the door vertex, as a robot terminates at the door when it finds at least one unoccupied neighbour of the door.
    Therefore, $r$ must have completed at least one movement before reaching $v$. Let $v_1$ be the previous position of $r$, which is also the neighbour of $v$.
    When $r$ is the head of the chain and decides to terminate, its successor must be at most two hops away from the current position $v$ along the direction of the vertex $v_1$. 
    By Remark \ref{remark:chain_of_robots_tree} the successor robot of $r$, say $r_1$, must terminate after the robot $r$.
    We first deal with the case where $v_1$ is the door.
    In this case, $r.succ=(p, \bot)$, where $p$ is the port of the edge $e(v, v_1)$ incident to $v$.
    When $r$ is at the door $(v_1)$ and decides to reach $v$, it must observe an unoccupied neighbour of $v$, say $v_2$.
    Since $r$ decides to terminate after reaching $v$, all neighbours of $v_2$ (excluding $v$) are occupied by \texttt{FINISH}-colored robots.
    Hence, once $r$ turns to \texttt{FINISH}, all the neighbours of $v_2$ are becoming occupied with \texttt{FINISH}-colored robots, and no robot can subsequently reach $v_2$, a contradiction.

    We now discuss the case where $v_1$, the previous position of $r$ before it reaches $v$, is not the door vertex.
    There can be two sub-cases depending on the cause of movement of $r$ from $v_1$. 
    (Case-I) $r$ being the head robot moves to $v$ from $v_1$ along its eligible port and terminates. (Case-II) $r$ being the head robot backtracks towards its predecessor and then terminates.
    We first establish the contradiction for the Case-I.
    Consider the time $t_1$, when $r$, situated at $v_1$, decides to move to $v$.
    If all the neighbours of $v$ are occupied at $t_1$, the port $p_1$ of the edge $e(v_1, v)$ incident to $v_1$ would be ineligible for the robot $r$.
    As a consequence, $r$ would not reach $v$.
    Therefore, there exists at least one unoccupied neighbour of $v$, say $v_2$, at the time $t_1$. 
    Since $r$ terminates without backtracking, all ports of $v$ are ineligible, meaning all neighbours of $v_2$ are occupied by \texttt{FINISH}-colored robots at termination.
    But then no robot can ever reach $v_2$, a contradiction.
    Now we present the contradiction for Case-II, where $r$ terminates after the backtracks from $v_1$.
    It means that all other port $p'$ incident to $v_1$, except $p_1$, are ineligible either because there is an active robot located two or three hops away from $v_1$ along $p'$, or because the one-hop neighbour of $v_1$ along $p'$ are occupied by \texttt{FINISH}-colored robot.
    Consequently, if a robot arrives at $v_1$ after the time $t_1$, it must have come from a neighbour along which active robots were present either two or three hops away from $v_1$ at the time $t_1$. 
    These positions are symmetric to vertex $v$ from the perspective of $v_1$, any robot arriving at $v_1$ from these positions would simply backtrack to its previous location.
    Thus, after $t_1$, no robot ever settles at $v_1$, contradicting the assumption that $v$ violates the minimality.
    Hence, the proof. \qed
\end{proof}

\subsection{Algorithm (\textsc{\textmd{Graph\_MultiDoor}}) for Multiple Doors}

\label{subsec:graph-multidoor}

In this algorithm, the definitions of $r.pred$, $r.succ$, $r.avoid$ and \textit{head} remain same as in Section \ref{subsec:graph-singledoor}.
The algorithm uses 4 hop visibility and $4H+1$ colors: \{\texttt{OFF}$_h$, \texttt{move?}$_h$, \texttt{STORED}$_h$, \texttt{WAIT}$_h$ and \texttt{FINISH}\}, for $1 \leq h \leq H$.
Initially, each robot $r$ on the $h$-th lowest ID door ($door_h$) is with color \texttt{OFF}$_h$ and $r.pred = r.succ = (\bot, \bot), r.avoid = 0$, $1\leq h \leq H$.
Here, we modify the definition of eligible port.

\begin{definition}{\textbf{(Eligible Port from $v$)}}
A port $p_v$ is called an eligible port for $r$, positioned at $v$ with $r.color \in  \{\texttt{OFF}_h, \texttt{move?}_h, \texttt{STORED}_h, \texttt{WAIT}_h\}$, if the following conditions hold.
(i) $p_v \neq p'_1$, where $r.succ = (p'_1, p'_2)$.
(ii) There exists at least one unoccupied vertex $v_{nbr}^2$ along $p_v$.
(iii) No 3-hop neighbour of $v$ along port $p_v$ is occupied by a robot $r'$ with $r'.\texttt{color} \in \{\texttt{OFF}_h, \texttt{STORED}_h, \texttt{WAIT}_h\}$.
(iv) $r$ does not find two robots $r'$ at $v'$ and $r''$ at $v''$ such that $v_{nbr}^1$ or $v_{nbr}^2$ along the port $p_v$ lies on the path between $v'$ and $v''$, where $r'.color = \texttt{col}_{h'}$ and $r''.color =$ \texttt{col}$'_{h'}$, for some \texttt{col}, \texttt{col}$'$ $\in \{\texttt{OFF}, \texttt{move?}, \texttt{STORED} \}$, and $h' \neq h$. 
    
    
\end{definition}

\noindent We distinguish the cases based on the position and current color of $r$.
Several actions align with those in \textsc{Graph\_SingleDoor}, with minor adjustments to color transitions, specifically, each color \texttt{col} in Section~\ref{subsec:graph-singledoor} corresponds here to $\texttt{col}_h$, where $h$ denotes the ID of the door through which the robot entered.

\noindent $\blacktriangleright$~ \textbf{Case 1} ($r.color=$ \texttt{OFF}$_h$ and $r$ is on $door_h$):
If $\exists p_v ~ occupied(v_{nbr}^1, p_v, \neq \texttt{FINISH})$ $\lor$ $\exists p_v \exists v_{nbr}^2~ occupied(v_{nbr}^2, p_v, \texttt{OFF}_h)$ holds, $r$ follows the action as mentioned in Case 1, Section \ref{subsec:graph-singledoor}.
Otherwise if $\exists p_v \exists v_{nbr}^2~ \neg occupied(v_{nbr}^2, p_v, \sim)$, $r$ finds $p_{v}^{min}$ and proceeds with the following steps sequentially.
Let $u=v_{nbr}^1$ along $p_{v}^{min}$. 
Let $p_u$ be the smallest port incident to $u$ such that the 1-hop neighbour along it is unoccupied.
Let $p'$ be the port of the edge $e(u, v)$.

    \begin{itemize}
        \item[C1] \textsc{If} $\exists u_{nbr}^3~ occupied(u_{nbr}^3, p_u, \texttt{col}_{h'}^{i'})$ with $h > h'$ holds, $r$ remains in place.

        \item[C2] \textsc{ElseIf} $\exists u_{nbr}^1  occupied(u_{nbr}^1, p_u, \texttt{col}_{h'})$ for $h>h'$ and \texttt{col} $\in \{$\texttt{OFF}, \texttt{move?}, \texttt{STORED}, \texttt{WAIT} $\}$, $r$ does nothing.

        \item[C3] \textsc{ElseIf} $\neg exists(p_u, 2)~ \lor ~\forall u_{nbr}^2 ~ (\neg occupied(u_{nbr}^2, p_u, \sim) \lor occupied(u_{nbr}^2, p_u,$ $ \texttt{FINISH}))$, $r$ moves to $u$ with the current color after updating $r.succ = (p', \bot)$.

        \item[C4] \textsc{ElseIf} $\exists u_{nbr}^2 \neg occupied(u_{nbr}^2, p_u, \sim) \land \exists u_{nbr}^2  occupied $  $(u_{nbr}^2, p_u, \texttt{col}_{h'})$ with $\texttt{col}$ $ \in \{\texttt{OFF}, \texttt{move?}, \texttt{WAIT}\}$ and $h$ is smallest among all such $h'$, $r$ stays put.

        \item[C5] \textsc{ElseIf} $\exists u_{nbr}^2~ \neg occupied(u_{nbr}^2, p_u, \sim) \land \exists u_{nbr}^2 ~ occupied$ $(u_{nbr}^2, p_u, \texttt{OFF}_{h'})$ such that $h$ is greater than the minimum among all such $h'$, $r$ switches to \texttt{move?}$_h$.

        \item[C6] \textsc{ElseIf} $\forall v_{nbr}^2 (\neg occupied(v_{nbr}^2, p_v^{min}, \sim) \lor (occupied$ $(v_{nbr}^2, $ $ p_v^{min}, \texttt{OFF}_{h'}) \land (h < h')))$, $r$ moves to $u$ with its current color after updating $r.succ = (p', \bot)$.

        \item[C7] \textsc{ElseIf} $p_v^{min}$ does not exist and $\exists p_v \neg occupied(v^1_{nbr}, p_v, \sim)$, $r$ turns to \texttt{FINISH}.

    \end{itemize}

    \noindent $\blacktriangleright$~ \textbf{Case 2} ($r.color =$ \texttt{OFF}$_h$ or $\texttt{WAIT}_h$ and $r$ is not on any door): We divide this case into the following sub-cases, based on whether $r$ is a head robot or not. 

    \begin{itemize}[left=0pt]
        \item \textbf{Case 2.1} ($r.pred = (\bot, \bot)$): 
        When $r.succ = (p'_1, \bot)$ for some port $p'_1$, i.e., $r$ is one hop away from the door, it proceeds as described in Case 2.1, Section \ref{subsec:graph-singledoor}.
        When $r.succ = (p'_1, p'_2)$, some adjustments are made to prevent collisions with robots from the other door.
        If $r.color = \texttt{WAIT}_h$ and $r.avoid = 0$, it follows Case 2.2, Section \ref{subsec:graph-singledoor}.
        If $r.color = \texttt{OFF}_h$, $r$ considers the vertex $w=v_{nbr}^1$ along $p'_1$ and waits until $occupied(w_{nbr}^1, p_2', \texttt{STORED})$ is satisfied. Once true, it computes $p_v^{min}$. 
        If $r.color = \texttt{WAIT}_h$ and $r.avoid = p^* > 0$, $r$ considers all the ports $p>p^*$ and selects the minimum eligible port $p_v^{min}$ among them.
        In the last two cases, if no $p_v^{min}$ exists, $r$ switches to \texttt{FINISH}; otherwise, $r$ identifies $u, p_u$ and $p'$ as defined in Case 1. If $\exists u_{nbr}^1 occupied(u_{nbr}^1, p_u, \texttt{move?}_{h'})$, $r$ waits until $u^1_{nbr}$ changes to \texttt{OFF}$_{h'}$. \textsc{ElseIf}, $r$ follows the steps C1, C2, C4 and C5 sequentially. 
        \textsc{Elseif} $\forall v_{nbr}^2 (\neg occupied(v_{nbr}^2, p_v^{min}, \sim) \lor (occupied $ $(v_{nbr}^2, p_v^{min}, \texttt{OFF}_{h'}) \land (h < h')))$, $r$ moves to $u$ with \texttt{WAIT}$_h$ after updating $r.avoid = 0$ and $r.succ = (p', p'_1)$ or $(p', p'_2)$ depending on the current color of $r$ is $\texttt{OFF}_h$ or $\texttt{WAIT}_h$.
        
        \end{itemize}

        

        




    \begin{itemize}[left=0pt]
        \item \textbf{Case 2.2} ($r.pred = (p_1, \bot)$):
        Let $u = v_{nbr}^1$ along the port $p_1$. 
        $r$ first finds $r'$, a 2-hops neighbour of $v$ along $p_1$ and with $r'.color \in \{\texttt{OFF}_h, \texttt{STORED}_h, \texttt{move?}_h,$ $ \texttt{WAIT}_h\}$.
        Let $p_2$ be the port of the edge $e(u, r')$, incident to $u$.
        $r$ updates $r.pred = (p_1, p_2)$ and changes its color to \texttt{STORED}$_h$ without any movement.
    \end{itemize}
   
    \noindent $\blacktriangleright$~ \textbf{Case 3} ($r.color=$ \texttt{STORED}$_h$):
    The action mirrors Case 3 of Subsection \ref{subsec:graph-singledoor}, where $r$ either moves toward its predecessor or becomes the new head.

    \noindent $\blacktriangleright$~ \textbf{Case 4} ($r.color =$ \texttt{move?}$_h$):
    Let $u = v_{nbr}^1$ along the port $p_v^{min}$. 
    If all neighbours of $u$, except $v$, are unoccupied or occupied by robots with colors in $\{\texttt{OFF}_{h'}, \texttt{move?}_{h'}, \texttt{STORED}_{h'}, \texttt{WAIT}_{h'}, \texttt{FINISH}$\} for $h' > h$, it moves to $u$ with the color \texttt{WAIT}$_h$. 
    However, if $r$ finds $u_{nbr}^1$ as occupied with \texttt{OFF}$_{h'}$-colored robot with $h' < h$, it switches to \texttt{OFF}$_h$ without movement. 
    Otherwise, $r$ does nothing.

The termination condition remains the same as earlier.

\vspace{2em}

\noindent \textbf{Analysis of \textsc{\textmd{Graph\_MultiDoor}} Algorithm:} Our detailed analysis establishes the following theorem.

\begin{theorem}%
\label{theorem:Graph_MultiDoor}
    Algorithm \textsc{Graph\_MultiDoor} fills MVC of an arbitrary graph $G$ having $H$ doors with \texttt{FINISH}-colored robots having $4$ hops visibility, $O(H)$ colors and $O(\log \Delta)$ memory in $O(|E|)$ epochs under $\mathcal{ASYNC}$ and without collision. 
\end{theorem}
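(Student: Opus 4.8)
The plan is to prove Theorem~\ref{theorem:Graph_MultiDoor} by composing the two techniques already developed: the predecessor–successor chain machinery of \textsc{Graph\_SingleDoor} (Theorem~\ref{thm:asyncGenSS}) applied \emph{per door}, and the door-hierarchy plus \texttt{move?} handshake of \textsc{Tree\_MultiDoor} (Theorem~\ref{thm:coll_tree_MS}) applied \emph{across doors}. First I would establish that, restricted to robots entering through a fixed $door_h$ (those with color in $\{\texttt{OFF}_h, \texttt{move?}_h, \texttt{STORED}_h, \texttt{WAIT}_h\}$), the predecessor–successor bookkeeping stays consistent: I would re-run the contradiction arguments of Lemma~\ref{lem:unique_pred} and Lemma~\ref{lem:unique-succ} essentially verbatim, the only new ingredient being condition~(iv) of the revised eligible-port definition, which forbids a head from stepping onto a vertex lying on the path between two robots of a \emph{different} door. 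This guarantees the $h$-indexed chains never interleave, so each door's active robots lie on a single path (the analogue of Corollary~\ref{cor:forming-chain-graph}), every three consecutive chain vertices hold a robot (Corollary~\ref{cor:chain-prpoerty}), and consecutive chain robots stay $2$–$3$ hops apart (Lemma~\ref{lem:min_distance_between_two_consec_robots}). I would carry Lemma~\ref{lem:head_never_visit-very_oldVertex} and Corollary~\ref{cor:never-visit-old-vertex-graph} over unchanged, since the backtracking/\texttt{avoid} logic is per-robot.

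Next I would treat collision-freeness in two regimes. Intra-door collisions are ruled out exactly as in Remark~\ref{rem:collision-free} together with the chain invariants above. Inter-door collisions are the substantive case, and here I would mirror the case analysis in the proof of Theorem~\ref{thm:coll_tree_MS}: assuming robots $r_h$ and $r_{h'}$ with $h>h'$ collide at a vertex $z$, I examine the configuration at the instant $r_h$ commits to $z$. The hierarchy ``lower ID dominates'' (steps C1, C2, C4) eliminates the case where $r_{h'}$ is already within reach, while the \texttt{move?}$_h$ handshake (Case~4 and step~C5) handles the asynchronous case where $r_{h'}$ enters $r_h$'s view only after $r_h$ has targeted $z$: $r_h$ must first announce intent via \texttt{move?}$_h$ and await confirmation, so $r_{h'}$ either blocks it or has already vacated. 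The genuinely new wrinkle relative to the tree proof is that $r_{h'}$ may itself be performing a \emph{backtracking} move, so motion along a competing chain is two-way; I would use the increased $4$-hop visibility to ensure $r_h$ detects the relevant $3$-hop neighbour of the competing chain before committing, precisely as conditions~(iii) and~(iv) require.

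For the running time I would combine the amortized backtracking bound from the proof of Theorem~\ref{thm:asyncGenSS} with the ``constant inter-door delay'' bound from Lemma~\ref{lemma:time-complexity-tree-multidoor}. Concretely, I keep the refined notion of an \emph{explored} edge (an endpoint visited for the first time and not subsequently backtracked from) and argue a head still explores $k$ new edges in $O(k)$ epochs: the $O(\Delta')$ backtracking overhead at a degree-$\Delta'$ frontier vertex is amortized against the $\Delta'$ edges first explored there, giving average $\le 4$ epochs per edge, and any wait induced by a competing chain costs at most $2$ extra epochs per contested step, as in Lemma~\ref{lemma:time-complexity-tree-multidoor}. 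Summing $\sum_i O(k_i)=O(|E|)$ over all heads across all $H$ chains yields the $O(|E|)$ bound, matching the $\Omega(|E|-H)$ lower bound (tight for constant $H$). Feasibility and minimality then transfer from the proof of Theorem~\ref{thm:asyncGenSS}: the recursive ``last robot to terminate near an uncovered edge'' argument is colorblind to which door covered a vertex, and the Case-I/Case-II minimality split (forward termination vs.\ termination after backtracking) reuses that the witnessing unoccupied $2$-hop neighbour is guarded against re-entry by a \texttt{FINISH} robot of \emph{any} door. The color count is $4H+1=O(H)$ and the storage of $r.pred, r.succ, r.avoid$ is $O(\log\Delta)$.

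The main obstacle I anticipate is exactly the inter-door collision case when the competing robot is backtracking: unlike the tree setting, a chain now supports motion in both directions, so I must verify that the \texttt{move?} handshake together with $4$-hop visibility still linearizes the two robots' movement decisions and never leaves both simultaneously committed to the same target vertex. I expect this to require a careful enumeration of the relative phases of $r_h$ and $r_{h'}$ (advancing, waiting, and backtracking) that is absent from both precursor proofs.
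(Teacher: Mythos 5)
Your overall decomposition is the same as the paper's: the proof in the paper also proceeds in four parts (collision-freeness, time/memory, feasibility, minimality), handles intra-door collisions by restating the per-door chain invariants (Lemma~\ref{lem:head_never_visit-very_oldVertex}, Corollaries~\ref{cor:never-visit-old-vertex-graph}, \ref{cor:chain-prpoerty}, Lemma~\ref{lem:min_distance_between_two_consec_robots}), handles inter-door collisions by the eligible-port definition together with the hierarchy/\texttt{move?} case analysis of Theorem~\ref{thm:coll_tree_MS}, and obtains $O(|E|)$ epochs by combining the amortized ``explored edge'' accounting of Theorem~\ref{thm:asyncGenSS} with the two-epoch inter-chain deferral bound of Lemma~\ref{lemma:time-complexity-tree-multidoor}. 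Your color count ($4H+1$) and memory count ($O(\log\Delta)$ for $r.pred$, $r.succ$, $r.avoid$) also match. Up to this point the proposal is faithful to the paper.

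The genuine gap is your claim that feasibility and minimality ``transfer'' from Theorem~\ref{thm:asyncGenSS}. They do not, and this is precisely where the paper's proof does most of its new work. In the multi-door eligible-port definition, conditions (iii) and (iv) make a port ineligible not because the edges along it are covered, but because of the \emph{presence of another chain} (or an active same-door robot three hops away). Consequently a robot can legitimately set its color to \texttt{FINISH} while an uncovered edge sits within its view, and the single-door contradiction --- ``$r$ marks the port ineligible and terminates despite seeing an unoccupied $2$-hop neighbour, contradicting its own termination condition'' --- simply breaks: there is no contradiction, because ineligibility was caused by a foreign chain, not by coverage. One must then prove a liveness statement: whenever a robot defers coverage of $e(u,v)$ to another chain, the head of that other chain eventually reaches and settles on $u$ or $v$. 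The paper's feasibility proof does exactly this, splitting into the case of a same-door active robot at a $3$-hop neighbour versus two robots of a different door straddling $u$ or $v$, and arguing in each case which chain's head covers the edge; in particular it must rule out mutual deferral between chains. The minimality argument has the same structure: the paper enumerates three reasons why port $p_1$ could have been ineligible when $r$ terminated at $v$ (the neighbour $v_1$ lies on a foreign chain, a neighbour of $v_1$ lies on a foreign chain, or a same-door active robot sits three hops away) and shows in each case that no robot $r'$ --- from either door --- can later settle at $v_1$. Your phrase that the argument is ``colorblind to which door covered a vertex'' misses the point: the difficulty is not attribution of coverage but deferral of coverage, and without the explicit inter-chain case analysis the proof of both feasibility and minimality is incomplete.
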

\begin{proof}
    We prove the theorem in four parts: collision-freeness, time and memory complexity, feasibility of the solution, and the minimality of the solution returned by the Algorithm \textsc{Graph\_MultiDoor}.

    \textit{Collision-free:}
    We have a total of $H$ chains, one for each door. 
    For each of the chains starting from a door, we can restate the following lemma as established in Section \ref{subsec:tree-singledoor}.
    Lemma \ref{lem:head_never_visit-very_oldVertex} confirms that the head of a chain, that starts from a door, never revisits any previously visited vertex, except its immediate previous position.
    Corollary \ref{cor:never-visit-old-vertex-graph} ensures that robots on a chain other than the head do not revisit the previously visited vertex. 
    Corollary \ref{cor:chain-prpoerty} and Lemma \ref{lem:min_distance_between_two_consec_robots} guarantee that the distance between two consecutive robots on a chain is at least two and at most three.
    From all these results, it is evident that the two robots from the same chain do not collide with each other.    
    Additionally, two robots from the different chain do not collide from the definition of eligible port and Theorem \ref{thm:coll_tree_MS}.
    
    \textit{Time and Memory Complexity:}
    In this analysis, we retain the definition of the eligible edge as given in the earlier Theorem \ref{theorem:tree-singledoor}, where we an edge is considered explored if one of the endpoints is visited by the head of a chain for the first time and does not subsequently backtrack from it.
    A head of a chain explores edges incrementally until it encounters another chain.
    Lemma \ref{lem:time_asyncSS} and Theorem \ref{theorem:tree-singledoor} show that until the time of such an encounter, the average number of epochs required per explored edge remains constant.
    When two chains meet with each other, we have a different observation as presented in Lemma \ref{lemma:time-complexity-tree-multidoor}.
    Let $r_h$ and $r_{h'}$ be two robots, originating from the $h$-th and $h'$-th lowest ID door, respectively.
    Without loss of generality, we assume that $r_h$ has encountered $r_{h'}$, a robot different from its own chain, along the minimum eligible port.
    Then, by a similar argument as established in Lemma \ref{lemma:time-complexity-tree-multidoor}, either $r_h$ or $r_{h'}$ proceed further for the exploration (or covering) process in at most two epochs. 
    This concludes that, at any time, there must be a chain which explores the graph at a constant average epoch per edge exploration, even if the other chain remains halt at that time.
    Consequently, in the worst case, we need to explore $|E|$ many edges and thus $O(|E|)$ epochs to reach the termination.
    The algorithm uses $(4H+1) = O(H)$ colors listed as: $\{\texttt{OFF}_h, \texttt{STORED}_h, \texttt{move?}_h, \texttt{WAIT}_h, \texttt{FINISH}\}$.
    Furthermore, each robot $r$ stores at most $5$ port values incident to its current position or one of its neighbours: two for $r.pred$, two for $r.succ$, and one for $r.avoid$. Thus, each robot $r$ requires $5 \log \Delta = O(\log \Delta)$ bits of memory, in addition to its fixed color set.
    
    \textit{Feasibility:}
    We prove the feasibility of the solution by contradiction.
    On the contrary, we assume that there exists an uncovered edge $e = (u, v)$, i.e., both endpoints of $e$ are unoccupied at termination.
    By our assumptions, both $u$ and $v$ can not be the door vertices.
    We first consider the case where one of the endpoints of $e$ (say $u$) is adjacent to a door, while the other endpoint $(v)$ is not.
    \begin{itemize}
        \item If $v$ has an unoccupied neighbour other than $u$, then a robot must leave the door vertex $u$ and move to $v$.
        \item If all neighbours of $v$ are occupied with \texttt{FINISH}-colored robots, then the robot on $u$ must eventually turn its color to \texttt{FINISH} after finding that all the ports incident to $u$ are ineligible, but there exists an unoccupied neighbour $v$.
    \end{itemize}
    
    In either case, a contradiction arises.
    Now we consider the case where neither $u$ nor $v$ is a door vertex. 
    Without loss of generality, suppose that $u$ is closer to one of the doors than $v$.
    If all the neighbours of $u$ are unoccupied, then we consider a neighbour $w'$ of $u$ that is closer to a door than $u$.
    If $w'$ is a door vertex, then the previous case applies, which leads to a contradiction.
    So $w'$ must not be a door vertex.
    In this scenario, we treat $e=e(w', u)$ as the uncovered edge instead of $e(u, v)$ and check whether one of the neighbours of $w'$ is occupied by a \texttt{FINISH}-colored robot.
    If not, we again replace $(w', u)$ by an edge incident to $w'$ that is closer to one of the doors than $w$.
    We repeat this process until we find that one of the endpoints of $e$ is a door or one of the neighbours of the endpoints of $e$ is occupied by a \texttt{FINISH} robot.
    If we find that one of the endpoints of $e$ is a door, then we are done.
    Therefore, we assume that one of the neighbours of $u$, say $u_{nbr}^1$, is occupied by a \texttt{FINISH}-colored robot $r$.
    Consider the time $t$, when $r$ lying at $u_{nbr}^1$ decides to terminate.
    At $t$, both $u$ and $v$ must be unoccupied.
    Despite seeing an unoccupied 2-hop neighbour $v$, $r$ terminates at $u_{nbr}^1$, which implies that one of the following conditions must hold:
    \begin{itemize}
        \item (i) $r$ finds an active robot $r'$ at $v'$ that is a $3$ hop neighbour of $u_{nbr}^1$ in the direction of $u$, and both $r$ and $r'$ are from the same door.
        \item (ii) $r$ finds two robots $r'$ at $v'$ and $r''$ at $v''$ such that $u$ or $v$ lies on the path between $v'$ and $v''$, and both $r'$ and $r''$ are from the same door $(door_{h'}$ but different door than $r$ $(door_h)$.
    \end{itemize}
    When the first case holds, using an argument similar to that in the \textit{feasibility} proof of Theorem~\ref{theorem:tree-singledoor}, we can show that the head of the chain originating from $door_h$ eventually reaches $v'$.
    From there, it becomes eligible to move to $u$ or $v$, thus covering the edge $e = (u, v)$.
    In the latter case, assume that $r''$ is the predecessor of $r'$. 
    Then, the head of the chain from $door_{h'}$ will eventually reach $v'$ and settle at $u$, covering the edge $e = (u, v)$. 
    Hence, in any of the cases mentioned above, the head of a chain must reach and settle either to $u$ or to $v$, which is a contradiction.

    \textit{Minimality:} We prove this property by contradiction.
    Let us assume that the robot $r$, lying on $v$, violates minimality.
    Then all the neighbours of $v$ must be occupied by \texttt{FINISH}-colored robots.
    Using a similar argument as presented in the \textit{minimality} proof of Theorem \ref{theorem:tree-singledoor}, $v$ can not be a door vertex.
    Therefore, $r$ must complete at least one movement and it can not be the last robot to set its color to \texttt{FINISH} after all its neighbours turn their color to \texttt{FINISH}.
    Hence, after $r$ terminates, a robot $r'$ must eventually reach at some neighbour $v_1$ of the vertex $v$.
    Let $p_1$ be the port of the edge $e(v, v_1)$ incident to $v$,
    and $v_2$ be the previous position of $r'$, before the robot $r'$ reaches $v_1$.
    When $r'$ is at $v_2$, and decides to move to $v_1$, it must see a neighbour of $v_2$, say $v_3$, as unoccupied.
    Then $v_3$ is a two hop neighbour of $v$ along the port $p_1$.
    If $r'$ is from the same door as of $r$, then $r'$ would never reach $v_1$, as argued in Theorem \ref{theorem:tree-singledoor}.
    Therefore, we assume that $r$ and $r'$ are from the different doors, $door_h$ and $door_{h'}$, respectively, where $h \neq h'$.
    Consider the time $t$, when $r$ terminates at $v$.
    At $t$, $r$ must find all its incident port as ineligible, including $p_1$.
    This implies that one of the following must hold:    \begin{itemize}
        \item (i) The vertex $v_1$ is part of the chain that originates from a door other than $door_{h}$.
        \item (ii) One of the neighbours of $v_1$ (assume that it is $v_3$) belongs to a different chain than the one to which $r$ belongs.
        \item (iii) there exists an active robot $r''$ at 3-hop neighbour along the port $p_1$ (i.e., at a 2-hop neighbour of $v_1$), such that both $r$ and $r''$ are from the same door $door_h$.
    \end{itemize}
    For the case (i) $r$ would never have moved to $v$ from its previous position upon detecting that $v_1$ is part of the different chain than its own.
    For the case (ii), the robot $r'$ would not move from $v_2$ to $v_1$ because $v_3$, one hop neighbour of $v_2$, lies on a different chain.
    For the case (iii) $r'$ would never target $v_1$ or reach $v_2$, as it finds $v_1$ as the part of the chain that originates from $door_h$ and $r'$ itself from the different door $(door_{h'})$.
    Hence, in either case, we have the contradiction.
    \qed
\end{proof}

\section{Concluding Remarks}

We are particularly interested in investigating a fundamental question in a barebone model: Can a group of mobile robots achieve a particular graph property with only local or limited knowledge?  
We affirmatively answered this question by considering the \emph{filling MVC problem} on an arbitrary graph with robots operated under the $\mathcal{ASYNC}$ scheduler and with 4 hops visibility.
The main highlight of this paper is to be able to achieve a deployment strategy that is time-optimal even under $\mathcal{ASYNC}$ scheduler. 
Interestingly, our strategy achieves MinVC for trees with the optimal time and memory. 
%
We believe a similar strategy can give time optimality in filling MIS  problem \cite{PRAMANICK2023114187}.
Our method improves efficiency by allowing alternate robots on the chain to move in constant epochs, thereby significantly reducing the overall time complexity. 
As a future direction, one can also investigate other graph properties in the same model.

\bibliographystyle{splncs04}
\bibliography{ALGOWIN}

\begin{thebibliography}{10}
\providecommand{\url}[1]{\texttt{#1}}
\providecommand{\urlprefix}{URL }
\providecommand{\doi}[1]{https://doi.org/#1}

\bibitem{10.1007/978-3-540-92862-1_11}
Barrameda, E.M., Das, S., Santoro, N.: Deployment of asynchronous robotic
  sensors in unknown orthogonal environments. In: Fekete, S.P. (ed.)
  Algorithmic Aspects of Wireless Sensor Networks ALGOSENSORS '08. pp.
  125--140. Springer Berlin Heidelberg, Berlin, Heidelberg (2008)

\bibitem{10.1007/978-3-642-45346-5_17}
Barrameda, E.M., Das, S., Santoro, N.: Uniform dispersal of asynchronous
  finite-state mobile robots in presence of holes. In: Flocchini, P., Gao, J.,
  Kranakis, E., Meyer auf~der Heide, F. (eds.) Algorithms for Sensor Systems,
  ALGOSENSORS '14. pp. 228--243. Springer Berlin Heidelberg, Berlin (2014)

\bibitem{barriere2011uniform}
Barriere, L., Flocchini, P., Mesa-Barrameda, E., Santoro, N.: Uniform
  scattering of autonomous mobile robots in a grid. International Journal of
  Foundations of Computer Science  \textbf{22}(03),  679--697 (2011)

\bibitem{10.1007/978-3-031-48882-5_10}
Chand, P.K., Molla, A.R., Sivasubramaniam, S.: Run for cover: Dominating set
  via mobile agents. In: Georgiou, K., Kranakis, E. (eds.) Algorithmics of
  Wireless Networks, ALGOWIN '23. pp. 133--150. Springer Nature Switzerland,
  Cham (2023)

\bibitem{ELOR2011783}
Elor, Y., Bruckstein, A.M.: Uniform multi-agent deployment on a ring.
  Theoretical Computer Science  \textbf{412}(8),  783--795 (2011)

\bibitem{10.1007/3-540-46632-0_10}
Flocchini, P., Prencipe, G., Santoro, N., Widmayer, P.: Hard tasks for weak
  robots: The role of common knowledge in pattern formation by autonomous
  mobile robots. In: Algorithms and Computation. pp. 93--102. Springer Berlin
  Heidelberg, Berlin, Heidelberg (1999)

\bibitem{1200625}
Heo, N., Varshney, P.: A distributed self spreading algorithm for mobile
  wireless sensor networks. In: 2003 IEEE Wireless Communications and
  Networking, 2003. WCNC 2003. vol.~3, pp. 1597--1602 vol.3 (2003)

\bibitem{hideg2017uniform}
Hideg, A., Lukovszki, T.: Uniform dispersal of robots with minimum visibility
  range. In: International Symposium on Algorithms and Experiments for Sensor
  Systems, Wireless Networks and Distributed Robotics, ALGOSENSORS '17. pp.
  155--167. Springer (2017)

\bibitem{hideg2020asynchronous}
Hideg, A., Lukovszki, T.: Asynchronous filling by myopic luminous robots. In:
  Algorithms for Sensor Systems: 16th International Symposium on Algorithms and
  Experiments for Wireless Sensor Networks, ALGOSENSORS '20, Pisa, Italy,
  September 9–10, 2020, Revised Selected Papers. p. 108–123.
  Springer-Verlag, Berlin, Heidelberg (2020)

\bibitem{10.1023/A:1019625207705}
Howard, A., Matari\'{c}, M.J., Sukhatme, G.S.: An incremental self-deployment
  algorithm for mobile sensor networks. Auton. Robots  \textbf{13}(2),
  113–126 (Sep 2002)

\bibitem{Hsiang2004}
Hsiang, T.R., Arkin, E.M., Bender, M.A., Fekete, S.P., Mitchell, J.S.B.:
  Algorithms for Rapidly Dispersing Robot Swarms in Unknown Environments, pp.
  77--93. Springer Berlin Heidelberg, Berlin, Heidelberg (2004).
  \doi{10.1007/978-3-540-45058-0\_6}

\bibitem{kamei2020asynchronous}
Kamei, S., Tixeuil, S.: {An Asynchronous Maximum Independent Set Algorithm By
  Myopic Luminous Robots On Grids}. The Computer Journal  (11 2022).
  \doi{10.1093/comjnl/bxac158}, bxac158

\bibitem{Karp1972}
Karp, R.M.: Reducibility among Combinatorial Problems, pp. 85--103. Springer
  US, Boston, MA (1972). \doi{10.1007/978-1-4684-2001-2\_9}

\bibitem{10.1145/3631461.3631543}
Pattanayak, D., Bhagat, S., Gan~Chaudhuri, S., Molla, A.R.: Maximal independent
  set via mobile agents. In: Proceedings of the 25th International Conference
  on Distributed Computing and Networking. p. 74–83. \textit{ICDCN '24},
  Association for Computing Machinery, New York, NY, USA (2024).
  \doi{10.1145/3631461.3631543}

\bibitem{1307146}
Poduri, S., Sukhatme, G.: Constrained coverage for mobile sensor networks. In:
  IEEE International Conference on Robotics and Automation, 2004. Proceedings.
  ICRA '04. 2004. vol.~1, pp. 165--171 Vol.1 (2004)

\bibitem{PRAMANICK2023114187}
Pramanick, S., Samala, S.V., Pattanayak, D., Mandal, P.S.: Distributed
  algorithms for filling mis vertices of an arbitrary graph by myopic luminous
  robots. Theoretical Computer Science  \textbf{978},  114187 (2023).
  \doi{https://doi.org/10.1016/j.tcs.2023.114187}

\end{thebibliography}

\end{document}